\newcommand{\leak}{\text{leak}_{\varepsilon_\EC}}
\newcommand{\prot}{Discrete-phase-randomized BB84 }
\newcommand{\QdF}{\text{QdF}}
\newcommand{\PE}{\text{PE}}
\newcommand{\PA}{\text{PA}}
\newcommand{\EC}{\text{EC}}
\newcommand{\Lin}{\mathrm{L}}
\newcommand{\Pos}{\mathrm{Pos}}
\newcommand{\Herm}{\mathrm{Herm}}
\DeclareRobustCommand{\bbone}{\text{\usefont{U}{bbold}{m}{n}1}}
\newtheorem{theorem}{Theorem}
\newtheorem{lemma}[theorem]{Lemma}
\begin{document}



\title{Numerical Calculations of Finite Key Rate for General Quantum Key Distribution Protocols }
\author{Ian George}
\author{Jie Lin}
\author{Norbert L\"utkenhaus}
\affiliation{Institute for Quantum Computing and Department of Physics and Astronomy, \\
University of Waterloo, Waterloo, Ontario, Canada N2L 3G1\\}

\date{\today}

\begin{abstract}
	 Finite key analysis of quantum key distribution (QKD) is an important tool for any QKD implementation. While much work has been done on the framework of finite key analysis, the application to individual protocols often relies on the the specific protocol being simple or highly symmetric as well as represented in small finite-dimensional Hilbert spaces. In this work, we extend our pre-existing reliable, efficient, tight, and generic numerical method for calculating the asymptotic key rate of device-dependent QKD protocols in finite-dimensional Hilbert spaces to the finite key regime using the security analysis framework of Renner. We explain how this extension preserves the reliability, efficiency, and tightness of the asymptotic method. We then explore examples which illustrate both the generality of our method as well as the importance of parameter estimation and data processing within the framework.
\end{abstract}

\pacs{}
\maketitle

\begin{section}{Introduction}

 As large scale quantum computers become an actuality, we need to change our cryptographic infrastructure to be safe against attacks which involve adversaries who have such computers at their disposal \cite{Mosca2018}. One of the cryptographic tools for this change in infrastructure is quantum key distribution (QKD), the security of which will not be threatened by future technological or algorithmic developments \cite{paterson2004,stebila2010,colbeck2011,Alleaume2014}.  See Ref. \cite{scarani09a} for a review of QKD and Refs. \cite{xu19, pirandola19} for the recent progress.  
  
A main task of the security analysis is to calculate the secret key rates that can be securely achieved with a given protocol. In analyzing QKD protocols, security proofs are often done first in the asymptotic regime, that is, in the limit of an infinite amount of quantum signals being exchanged between a sender and a receiver (traditionally known as Alice and Bob). However, in any realistic implementation of a QKD protocol, Alice and Bob can only have a finite amount of data for characterizing their channel and for performing classical post-processing. It is of practical relevance to prove composable security in the finite regime \cite{renner05} so that the key generated by QKD with properly evaluated security parameters can be used safely in other cryptographic applications such as encryption using one-time pad. Toward this goal, several protocols \cite{scarani2008,scarani08b,cai2009, curty2014, lim2014, Mizutani2015, Zhang2017, Wang2018} have been proved to be secure in the finite regime using the $\varepsilon$-security framework expounded in \cite{renner05,scarani08b}.

However, analytical methods for calculating the secret key rates are highly technical in both asymptotic and finite regimes, and they are often restricted to certain protocols with symmetry. To aid the study of more QKD protocols (especially those without symmetry) and also to study side-channel imperfections of protocols, numerical methods \cite{coles2016, winick2018, primaatmaja19, tan19} based on convex optimization and specifically semidefinite program (SDP) have been developed. In particular, numerical methods in Ref. \cite{winick2018} provide tight and reliable key rates for general finite-dimensional QKD protocols. Nevertheless, all these methods are currently restricted to the asymptotic regime. Thus, it is important to extend numerical methods to finite regime in order to preserve the advantages of numerical methods. 

In this work, we extend the numerical asymptotic key rate calculation method in Ref. \cite{winick2018} to the finite regime. For the finite key analysis, we adopt Renner's framework \cite{renner05}. Our method retains advantages of the previous numerical method \cite{winick2018}; that is, it provides a reliable lower bound on the key rate for general finite-dimensional QKD and the key rate is tight within the framework \cite{renner05}. Unlike other works \cite{cai2009,bratzik2011,bunandar2019}, our method does not make an approximation that leads to a loose bound in the parameter estimation subprotocol for certain cases. Specifically, our method remains tight when the positive-operator valued measure (POVM) used in the protocol has more than two outcomes. This makes our solver applicable for general QKD protocols. Furthermore, we show that, without changing the security parameter of the parameter estimation step, one can decrease the set of states over which one must minimize the key rate in many practical cases. We implement this improvement to the analysis of parameter estimation in our numerical method. Our numerical method also can calculate the finite key rate for protocols that accept a set of observed statistics in the parameter estimation subprotocol. This presents an opportunity that is commonly overlooked, though it is of practical relevance for actual implementations. These results differ even from a recent numerical approach to finite key analysis \cite{bunandar2019}, which was designed only for these protocols which can only achieve tight key rate for QKD protocols which use a single two-outcome POVM in parameter estimation and accept on a single observed frequency distribution, which is a restrictive case. In summary, we improve the analysis of the parameter estimation subprotocol in finite key analysis and present a reliable generic numerical method for calculating the finite key rate of QKD protocols represented in finite Hilbert spaces for the first time. 

This paper is organized as follows. In Sec. \ref{sec:background}, we review background related to finite key analysis including a review of the finite key analysis framework from Ref. \cite{renner05}. We then discuss our extension of the numerical method from Ref. \cite{winick2018} to the finite regime in Sec. \ref{sec:numMethods}. To exemplify the key ideas in our finite key analysis, we apply our method to analyze different variations of Bennett-Brassard 1984 (BB84) \cite{bennett84a} protocol including the single-photon prepare-and-measure \cite{bennett84a}, measurement-device-independent (MDI) \cite{lo2012} and discrete-phase-randomized \cite{cao2015} variants in Sec. \ref{sec:examples}. Finally we make concluding remarks in Sec. \ref{sec:conclusion}. We leave technical details in the appendices, including the derivations of the numerical method and certain improved terms in the bound on the key length.

\end{section}

\begin{section}{Background} \label{sec:background}
\subsection{General QKD Protocol in the Finite Regime}
We start by reviewing the $\varepsilon$-security framework of QKD \cite{renner05}. QKD is a cryptographic protocol for secret key distribution in which Alice and Bob establish a shared secret key by generating a pair of keys $S_{A}$ and $S_{B}$ such that the keys agree (correctness) and are completely unknown to an eavesdropper (secrecy). Neither of these properties can be achieved perfectly, so we instead talk of a QKD protocol which is  $\varepsilon = \varepsilon' + \varepsilon''$-secure as it is $\varepsilon'$-correct and $\varepsilon''$-secret where the $\varepsilon$'s quantify the amount the protocol deviates from the ideal property. A QKD protocol is $\varepsilon$-secure if a distinguisher, which is either given the real or the ideal protocol as a block box to test, can guess correctly which protocol it was given with probability at most $(1/2 + \varepsilon)$ \cite{Maurer2011,Portmann2014}. Formally, a QKD protocol is $\varepsilon$-secure if 
\begin{equation*}
    \frac{1}{2} \| \rho_{S_{A}S_{B}E} - \pi_{S_{A}S_{B}} \otimes \rho_{E} \|_{1} \leq \varepsilon
\end{equation*}
where $\pi_{S_{A}S_{B}} = \sum_{s \in \mathcal{S}} \frac{1}{|\mathcal{S}|} \ket{s}\bra{s} \otimes \ket{s}\bra{s}$, $\mathcal{S}$ is the set of secret keys the protocol could generate, and $\| \cdot \|_{1}$ is the trace norm defined as $\|A\|_{1} = \Tr(\sqrt{A^{\dag}A})$. The output secret key of a $\varepsilon$-secure QKD protocol has composable security under the abstract cryptography framework \cite{Maurer2011, Portmann2014}.

 In an entanglement-based QKD protocol, Alice (or Eve) constructs an entangled state $\rho_{AB}$. Alice and Bob then measure their respective halves of $\rho_{AB}$. In the case that Alice prepares the state, we refer to the half of the state sent from Alice to Bob as a \textit{signal}. We note that the entanglement-based description of QKD we use in this section is without loss of generality as prepare-and-measure protocols are equivalent via the source-replacement scheme \cite{curty04a,ferenczi12a} as will be reviewed in Section \ref{sec:numMethods}. 
 
 When Alice sends signals to Bob, the eavesdropper, traditionally known as Eve, has the chance to perform her attack. There are two classes of attacks generally considered in security analysis— collective and coherent. In both cases one assumes Eve has an unbounded quantum memory, so she can store all her systems indefinitely. Collective attacks assume Eve uses a new ancillary system to interact with each signal sent by Alice as it is sent across the channel after which she can measure her ancillary systems collectively whenever she should choose (even after Alice and Bob have completed their protocol). Coherent attacks assume Eve interacts with all of the signals as one large state after which she can measure whenever she pleases. As Eve interacts with all of the signals as one large state, the signals may be entangled in some arbitrary manner. As coherent attacks is the most general form of attack permitted by quantum mechanics, one ultimately needs to prove security against coherent attacks.
 
 \begin{figure}[h]
    \centering
    \includegraphics[width=0.95 \columnwidth]{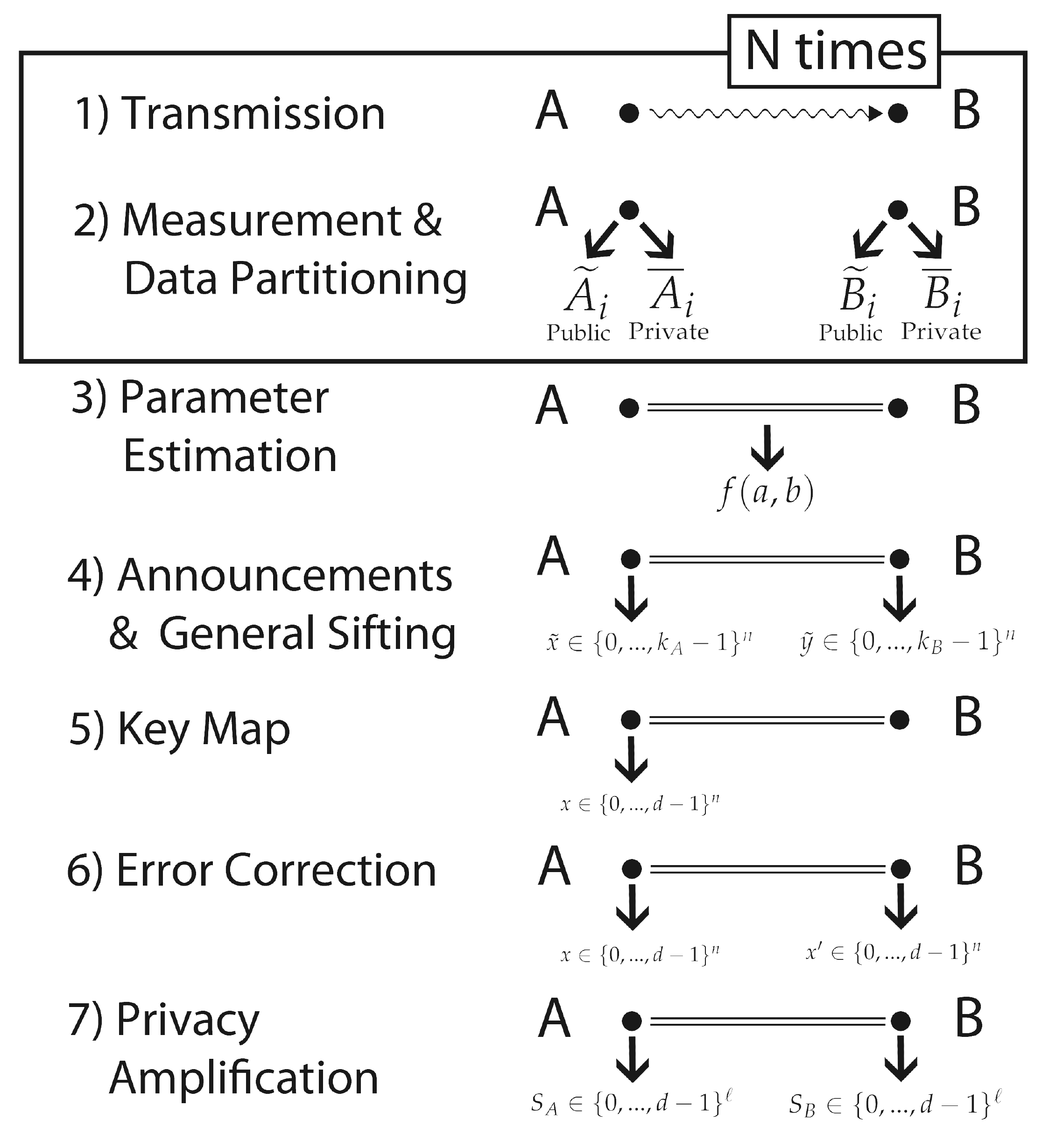}
    \caption{General QKD Protocol.}
    \label{fig1:GeneralProtocol}
\end{figure}

 With the security and attack models in mind, we can consider what subprotocols of a QKD protocol contribute to the its overall $\varepsilon$-security. To aid our discussion, we now describe steps in a general QKD protocol. Following Fig. \ref{fig1:GeneralProtocol}, without loss of generality, the general QKD protocol can be described as follows:
\begin{enumerate}
    \item \textit{State Preparation and Transmission}: Alice prepares an entangled quantum state $\rho_{AB}$ and sends half of it to Bob. Alice does this $N$ times.
    \item \textit{Measurement and Data Partitioning}: Alice and Bob measure each of the $N$ entangled quantum states $\rho_{AB}$ and store the data pertaining to each measurement. In view of future communication, they partition their respective data from each measurement, indexed by $i$, into private information, $\overline{A}_{i}, \overline{B}_{i}$, and public information $\widetilde{A}_{i}$, $\widetilde{B}_{i}$ which they later announce publicly.
    \item \textit{Parameter Estimation}: Alice and Bob announce their fine-grained data about some random subset of the $N$ signals of size $m$ to construct the frequency distribution $f(a,b)$. If $f(a,b)$ is in a set of pre-agreed upon accepted statistics, $\mathcal{Q}$, Alice and Bob proceed. Otherwise, they abort the protocol.
    \item \textit{Announcements and General Sifting}: Alice and Bob announce the public information that they prepared in Step 2 and throw out results of some subset of the $N-m$ signals based on this public information. The remaining private information forms their raw keys $\tilde{x} \in \{0,...,k_{A}-1\}^{n}$ and $\tilde{y} \in \{0,...,k_{B}-1\}^{n}$ where $k_{A}$ and $k_{B}$ are the number of possible outcomes for Alice and Bob's measurements respectively.
    \item \textit{Key Map}: Alice computes the key map,\footnote{Alternatively, Bob can compute the key map. This is commonly referred to as reverse reconciliation, and in this case Alice's and Bob's roles are reversed in steps 5. and 6.}  a function of their private data as well as the public data of both parties to obtain a key, $x \in \{0,1,...,d-1\}^{n}$ where $d$ is the size of the alphabet for the key.
    \item \textit{Error Correction}: Alice and Bob publicly communicate to try and get $\tilde{y}$ and $x$ to  agree and thus Bob obtains $x' \in \{0,1,...,d-1\}^{n}$.
    \item \textit{Privacy Amplification:} Alice and Bob produce their final keys by using a two-universal hash function on the key map result $x$ (Theorem 5.5.1 of \cite{renner05}). Privacy amplification ends with Alice and Bob having keys $S_{A}$ and $S_{B}$ respectively.
\end{enumerate}

The subprotocols which contribute to the security parameter $\varepsilon$ are parameter estimation, error correction, and privacy amplification. There is also one more source of uncertainty based on how much one `smooths' the min-entropy, $\bar{\varepsilon}$. Therefore, using the standard security proof \cite{renner05,scarani2008}, we wind up with an $\varepsilon = \varepsilon_\PE + \bar{\varepsilon} + \varepsilon_{\EC} + \varepsilon_\PA$-secure protocol which is $\varepsilon_{\EC}$-correct and $\varepsilon_{\PA}+\varepsilon_{\PE} + \bar{\varepsilon}$-secure. Each term may be viewed in the following manner:
\begin{enumerate}
    \item $\varepsilon_{\PE}$ is the probability of the parameter estimation protocol not aborting and the state which Alice and Bob tested $m$ times not being included in the security analysis.
    \item $\bar{\varepsilon}$ is the probability of Eve knowing the key because for each state feasible according to parameter estimation, $\rho_{AB}$, Alice and Bob a priori consider the min-entropy of the state $\overline{\rho}_{AB}$ that maximizes the min-entropy over the set of states $\bar{\varepsilon}$-similar to $\rho_{AB}$.
    \item $\varepsilon_{\EC}$ is the probability that Alice and Bob do not abort the protocol and obtain outputs that differ, i.e. $x \neq x'$.
    \item $\varepsilon_{\PA}$ is the probability that Alice and Bob do not abort the protocol and that the key is known to Eve because the privacy amplification failed.
\end{enumerate} 
We note each $\varepsilon$ term puts a bound on the security of Eve knowing anything about the key, which one treats as if Eve learned everything about the key. While this interpretation of the bound may seem pessimistic, depending on the data being encrypted with the key, only one bit of the original message being known may be a security threat, and so this is the appropriate security \cite{Koenig2007,benor05a}.

With the protocol described and the $\varepsilon$-terms accounted for, we can define the calculation for determining the upper bound on the length of a secret key generated by a $\varepsilon$-secure QKD protocol. We begin by defining the set of density matrices which one must minimize the key rate over given the choice that $\mathcal{Q}$ is a set of frequency distributions within some distance $t$ from a preferred, fixed frequency distribution $\overline{F}$,
\begin{align}\label{eq:paramEstSet1}
    \mathbf{S}_{\mu} = \{ \rho \in \mathcal{D}(\mathcal{H}_A \otimes \mathcal{H}_B) \hspace{0.1cm} | \exists F \in \mathcal{P}(\Sigma) \text{ such that } \\ \|\Phi_{\mathcal{P}}(\rho) - \mathcal{N}(F) \|_{1} \leq \mu \hspace{0.1cm} \& \hspace{0.1cm} \|F - \overline{F} \|_{1} \leq t\} \nonumber
\end{align}
 Throughout this work, $\mathcal{D}(\mathcal{H}_A \otimes \mathcal{H}_B)$ denotes the set of density matrices on Hilbert space $\mathcal{H}_A \otimes \mathcal{H}_B$. The map $\Phi_{\mathcal{P}}(X) \equiv \sum_{j \in \Sigma} \Tr(X \widetilde{\Gamma}_j) \dyad{j}{j}$ maps density matrices to a register of the corresponding probability distribution under the POVM, $\{\widetilde{\Gamma}_{j}\}_{j \in \Sigma}$. In other words, $\Phi_{\mathcal{P}} : D(\mathcal{H}_{A} \otimes \mathcal{H}_{B}) \to \mathcal{P}(\Sigma)$ where $\mathcal{P}(\Sigma)$ denotes the set of probability distributions over the finite set $\Sigma$ which we refer to as an alphabet. We refer to $\Phi_{\mathcal{P}}$ as the probability map. The map $\mathcal{N}(X) = \sum_{x \in \Sigma, y \in \Lambda} p(y|x) \bra{x}X\ket{x} \dyad{y}{y}$ is a map $\mathcal{N}: \mathcal{P}(\Sigma) \to \mathcal{P}(\Lambda)$ according to the conditional probability distribution $p(y|x)$. This is the quantum channel representation of a classical-to-classical channel \cite{Wilde2011}. By the data processing inequality, one knows that processing data with a map like $\mathcal{N}$ can be viewed as throwing out some information. We therefore refer to $\mathcal{N}$ as a `coarse-graining channel' within this work, as will be elaborated in the next subsection. The frequency distributions are denoted by $F,\overline{F} \in \mathcal{P}(\Sigma)$. Throughout this paper we denote POVM elements pertaining to frequency distributions we hold as being susceptible to statistical fluctuations with $\widetilde{\Gamma}$ and observables pertaining to expectations or probabilities we hold certain with $\Gamma$. Furthermore, throughout the rest of the paper, when talking about $\|P - F\|_{1}$ or $\|F-\overline{F}\|_{1}$, we will refer to these as the variational distance as $P,F$, and $\overline{F}$ are probability distributions. For this reason we refer to $\mu$ as the variation bound and $t$ as the variation threshold. 
 
 With the notation in Eqn. \ref{eq:paramEstSet1} accounted for, we see that $\mathcal{Q}$ is represented by a set of frequency distributions that have variational distance from a preferred frequency distribution $\overline{F}$ within the variation threshold, $t$, in Eqn. \ref{eq:paramEstSet1}. The other inequality in Eqn. \ref{eq:paramEstSet1} determines a limit to the variational distance between the probability distribution induced by $\rho \in D(\mathcal{H}_{A} \otimes \mathcal{H}_{B})$ under the probability map $\Phi_{\mathcal{P}}$ and the coarse-graining of some $F \in \mathcal{Q}$. In other words, Eqn. \ref{eq:paramEstSet1} determines the set of $\sigma$ which, under the map $\Phi_{\mathcal{P}}$, there exists an $F \in \mathcal{Q}$ such that the distance between $\Phi_{\mathcal{P}}$ and $\mathcal{N}(F)$ is less than the variation bound. An in-depth explanation of why this set is what one optimizes over is given in Section \ref{subsec:BackgroundPE}, but the idea is that this includes all states which lead to observations which Alice and Bob would accept with non-negligibile probability.

We note that the $\|F - \overline{F}\|_{1} \leq t$ constraint in Eqn. \ref{eq:paramEstSet1} is a choice in formalizing the set of accepted distributions during parameter estimation, $\mathcal{Q}$. While fundamentally $\mathcal{Q}$ may be any set, this threshold from some specific statistics $\overline{F}$ is a practical choice without much loss in generality as normally one would accept any probability distribution within some distance from an ideal probability distribution (such as the perfectly correlated statistics, or low phase error).

 We can now present the key rate under the assumption of identically and independently distributed (i.i.d.) collective attack. We will explain how to lift the collective attack analysis to coherent attacks in Section \ref{subsec:CoherentAttacks}.
 
 \textit{Adaptation of Theorem 6.5.1 of \cite{renner05} for Collective Attacks:}  Assuming i.i.d. collective attack, the QKD protocol is $\varepsilon = \varepsilon_\PE + \Bar{\varepsilon} + \varepsilon_\EC + \varepsilon_\PA$-secure given that, when the protocol does not abort, the output key is of length $\ell$ where
\begin{equation}\label{eq:keyLength}
    \ell \leq n( H_{\mu}(X|E) - \delta(\bar{\varepsilon})) - \text{leak}_{\varepsilon_{\EC}} - 2\log_{2}(2/\varepsilon_{\PA}) 
\end{equation}
with the following definitions:
\begin{align}
    H_{\mu}(X|E) &= \underset{\rho \in \mathbf{S}_{\mu}}{\min} H(X|E)_{\rho} \nonumber \\
    \text{leak}_{\varepsilon_{\EC}} &= n f_{\EC} H(X|Y) + \log_{2}\left(\frac{2}{\varepsilon_{\EC}}\right)  \nonumber \\
    \delta(\bar{\varepsilon}) & = 2 \log_{2}(d + 3)\sqrt{\frac{\log_{2}(2/\bar{\varepsilon})}{n}} \label{eq:delta} \\
    \mu &= \sqrt{2}\sqrt{\frac{\ln(1/\varepsilon_\PE) + |\Sigma|\ln(m + 1)}{m}} \label{eq:mu}
\end{align} 
and $d$ is the size of the alphabet for Alice and Bob's output key. 

We note that the variation bound $\mu$ is different from existing literature as we are not using an entry-wise approximation, but rather are bounding the entire variational distance and the previous statements of bounding the variational distance in general had a typo. Our $\delta(\bar{\varepsilon})$ term is smaller than any other reported work that we know of as we use the tightest bound in \cite{renner05} and using the correction noted in Footnote 27 of \cite{scarani2008}.  We note that $\leak$ as defined is an upper bound on the amount of information leaked during the error-correction step taking into account the inefficiency in the error correction for realistic block lengths using the parameter $f_{\EC} \geq 1$. In an actual QKD experiment, the information leaked is an experimentally known parameter. We derive all terms which differ from other works in Appendix \ref{appendix:Terms}.

\subsection{Parameter Estimation}\label{subsec:BackgroundPE}

It is important to consider parameter estimation's role in the security proof in greater detail as it is deceivingly simple and is the primary focus of this work's examples. In this section we clarify its role, review how it has been used in previous works, and present a theorem which resolves a standing conceptual issue.

As stated in the previous section, in parameter estimation as presented in the Renner framework \cite{renner05}, Alice and Bob sacrifice $m$ of the signals to get a sequence, $\mathbf{z} = (z_{1},z_{2}, ..., z_{m}) \in \Sigma^{m}$. From this sequence Alice and Bob construct their frequency distribution $F$ over $\Sigma$. If $F$ is in a pre-agreed set of distributions, $\mathcal{Q}$, Alice and Bob continue the protocol. Otherwise, they abort.

The $\varepsilon_{\PE}$ term in the security statement arises from disregarding any state that would lead to an accepted frequency distribution with a probability less than $\varepsilon_{\PE}$. Formally, one could say a state $\sigma$ is $\varepsilon_{\PE}$-filtered for a given set of measurements by Alice and Bob, $\{\widetilde{\Gamma}_{j}\}$, and set of accepted probabilities, $\mathcal{Q}$, if $\Pr[A_{\mathcal{T}}|\sigma] \leq \varepsilon_{\PE}$. Here $\Pr[A_{\mathcal{T}}|\sigma]$ is the probability that Alice and Bob accept a frequency distribution which is produced by sampling from $\sigma$ with the POVM defined in the protocol. A state which is ignored for this reason is referred to as being $\varepsilon_{\PE}$-filtered. This disregarding is necessary as otherwise Alice and Bob would always have to consider the maximally mixed state and be unable to generate a key.

One may note that the security statement in parameter estimation is therefore about all statistics which Alice and Bob would accept as can be formally seen in Eqn. \ref{eq:paramEstSet1}. This has been obfuscated in many of the works on finite key analysis where the security is always implicitly presented for a protocol in which only one frequency distribution is accepted. We refer to such a protocol as a \textit{protocol with unique acceptance} as there is a unique frequency distribution which Alice and Bob will accept.  While rigorous, we believe the security analysis of protocols with unique acceptance to not be the complete picture as a protocol which only accepts a single frequency distribution will abort an impractical amount of the time.

\begin{figure}
    \centering
    \includegraphics[width=\columnwidth]{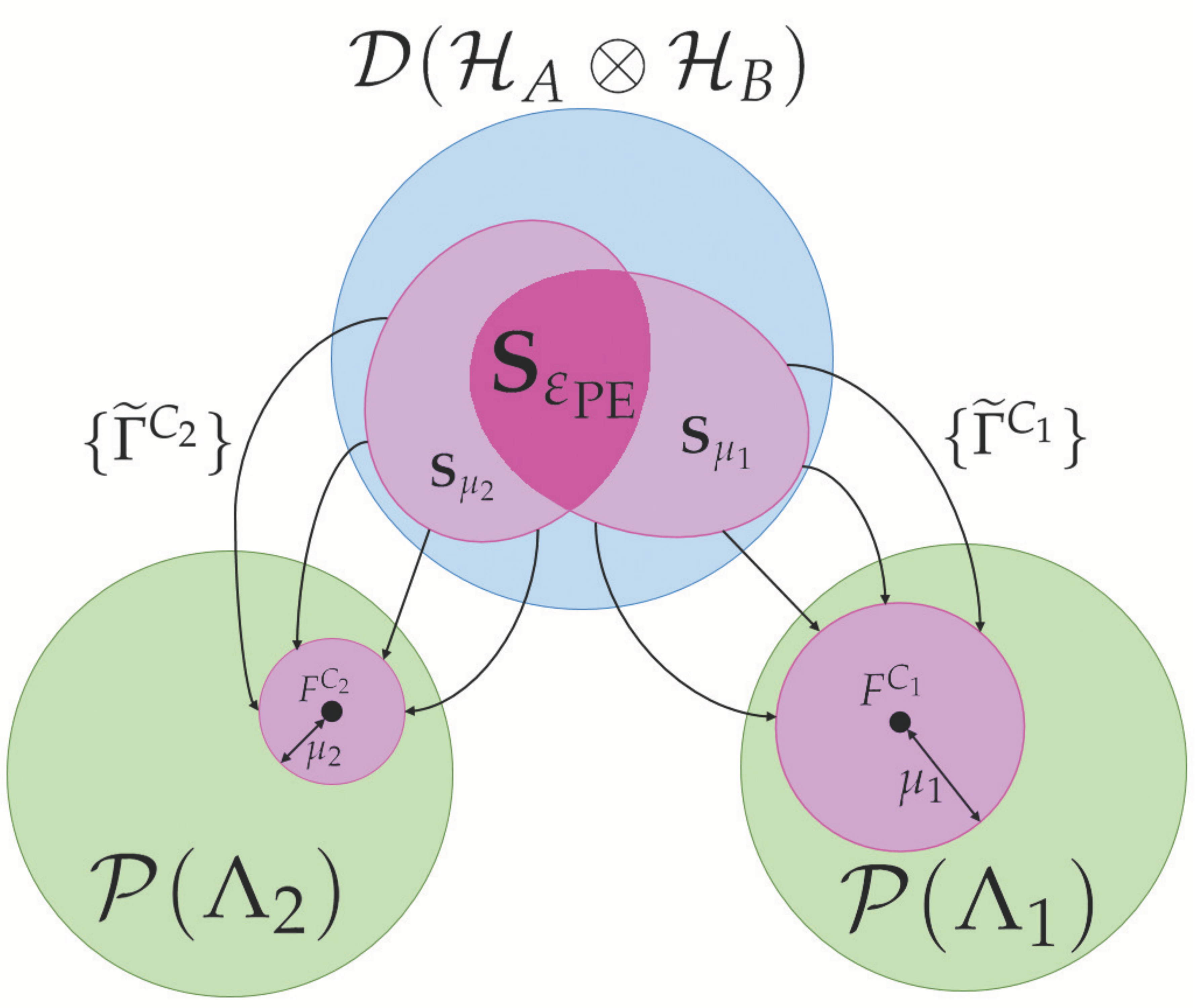}
    \caption{Parameter Estimation Visualized: Here we consider parameter estimation for two coarse-grainings. For clarity we consider a protocol with unique acceptance which only accepts $F$. Each coarse-graining of $F$ has its own set $\mathbf{S}_{\mu_{k}}$ which would correspond to Eqn. \ref{eq:paramEstSet1} for that given coarse-graining. One can see that the POVM and the variational bound $\mu_{k}$ determines the set $\mathbf{S}_{\mu_{k}}$. Furthermore, we  now visually see how by considering both coarse-grainings, $\mathbf{S}_{\varepsilon_{\PE}}$, one can decrease the set of density matrices optimized over for the same security parameter and therefore possibly increase the keyrate.}
    \label{fig:PEPic}
\end{figure}

\subsubsection{Coarse-Graining}
There remains a further conceptual issue in parameter estimation. In parameter estimation's most straightforward implementation, Alice and Bob simply take the outcomes of their joint measurements for some subset of the $N$ signals to get their sequence $\mathbf{z}$ and thus their probability distribution $F$. We refer to the sequence $\mathbf{z}$ as \textit{fine-grained} data as it pertains to the most detailed information one can acquire via the measurements permitted by the protocol. However, Alice and Bob could also construct a variety of alternative distributions by \textit{coarse-graining} the fine-grained probability distribution  $F$ over the alphabet $\Sigma$ to a probability distribution $F^{C}$ over a smaller alphabet $\Lambda$. Formally, coarse-graining is simply data processing of the statistics $F$ using a conditional probability distribution $p_{\Lambda|\Sigma}$ which is represented in the language of quantum channels as the classical-to-classical channel $\mathcal{N}$. Therefore one can construct the coarse-grained statistics $F^{C}$ and corresponding \textit{effective} POVM $\{\widetilde{\Gamma}^{C}_{i}\}_{i \in \Lambda}$ for constructing the probability map using the conditional probability distribution $p_{\Lambda|\Sigma}$ by the following two equations:
\begin{align*}
            F^{C} &= \mathcal{N}(F) & \widetilde{\Gamma}^{C}_{i} &= \sum_{j \in \Sigma} p_{\Lambda|\Sigma}(i|j)\widetilde{\Gamma}_{j}
\end{align*}

As an example, consider the case of BB84. Alice and Bob both have four possible outcomes for their measurements ``$0$", ``$1$", ``$+$", and ``$-$", which results in sixteen possible joint outcomes, which would be our alphabet $\Sigma$. However, it is often sufficient to look at a statistic known as the phase error for determining the calculation of the entropy term $H_{\mu}(X|E)$. There is a phase error if Alice and Bob's joint outcome is in the set $\Sigma_{\text{err}} := \{(``+",``-"), (``-",``+") \}$. Then the phase error can be seen as the coarse-graining from applying the conditional probability distribution $p_{\Lambda|\Sigma}$ defined as:
\begin{align*}
p_{\Lambda|\Sigma}(``\text{error}"|j) &= 
\begin{cases} 1 & j \in \Sigma_{\text{err}} \\
0 & \text{ otherwise}
\end{cases} \\
p_{\Lambda|\Sigma}(``\text{no error}"|j) &=
\begin{cases} 0 & j \in \Sigma_{\text{err}} \\
1 & \text{ otherwise}
\end{cases} \ .
\end{align*}

\subsubsection{Security with Multiple Coarse-Grainings}
Given the proof method for constructing the set in Eqn. \ref{eq:paramEstSet1}, coarse-graining may lead to a better a key rate than using just the fine-grained data as will be shown in Section \ref{sec:examples}. This would imply that, within the proof method, throwing out information can make one more secure against Eve which is counter-intuitive. However, in an actual protocol, even when Alice and Bob coarse-grain their statistics, they still have access to the fine-grained data. We would expect therefore that one can construct a set which considers the fine-grained data and the coarse-grained data and maintains the same security statement.\footnote{We note that in \cite{cai2009} they considered fine-grained data and coarse-grained data by increasing the security term.} Such a set could only improve the key rate and would resolve the idea that throwing out information can help within this proof method. Here we prove such a set exists by taking the intersection of sets constructed via different coarse-grainings but with the same security promise under the assumption of i.i.d. collective attacks. That is to say, we prove that if one fixes a parameter estimation security parameter $\varepsilon_{\PE} > 0$ and consider a finite number of coarse-grainings, indexed with an alphabet $\Xi$, then if one defines the set of states $\mathbf{S}_{\mu_{k}}$ which must be optimized over for each coarse-graining given the security parameter, then optimizing over the intersection of these sets will guarantee the same security parameter. A generalization of this theorem for considering the intersection of any set of sets, $\{\mathbf{S}_{k}\}$, such that $\forall k$ the set's complement, $\overline{\mathbf{S}}_{k}$, includes only states $\xi$ such that $\Pr[A_{\mathcal{T}}|\xi] \leq \varepsilon_{\PE}$ is straightforward.

\begin{theorem}[Security with Multiple Coarse-Grainings]\label{thm:multCoarseGrain}
Fix $\varepsilon_{\PE} > 0$. Let $\Xi$ be a finite alphabet indexing these multiple coarse-grainings. For each $k \in \Xi$, let
\begin{align*}
    \mathbf{S}_{\mu_{k}} = \{\rho \in D(\mathcal{H}_{A} \otimes \mathcal{H}_{B}) \hspace{0.1cm} |& \hspace{0.1cm} \exists F_{k} \in \mathcal{P}(\Sigma) : \\ \|\Phi_{\mathcal{P}_{k}}(\rho) - \mathcal{N}_{k}(F_{k}) \|_{1} \leq \mu_{k} 
    & \hspace{0.1cm} \& \hspace{0.1cm} \|\overline{\mathcal{N}}(F_{k}) - \overline{\mathcal{N}}(\overline{F}) \|_{1} \leq t\}    
\end{align*}
where $\overline{F} \in \mathcal{P}(\Sigma)$ is used to define the set of statistics accepted, $\Phi_{\mathcal{P}_{k}}(\rho) = \sum_{i \in \Lambda_{k}} \Tr(\rho \widetilde{\Gamma}^{C_{k}}_{i}) \dyad{i}{i}$ is the corresponding probability map, $\mathcal{N}_{k}(X) = \sum_{i,j} p_{\Lambda_{k}|\Sigma}(j|i) \bra{i} X \ket{i} \dyad{j}{j}$ is the corresponding coarse-graining channel,  $\overline{\mathcal{N}}$ is a coarse-graining channel used so that one can abort on statistics that differ from the ones considered for the variation bounds $\mu_{k}$, and $\mu_{k}$ is determined using Eqn. \ref{eq:mu} so that, by Theorem \ref{thm:muProof}, $\forall \sigma \not \in \mathbf{S}_{\mu_{k}}, \Pr[A_{\mathcal{T}}|\sigma^{\otimes m}] \leq \varepsilon_{\PE}$. Define $\mathbf{S}_{\text{multi}} = \bigcap_{k} \mathbf{S}_{\mu_{k}}$. If $\sigma \not \in \mathbf{S}_{\text{multi}}$, then $\Pr[A_{\mathcal{T}}|\sigma^{\otimes m}] \leq \varepsilon_{\PE}$.
\end{theorem}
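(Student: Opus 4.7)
The plan is to recognize that this is essentially a contrapositive set-theoretic argument that leverages a key modeling choice in the statement: the acceptance event $A_{\mathcal{T}}$ is defined by the protocol through the \emph{common} coarse-graining $\overline{\mathcal{N}}$, the reference distribution $\overline{F}$, and the variation threshold $t$, and is therefore independent of the analytical coarse-graining $\mathcal{N}_k$ used to construct each $\mathbf{S}_{\mu_k}$. Once this is appreciated, the result reduces to a single application of the hypothesis supplied by Theorem \ref{thm:muProof} rather than requiring a union bound over $k \in \Xi$.

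Concretely, I would first apply De Morgan's law to rewrite
\begin{equation*}
    \overline{\mathbf{S}_{\text{multi}}} \;=\; \overline{\bigcap_{k \in \Xi} \mathbf{S}_{\mu_k}} \;=\; \bigcup_{k \in \Xi} \overline{\mathbf{S}_{\mu_k}} \, ,
\end{equation*}
so that any $\sigma \notin \mathbf{S}_{\text{multi}}$ lies in $\overline{\mathbf{S}_{\mu_{k^*}}}$ for at least one index $k^* \in \Xi$ (which one may simply pick). Next I would invoke Theorem \ref{thm:muProof} for this single coarse-graining $\mathcal{N}_{k^*}$ with variation bound $\mu_{k^*}$: by construction $\mu_{k^*}$ was chosen so that every state outside $\mathbf{S}_{\mu_{k^*}}$ satisfies $\Pr[A_{\mathcal{T}} \mid \sigma^{\otimes m}] \leq \varepsilon_{\PE}$. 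Applying this to our $\sigma$ delivers the bound.

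The main subtlety (and the step most in need of care in the write-up, rather than a technical obstacle) is making explicit why the acceptance probability on the right-hand side is the same quantity for every $k$. This is the reason the definition of each $\mathbf{S}_{\mu_k}$ uses $\overline{\mathcal{N}}(F_k)$ and $\overline{\mathcal{N}}(\overline{F})$ rather than $\mathcal{N}_k(F_k)$ and $\mathcal{N}_k(\overline{F})$ in the acceptance clause: the protocol really does abort or accept based on a single, fixed test, and the different $\mathcal{N}_k$ only act as different lenses for bounding which states could plausibly pass that test with probability exceeding $\varepsilon_{\PE}$. Because $A_{\mathcal{T}}$ does not depend on $k$, no union bound is incurred and $\varepsilon_{\PE}$ is preserved; the intersection can only shrink the feasible set and hence only help the minimization in $H_{\mu}(X|E)$.

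Finally, I would note that the argument uses no property of the $\mathbf{S}_{\mu_k}$ beyond the filtering guarantee provided by Theorem \ref{thm:muProof}. This makes the generalization remarked on immediately before the theorem statement — taking the intersection of any family of sets $\{\mathbf{S}_k\}$ whose complements consist only of $\varepsilon_{\PE}$-filtered states — immediate from exactly the same contrapositive step, so the same proof template handles it verbatim.
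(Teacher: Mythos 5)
Your proposal is correct and follows essentially the same route as the paper's proof: De Morgan's law to identify $\overline{\mathbf{S}_{\text{multi}}}$ with $\bigcup_{k}\overline{\mathbf{S}_{\mu_k}}$, followed by the per-coarse-graining filtering guarantee of Theorem \ref{thm:muProof} applied to a witness index. Your added remark that the acceptance event $A_{\mathcal{T}}$ is a single fixed test independent of $k$ (so no union bound over $\Xi$ is incurred) makes explicit the point the paper leaves implicit, but the argument is the same.
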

\begin{proof}
 For any set $X \subseteq D(\mathcal{H}_{A} \otimes \mathcal{H}_{B})$, let $\overline{X} \equiv \{x \in D(\mathcal{H}_{A} \otimes \mathcal{H}_{B}) | x \not \in X \}$. We know by the construction of the sets $\mathbf{S}_{\mu_{k}}$ (Eqn. \ref{eq:mu} and Theorem \ref{thm:muProof}) that $\forall k \in \Xi$, $\forall \sigma \in \overline{\mathbf{S}_{\mu_{k}}}$, $\Pr[A_{\mathcal{T}}|\sigma^{\otimes m}] \leq \varepsilon_{\PE}$. It immediately follows that $\forall \sigma \in \bigcup_{k} \overline{\mathbf{S}_{\mu_{k}}}, \Pr[A_{\mathcal{T}}|\sigma^{\otimes m}] \leq \varepsilon_{\PE}$. Note that $\overline{\bigcap_{k} \mathbf{S}_{\mu_{k}}} = \bigcup_{k} \overline{\mathbf{S}_{\mu_k}}$. Therefore $\forall \sigma \not \in \bigcap_{k} \mathbf{S}_{\mu_{k}},$ $\Pr[A_{\mathcal{T}}|\sigma^{\otimes m}] \leq \varepsilon_\PE$.
\end{proof}

With the preceding theorem, we can define the general set to optimize over:
\begin{align}\label{eq:paramEstSetGeneral}
    \mathbf{S}_{\varepsilon_{\PE}} = \{ \rho \in D(\mathcal{H}_{A} \otimes \mathcal{H}_{B}) \hspace{0.05cm} | \hspace{0.05cm} \forall k \in \Xi, \exists F_{k} \in \mathcal{P}(\Sigma) : \\ 
    \|\Phi_{\mathcal{P}_{k}}(\rho) - \mathcal{N}_{k}(F_{k})\|_{1} \leq \mu_{k}
    \hspace{0.25cm} \& \hspace{0.05cm} \|\overline{\mathcal{N}}(F_{k}) - \overline{\mathcal{N}}(\overline{F}) \|_{1} \leq t  \} \nonumber
\end{align}
where $\Xi$ is an alphabet for indexing the number of coarse-grainings. Note that $\overline{F}$ is fixed. 

There are two important observations to be made. The first is that for $\rho \in \mathbf{S}_{\varepsilon_{\PE}}$ one does not need a single $F \in \mathcal{Q}$ which satisfies all $k$ variation bounds with respect to $\rho$ but rather $(F_{1}, .... , F_{|\Xi|}) \in \mathcal{Q}^{ \times |\Xi|}$ so that $F_{k}$ satisfies the $k^{\text{th}}$ variation bound with respect to $\rho$. This is a property of the proof method we have used as we intersect the sets. An alternative proof method that only considers one $F$ that satisfies all constraints at the same time remains an open problem. The second observation is that to define $\mathbf{S}_{\varepsilon_{\PE}}$ each $\mathbf{S}_{\mu_{k}}$ being intersected must be defined using a coarse-graining which acts on fine-grained statistics over the same alphabet $\Sigma$. Otherwise more testing would be necessary which would relate to a different set and a different security claim. To visualize Eqn. \ref{eq:paramEstSetGeneral}, see Fig. \ref{fig:PEPic}, which presents Eqn. \ref{eq:paramEstSetGeneral} for a protocol with unique acceptance.

\subsection{Asymptotic Analysis}
Lastly we review how the asymptotic analysis arises from finite key analysis since the general numerical framework for finite key analysis is an extension of the asymptotic method. This can be seen as follows. Define the asymptotic key rate as $R^{\infty} = \lim_{N \to \infty} \ell(N)/N$. As the total number of signals sent, $N$, goes to infinity, the number of signals used for parameter estimation, $m$, will grow (although an increasingly smaller \textit{fraction} of the total signals sent will be consumed for parameter estimation). Given Eqn. \ref{eq:mu}, as the the number of signals $m$ increases to infinity, the variation bound $\mu$ will go to 0. The fundamental limit for $n/N$ will be the probability that any signal can actually be used for key generation, which we refer to as $p_{pass}$. It is then clear that the asymptotic key rate is
\begin{equation}\label{eq1:DW-formula}
        R^{\infty} = p_{pass}( \underset{\rho \in \mathbf{S}}{\min} H(X|E) - f_{\EC} H(X|Y))
\end{equation}
where
\begin{equation*}
    \mathbf{S} \equiv \{ \rho \in \mathcal{D}(\mathcal{H}_{A} \otimes \mathcal{H}_{B}) | \Tr(\rho \Gamma_{i}) = \gamma_i, \hspace{0.2cm} \forall i \in \Lambda\}
\end{equation*}
and $\Lambda$ is an alphabet for indexing the constraints. \\

This statement is equivalent to the famed Devetak-Winter bound \cite{devetak05a} which in this case has been derived from the finite key analysis. Furthermore, as we will see in Section \ref{subsec:CoherentAttacks}, the finite key analysis can be extended to take into account coherent attacks and still achieve this bound in the limit. We can therefore conclude asymptotic analysis pertains to coherent attacks. In the expression for asymptotic key rates, $\{\Gamma_{i}\}$ no longer need to form a POVM, but rather can be any observables that are in the space spanned by the original POVM. This is the case as in the asymptotic limit there are no fluctuations, so one can calculate the expectation value of any observable from a linear combination of the probabilities determined by the POVM. Lastly, there exists a numerical method for calculating this key rate using semidefinite programming for general QKD protocols \cite{coles2016,winick2018}. In what follows, we show how to extend this numerical method to finite key so that we can then investigate examples to better understand parameter estimation. 

\end{section}

\begin{section}{Numerical Method}\label{sec:numMethods}

    To be able to determine the key rate for an arbitrary device-dependent QKD protocol using a unified numerical method, it is important to be able to represent all protocols in the same manner. All QKD protocols can be formulated as entanglement based protocols using the source-replacement scheme. This means that as our numerical framework can handle entanglement-based protocols, it can also handle prepare-and-measure protocols. First we review the source-replacement scheme. We then review the numerical method for asymptotic analysis under this representation. Lastly, we show how to extend the numerical analysis to consider the finite key regime.
    
    \subsection{Source-replacement Scheme}
    The source replacement scheme is a formulation of the prepare and measure protocol in the language of entanglement-based protocols. It was first made use of in the analysis of BB84 \cite{bennett92c} and Gaussian CV-QKD \cite{grosshans2003}. The general method for the equivalence was then expounded in \cite{curty04a,ferenczi12a}. By formulating the prepare and measure protocol in language of entanglement-based protocols, whatever the key rate is for the entanglement-based protocol is also the key rate for the original prepare-and-measure protocol.
    
    Imagine a prepare-and-measure protocol in which Alice sends the ensemble $\{p_{x}, \ket{\varphi_{x}}\}$ where $p_{x}$ is the a priori probability of sending the signal state $\ket{\varphi_{x}}$. By the source-replacement scheme, it is equivalent for Alice to prepare the entangled state: 
    \begin{align*}
        \ket{\Phi}_{AS} = \sum_{x} \sqrt{p_{x}} \ket{x}_{A}\ket{\varphi_{x}}_{S}
    \end{align*} 
    Alice first sends Bob's portion of the state, the signal space $S$, to Bob through a quantum channel $\mathcal{E}$ leading to the resulting joint state:
    \begin{equation*}
        \rho_{AB}=(\mathcal{I}_{A} \otimes \mathcal{E}_{S \to B})(\ket{\Phi}\bra{\Phi}_{AS})
    \end{equation*}
    where $\mathcal{I}_{A}$ is the identity channel on the $A$ space. After Alice performs a local projective measurement on the $A$ space, she effectively sends $\ket{\varphi_x}$ to Bob with probability $p_x$ just like in the prepare-and-measure scheme. Consequently Bob receives the conditional state
    \begin{equation*}
        \rho^{x}_{B} = \frac{1}{p_x}\Tr_{A}[\rho_{AB} (\ket{x}\bra{x} \otimes \bbone_{B})]
    \end{equation*}
    Assume that in the original prepare-and-measure protocol Alice and Bob ended up with a joint probability distribution $p(x,b)$ where $b \in \Sigma$ and $|\Sigma|$ is the number of POVM elements for Bob's POVM $\{\Gamma^{B}_{b}\}_{b \in \Sigma}$. It follows by the source-replacement scheme that asymptotically it is equivalent for us to constrain $\rho_{AB}$ by
    \begin{align*}
        \Tr ( \rho_{AB} ( \ket{x} \bra{x} \otimes \Gamma^{B}_{b})) = p(x,b) \hspace{0.2 cm} \forall x,b
    \end{align*}
    when minimizing $H(X|E)$ over the set $\mathbf{S}$ of compatible states.

    \subsection{Asymptotic Numerics}
   To calculate secret key rates, we have to minimize $H(X|E)$ with the given constraints on the underlying state. This is often difficult when there are not sufficient symmetries to simplify the problem. To address this issue, a two-step method to produce a tight, efficient, and reliable lower bound on $H(X|E)$ has been created \cite{winick2018}. In this work there are a few key ideas which will be of particular import in our extension to include finite size effects. The first is that $H(X|E)$ can be represented by the relative entropy as $X$ is classical information \cite{coles2012}. This is done using following function:
    \begin{equation}\label{eq:objectiveFunc}
        f(\rho) = D(\mathcal{G}(\rho)\|\mathcal{Z}(\mathcal{G}(\rho)))
    \end{equation}
     where $D(\cdot || \cdot)$ is the quantum relative entropy, $\mathcal{G}$ is a completely positive trace non-increasing map that describes the post-processing steps of the protocol and $\mathcal{Z}$ is a quantum pinching channel which is related to obtaining the results of key map (see Appendix A of \cite{lin2019} for further detail). By the joint convexity of quantum relative entropy, the function $f(\rho)$ is a convex function in $\rho$ and thus can be used as the objective function for a semidefinite program for our minimization of the conditional entropy. Therefore we define
     \begin{align}\label{eq:alphaDef}
         \alpha \equiv \underset{\rho \in \mathbf{S}}{\min} f(\rho)
     \end{align}
     
     However, as we want a lower bound that also holds if our numerical optimization routines returns before reaching the true mathematical minimum, we need to acquire the dual problem of the SDP so that we have a maximization problem. This would guarantee the computer always returns an answer approaching from below the true minimum of the conditional entropy so that we can always guarantee that our answer provides a reliable lower bound on the key rate. Unfortunately, the quantum relative entropy is a highly non-linear function and so determining the dual of this problem is difficult in general. For this reason, we linearize the function about a given density matrix. We can then acquire the dual of the \textit{linearization} of the original problem SDP, $\max_{\vec{y} \in \mathbf{S^{*}}(\sigma)} \vec{\gamma} \cdot \vec{y}$ where 
     \begin{equation}\label{eq:asymptoticDual}
     \mathbf{S^{*}}(\sigma) \equiv \{\vec{y} \in \mathbb{R}^{\abs{\Lambda}} | \sum_{i} y_{i} \Gamma_{i} \leq \nabla f(\sigma)\}
     \end{equation}
     where $\vec{\gamma}$ is just the vector of the set of expectation values $\{\gamma_{i}\}_{i \in \Lambda}$.
     
     Then the lower bound for any optimal or suboptimal attack $\sigma$ can be calculated as 
     \begin{align}\label{eq:betaDef}
         \beta(\sigma) \equiv f(\sigma) - \Tr(\sigma \nabla f(\sigma)) + \underset{\vec{y} \in \mathbf{S^{*}}}{\max} \vec{\gamma} \cdot \vec{y}
     \end{align}
     because it can be shown that for all $\rho \in \mathbf{S}, \alpha \geq \beta(\rho)$ so long as $\nabla f(\rho)$ exists (Theorem 1 of \cite{winick2018}). Here we have defined the gradient of $f$ at point $\rho$ represented in the standard basis $\{\ket{k}\}$ as:\footnote{Note that we have defined the derivative differently than in \cite{winick2018} by absorbing the occuring transposition into the definition of the gradient. This removes transpositions in many equations. Every statement is kept consistent with this definition throughout the paper.}
     \begin{align*}
         \nabla f(\rho) \equiv \sum_{j,k} d_{jk} \ket{k}\bra{j}, \hspace{0.1cm} \text{with } d_{jk} \equiv \left. \pdv{f(\sigma)}{\sigma_{jk}} \right|_{\sigma = \rho}
     \end{align*}
     and $\sigma_{jk} \equiv \bra{j}\sigma\ket{k}$. Moreover, we can write the gradient of $f(\rho)$ as:
    \begin{align}\label{eq:derivative}
        \nabla f(\rho) \equiv \mathcal{G}^{\dag}(\log_{2}\mathcal{G}(\rho)) - \mathcal{G}^{\dag}(\log_{2} \mathcal{Z}(\mathcal{G}(\rho)))
    \end{align}
    Lastly, one can guarantee $\nabla f(\rho)$ exists via perturbing the state sufficiently by mixing the output of $\mathcal{G}(\rho)$ with the maximally mixed state such that all eigenvalues are non-zero.
     
     The expression of $\beta(\sigma)$ in Eqn. \ref{eq:betaDef} gives a valid lower bound for the key rate for any $\sigma$, but the bound will be tighter the closer $\sigma$ is to the true optimum. We thus use a near-optimal evaluation of the primal problem (Eqn. \ref{eq:alphaDef}) . This is referred to as Step 1 (see Algorithm \ref{alg:Method}). For further information on the specifics of this method, we refer to Appendix \ref{appendix:NumMethods} and \cite{winick2018}.

    \begin{algorithm}[h]
    \SetAlgoLined
    \KwResult{lower bound on $\underset{\rho \in \mathbf{S}}{\min}H(X|E)$ \cite{winick2018}}
    \begin{enumerate}
        \item Let $\epsilon >0$, $\rho_0 \in \mathbf{S}$, $\text{maxIter} \in \mathbb{N}$, and $i = 0$.
        \item[\textit{Step 1}]
        \item Compute $\Delta \rho := \arg \min_{\delta \rho} \Tr[(\delta \rho) \nabla f(\rho_i)]$ subject to $\Delta \rho + \rho_{i} \in \mathbf{S}$.
        \item If $\Tr[(\Delta \rho) \nabla f(\rho_i)] < \epsilon$, then proceed to Step 2
        \item Find $\lambda \in (0,1)$ that minimizes $f(\rho_i + \lambda \Delta \rho)$
        \item Set $\rho_{i+1} = \rho_i + \lambda \Delta \rho$, $i \to i + 1$.
        \item If $i > \text{maxIter}$, proceed to Step 2
        \item[\textit{Step 2}]
        \item Let $\rho$ be the result of Step 1. Let $\zeta \geq 0$ be the maximum constraint violation of $\rho$ from the original set $\mathbf{S}$ constraints which satisfy this. 
        \item Calculate $\nabla f(\rho)$ to use for constructing $\mathbf{S^{*}}$
        \item Expand $\mathbf{S^{*}}$ such that states which violated the original constraints by $\zeta$ are included.
        \item Calculate $\beta$ using the SDP defined above Equation \ref{eq:asymptoticDual}
    \end{enumerate}
    
    \caption{Asymptotic Key Rate lower bound}
    \label{alg:Method}
    \end{algorithm}
  
\begin{subsection}{Extension to Finite Key}
\begin{subsubsection}{Tight, reliable, and efficient lower bound}
We now extend the previous numerical framework to the finite regime and show rigorously that this extension preserves the advantages of previous numerical method; that is, it provides tight, efficient and reliable key rates. For clarity, we will begin by proving the tightness in the case where there are no numerical errors and a single coarse-graining. Here tightness is defined as the property that if one acquires the optimal solution in Step 1 of the algorithm, then Step 2 will obtain the same answer. We then generalize to the case for handling issues due to numerics and multiple POVMs in Appendix \ref{appendix:NumMethods}.

The primary steps in extending our method to the finite key regime are changing the sets over which we optimize and changing how we perform Item 9 in Algorithm \ref{alg:Method}. In the case of prepare-and-measure protocols, we first modify $\mathbf{S}_{\mu}$ as defined in Equation \ref{eq:paramEstSet1} as Alice knows her portion of the state $\rho_{A}$ perfectly under the source-replacement scheme. Therefore, while the parameter estimation is handled in the original definition, it must take into account Alice's certainty on $\rho_{A}$. Thus we define a variation of Eqn. \ref{eq:paramEstSet1} for prepare-and-measure protocols: 
\begin{align}\label{eq:SmuMainText}
\mathbf{S}_{\mu}^{\text{PM}} \equiv \{& \rho \in \Pos(\mathcal{H}_{A} \otimes \mathcal{H}_{B}) \hspace{0.05cm} | \hspace{0.05cm} \exists F \in \mathcal{P}(\Sigma) : \\ & \|\Phi_{\mathcal{P}}(\rho) - \mathcal{N}(F) \|_1 \leq \mu \text{, } 
\hspace{0.25cm} \|\overline{\mathcal{N}}(F) - \overline{\mathcal{N}}(\overline{F})\|_{1} \leq t, \hspace{0.2cm} \nonumber \\ & \Tr(\rho \Gamma_{i}) = \gamma_{i}, \hspace{0.1cm} \forall i \in \Lambda\} \nonumber
\end{align}
where we use $\Lambda$ for indexing constraints which are certain. We note $\rho$ is a density matrix by setting $\Gamma_{1} = \bbone_{\mathcal{H}_A \otimes \mathcal{H}_B}$, $\gamma_{1} = 1$. Furthermore, the use of $\mathcal{N}$ and $\overline{\mathcal{N}}$ is so that one might abort on some set of coarse-grained or fine-grained data which differs from the data used in relation to the variation bound. We stress that as $\overline{F}$ and $\overline{\mathcal{N}}$ are fixed, $\overline{\mathcal{N}}(\overline{F})$ is a fixed frequency distribution. From this we can define the primal problem of the linearized SDP at the density matrix $\rho$ as:
\begin{equation}\label{eq:finiteSDPPrimalSimple}
        \begin{aligned}
                & {\text{minimize}} & & \langle \nabla f(\rho), \sigma \rangle & \\
                & \text{subject to} & & \Tr(\Gamma_{i} \sigma) = \gamma_{i} & \forall i \in \Lambda \\
                &                   & & \| \Phi_{\mathcal{P}}(\sigma) - \mathcal{N}(F) \|_{1} \leq \mu  & \\
                &                   & & \| \overline{\mathcal{N}}(F) - \overline{\mathcal{N}}(\overline{F}) \|_{1} \leq t  & \\
                &                   & & \Tr(F) = 1 \\
                &                   & & \sigma, F \succeq 0
        \end{aligned}
\end{equation}
However it is not obvious from this form that this is an SDP. The trick is then to consider how to handle the trace norm. The trace norm of a Hermitian matrix $A$ has a well known semidefinite program \cite{watrous2018}:
\begin{equation*}
        \begin{aligned}
                & {\text{minimize}} & & \Tr(Q) + \Tr(R) & \\
                & \text{subject to} & & Q \succeq A \\
                &                   & & R \succeq -A \\
                &                   & & Q,R \succeq 0
        \end{aligned}
\end{equation*}
 It is known that the trace norm SDP always achieves the optimal value in both the primal and dual, which is a property known as strong duality. This is important as we need our SDP to have strong duality for tightness (see Appendix \ref{appendix:NumMethods} for more details). With this knowledge, we can express our SDP:
\begin{equation}\label{eq:finiteSDPPrimal}
        \begin{aligned}
                & {\text{minimize}} & & \langle \nabla f(\rho), \sigma \rangle & \\
                & \text{subject to} & & \Tr(\Gamma_{i} \sigma) = \gamma_{i} & \forall i \in \Lambda \\
                &                   & & \Tr(\Delta^{+}) + \Tr(\Delta^{-}) \leq \mu & \\
                &                   & & \Delta^{+} \succeq \Phi_{\mathcal{P}}(\sigma) - \mathcal{N}(F) & \\
                &                   & & \Delta^{-} \succeq -(\Phi_{\mathcal{P}}(\sigma) - \mathcal{N}(F)) &\\
                &                   & & \Tr(\overline{\Delta}^{+}) + \Tr(\overline{\Delta}^{-}) \leq t & \\
                &                   & & \overline{\Delta}^{+} \succeq \overline{\mathcal{N}}(F) - \overline{\mathcal{N}}(\overline{F}) & \\
                &                   & & \overline{\Delta}^{-} \succeq \overline{\mathcal{N}}(\overline{F}) - \overline{\mathcal{N}}(F) &\\
                &                   & & \Tr(F) = 1 \\
                &                   & & \sigma,F, \Delta^{+}, \Delta^{-}, \overline{\Delta}^{+}, \overline{\Delta}^{-} \succeq 0
        \end{aligned}
\end{equation}
The dual of this problem is:
\begin{equation}\label{eq:finiteSDPDual}
        \begin{aligned}
                & {\text{maximize}} & & \vec{\gamma} \cdot \vec{y} + \overline{f} \cdot \vec{\overline{z}} - a\mu - \overline{a}t - b\\
                & \text{subject to} & & \sum_i y_{i}\Gamma_i + \sum_j z_j \widetilde{\Gamma}_j \leq \nabla f(\rho)\\
                &                   & & \overrightarrow{\overline{N}^{\dagger}}(\vec{\overline{z}}) - \overrightarrow{N^{\dagger}}(\vec{z}) \leq b \vec{1} \\
                &                   & & -a \vec{1}\leq \vec{z} \leq a \vec{1} \\
                &                   & & -\overline{a} \vec{1}\leq \vec{\overline{z}} \leq \overline{a} \vec{1} \\
                &                   & &  a,\overline{a} \geq 0, \vec{y} \in \mathbb{R}^{\abs{\Lambda}} \\
        \end{aligned}
\end{equation}
where $\overline{f}$ is the vector version of $\overline{\mathcal{N}}(\overline{F})$ and $\overrightarrow{\mathcal{N}^{\dagger}}$ is the action as the adjoint of the map $\mathcal{N}$, $\mathcal{N}^{\dagger}$, on the diagonal entries of a matrix. It is sufficient to consider $\overrightarrow{\mathcal{N}^{\dagger}}$ on the diagonal entries of a matrix because $\mathcal{N}^{\dagger}$ only acts on the diagonal entries of a matrix, and so it is easy to see that the $\overrightarrow{\mathcal{N}^{\dagger}}$ map applied to the vector formed by the diagonal entries of a matrix gives the equivalent action as $N^{\dagger}$ on the matrix.

One may note that the objective function of the finite key SDP is similar to the asymptotic case but with reductions associated with the finite size effects due to the variational bound $\mu$ and the threshold $t$ as the variables $a,\overline{a}$ are non-negative. However this is somewhat obfuscated when first presented in this general form. We therefore explain this in relation to the simplified SDP of a protocol with unique acceptance in the following section. We denote the set of $(a,\overline{a},b,\vec{y},\vec{z},\vec{\overline{z}})$ which satisfy the constraints as $\mathbf{S}_{\mu}^{*}(\rho)$ for a primal solution $\rho$ to mirror the asymptotic notation.

With the SDP for finite key analysis determined, it is crucial to prove that we have preserved the old properties of tightness, robustness to perturbation to make $\nabla f(\rho)$ exist, and reliability in the face of finite computational precision. As we have not changed the function $f$, all of the theorems pertaining to perturbing the channel to guarantee $\nabla f(\rho)$ exist are unchanged from asymptotic case, and we direct readers to Ref. \cite{winick2018} for those proofs. However, the proof of tightness is not identical to that in \cite{winick2018} and so we state this result here. 
\begin{theorem}[Equality of $\alpha = \beta(\rho^*)$] \label{thm:MainTextTightness}
If $\rho^{*}$ is the minimizer that achieves $\alpha$, then $\alpha = \beta(\rho^{*})$ where $$\alpha \equiv \underset{\rho \in \mathbf{S}_{\mu}}{\min}f(\rho)$$ and 
\begin{align*}\beta(\sigma) \equiv & f(\sigma) - \Tr(\sigma \nabla f(\sigma)) \\ & \hspace{0.2cm} + \underset{(a,\overline{a},b,\vec{y},\vec{z},\vec{\overline{z}}) \in \mathbf{S}_{\mu}^{*}(\sigma)}{\max} \vec{\gamma} \cdot \vec{y} + \overline{f} \cdot \vec{\overline{z}} - a\mu - \overline{a}t - b
\end{align*}
This guarantees our numerical method obtains the optimal value when the solver works ideally.
\end{theorem}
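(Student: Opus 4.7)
The plan is to reduce the claim to two standard ingredients: first-order optimality of $\rho^{*}$ for the convex minimization defining $\alpha$, and strong duality between the linearized primal (Eqn.~\ref{eq:finiteSDPPrimal}) and dual (Eqn.~\ref{eq:finiteSDPDual}) at the point $\rho = \rho^{*}$. Since $f$ is convex on the convex set $\mathbf{S}_\mu$ and $\rho^{*}$ is a minimizer, the subgradient inequality yields $f(\sigma) \geq f(\rho^{*}) + \langle \nabla f(\rho^{*}), \sigma - \rho^{*} \rangle$ for every $\sigma \in \mathbf{S}_\mu$. Consequently $\rho^{*}$ also minimizes the linear functional $\sigma \mapsto \langle \nabla f(\rho^{*}), \sigma \rangle$ over $\mathbf{S}_\mu$, which is exactly the feasible region of the primal linearized SDP once the auxiliary variables $F,\Delta^{\pm},\overline{\Delta}^{\pm}$ are eliminated via their defining slack constraints. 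Thus the primal optimum equals $\langle \nabla f(\rho^{*}), \rho^{*} \rangle = \Tr(\rho^{*}\nabla f(\rho^{*}))$.

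Next, I would invoke strong duality of Eqns.~\ref{eq:finiteSDPPrimal}--\ref{eq:finiteSDPDual} at $\rho^{*}$ to equate this primal optimum with the dual optimum
\[
\max_{(a,\overline{a},b,\vec{y},\vec{z},\vec{\overline{z}}) \in \mathbf{S}_\mu^{*}(\rho^{*})} \vec{\gamma}\cdot\vec{y} + \overline{f}\cdot\vec{\overline{z}} - a\mu - \overline{a}t - b.
\]
Substituting into the definition of $\beta(\rho^{*})$ then gives
\[
\beta(\rho^{*}) = f(\rho^{*}) - \Tr(\rho^{*}\nabla f(\rho^{*})) + \Tr(\rho^{*}\nabla f(\rho^{*})) = f(\rho^{*}) = \alpha,
\]
as required.

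The main obstacle is justifying strong duality for the composite SDP, because it is built by splicing the trace-norm SDPs for the two variational-distance constraints into the original linearized problem. I would handle this by verifying Slater's condition: choose an interior point consisting of a full-rank $\sigma$ compatible with the equality constraints $\Tr(\Gamma_i\sigma)=\gamma_i$ (available by the asymptotic argument of~\cite{winick2018}), a strictly positive $F$, and slack matrices $\Delta^{\pm}, \overline{\Delta}^{\pm}$ taken strictly above the relevant Hermitian differences so that the trace inequalities $\Tr(\Delta^{+})+\Tr(\Delta^{-})\leq\mu$ and $\Tr(\overline{\Delta}^{+})+\Tr(\overline{\Delta}^{-})\leq t$ hold strictly. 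Since $\mu,t>0$ in the regime of interest, such a point exists whenever $\overline{F}$ lies in the interior of the coarse-grained simplex, which is the operational setting. Alternatively, one may appeal directly to the fact (already noted in the excerpt) that the trace-norm SDP achieves strong duality, and propagate this through the composition via the standard conic-programming result that Slater feasibility of the outer constraints suffices. Finally, I would defer to Appendix~\ref{appendix:NumMethods} the generalizations that accommodate the perturbation needed to make $\nabla f(\rho^{*})$ well-defined, nonzero numerical constraint violations $\zeta$, and multiple coarse-grainings, each of which is handled by the same two-ingredient template applied to appropriately enlarged feasible sets.
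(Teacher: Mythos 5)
Your overall skeleton is the same as the paper's: (i) use first-order optimality of $\rho^{*}$ to show that the linearized primal at $\rho^{*}$ attains the value $\Tr(\rho^{*}\nabla f(\rho^{*}))$, (ii) invoke strong duality of Eqns.~(\ref{eq:finiteSDPPrimal})--(\ref{eq:finiteSDPDual}) to replace the primal optimum by the dual maximum appearing in $\beta$, and (iii) observe that the two $\Tr(\rho^{*}\nabla f(\rho^{*}))$ terms cancel so that $\beta(\rho^{*}) = f(\rho^{*}) = \alpha$. This is exactly the route taken in Appendix~\ref{appendix:NumMethods}. However, two of your justifications do not hold up as written. First, the subgradient inequality $f(\sigma) \geq f(\rho^{*}) + \langle \nabla f(\rho^{*}), \sigma - \rho^{*}\rangle$ does \emph{not} by itself imply that $\rho^{*}$ minimizes $\sigma \mapsto \langle \nabla f(\rho^{*}),\sigma\rangle$ over $\mathbf{S}_{\mu}$: it is a lower bound on $f(\sigma)$, and combining it with $f(\sigma)\geq f(\rho^{*})$ gives no sign information on $\langle \nabla f(\rho^{*}), \sigma - \rho^{*}\rangle$. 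The correct ingredient is the first-order optimality condition for a differentiable convex function minimized over a convex set, $\langle \nabla f(\rho^{*}),\sigma-\rho^{*}\rangle\geq 0$ for all feasible $\sigma$ (Eq.~4.21 of \cite{boyd2004}), which is proved by moving along the feasible segment from $\rho^{*}$ toward $\sigma$; this is what the paper uses, together with the reverse inequality from reliability, to conclude the minimum is exactly zero.

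The second and more substantive issue is your verification of Slater's condition through a \emph{strictly feasible primal} point (full-rank $\sigma$, strictly positive $F$, strict slacks). Such a point need not exist: the certainty constraints $\Tr(\Gamma_{i}\sigma)=\gamma_{i}$ coming from the source-replacement scheme fix Alice's marginal $\rho_{A}$, and whenever the signal states are linearly dependent (as for the four BB84 states in a qubit space before the Gram--Schmidt reduction) every feasible $\sigma$ is singular, so no interior primal point exists and your argument breaks down precisely for the protocols the method targets. The paper instead exhibits an explicit \emph{strictly feasible dual} point (Lemma~\ref{lemma:strongduality}): it only uses that $\{\widetilde{\Gamma}_{j}\}$ sums to the identity, that $\Gamma_{1}=\bbone$ may be assumed, and that $\overline{\mathcal{N}}^{\dagger}$ and $\mathcal{N}^{\dagger}$ are unital, so it exists unconditionally and moreover guarantees attainment of the primal optimum. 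Your fallback of ``propagating'' strong duality of the trace-norm subprograms through the composition is not a standard result in the form you state it; you should replace the whole Slater step by an explicit dual point along the paper's lines.
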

\begin{proof}
See Appendix \ref{appendix:NumMethods}.
\end{proof}
Note this is not obvious as $\alpha$ is the optimal of the primal using the original function $f(\rho)$, and $\beta$ includes the dual of the linearization of $f(\rho)$.

Lastly, we are concerned with the numerical precision of the computer which cannot perfectly represent the POVM elements or statistics and sometimes may return an answer in the first step that slightly violates some constraint. In other words the computer has not optimized over $\mathbf{S}_{\mu}$, but rather over some different set $\mathbf{\widetilde{S}}_{\mu}$. Without handling this, our solver could be unreliable, i.e. it could allow for the solution of step 1 to obtain a value greater than step 2 in some case. To guarantee this does not happen, one must expand the set for the dual $\mathbf{S}^{*}_{\mu}(\sigma)$ to $\widetilde{\mathbf{S}}^{*}_{\mu}(\sigma)$. The proper method for doing this is to find the largest constraint violation of the certainty constraints, which we denote by $\zeta$'.\footnote{In the definition of Algorithm \ref{alg:Method}, $\zeta$ was the violation of all constraints, but all constraints were certain. Due to the uncertainty constraints in the finite-case, $\zeta$ only applies to the certainty constraints and then we handle expanding the uncertainty constraints accordingly. So as to avoid confusion, we define $\zeta'$ as the parameter pertaining only to certainty constraint violations in the finite key case.} Then one must allow every certainty constraint to vary within that distance as was done in the asymptotic case: $|\Tr(\rho \Gamma_i) - \gamma_i | \leq \zeta'$. Furthermore, one expands $\mu$ to $\mu' = \max(\mu + n  \epsilon', \| \Phi_{\mathcal{P}}(\rho_{f}) - F \|_{1} + n  \epsilon')$ where $n = |\Lambda|$ and $\rho$ is the solution to the first step. We leave the proof of this statement to Appendix \ref{appendix:NumMethods}. This then guarantees to include the state considered in the first step. Therefore, we have an SDP to do finite key analysis which is tight, efficient, and reliable for general QKD protocols.
\end{subsubsection}

\begin{subsection}{SDP for Protocol with Unique Acceptance}
Many of our examples pertain to protocols with unique acceptance for clarity in relation to previous work as well as for clarity of ideas. As in the case of unique acceptance the problem simplifies, we derive the SDP for a QKD protocol with unique acceptance from the general version above. Most generally, a protocol with unique acceptance may be viewed as picking $\overline{\mathcal{N}}(\overline{F})$ to be the only distribution Alice and Bob accept on. Then the constraint pertaining to $\mathcal{Q}$ in Eqn. \ref{eq:SmuMainText} vanishes as it must be the case $\overline{\mathcal{N}}(F) = \overline{\mathcal{N}}(\overline{F})$. It follows $F$ could be allowed to vary over all $F \in \mathcal{P}(\Sigma)$ such that $\overline{\mathcal{N}}(F) = \overline{\mathcal{N}}(\overline{F})$. However, in previous works \cite{scarani2008,scarani08b,bratzik2011,bunandar2019}, this nuance is lost as only one coarse-graining is considered, and so the authors instead define the frequency distribution on the the coarse-grained outcomes by defining $F := \overline{\mathcal{N}}(\overline{F})$. For consistency in the literature, we also make this assumption in defining a protocol with unique acceptance. We denote $\overline{\mathcal{N}}(\overline{F})$ by $\overline{F}_{\overline{\mathcal{N}}}$ to make it clear it is fixed rather than a variable. Using this notation, we can define the following set:
\begin{align*}
\mathbf{S}^{\text{UA}}_{\varepsilon_{\PE}} \equiv \{& \rho \in \Pos(\mathcal{H}_{A} \otimes \mathcal{H}_{B}) \hspace{0.05cm} | \hspace{0.05cm} \\
& \hspace{0.25cm} \|\Phi_{\mathcal{P}}(\rho) -  \mathcal{N}(\overline{F}_{\overline{\mathcal{N}}}) \|_1 \leq \mu \text{, } \Tr(\rho \Gamma_{i}) = \gamma_{i}, \hspace{0.1cm} \forall i \in \Lambda\}
\end{align*}
where it must be the case that $\mathcal{N}$ coarse-grains data from the alphabet of $\overline{F}_{\overline{\mathcal{N}}}$ as otherwise it would not be well-defined.
From this definition we get the following primal problem:
\begin{equation}\label{eq:SDfiniteSDPPrimal}
        \begin{aligned}
                & {\text{minimize}} & & \langle \nabla f(\rho), \sigma \rangle & \\
                & \text{subject to} & & \Tr(\Gamma_{i} \sigma) = \gamma_{i} & \forall i \in \Lambda \\
                &                   & & \Tr(\Delta^{+}) + \Tr(\Delta^{-}) \leq \mu & \\
                &                   & & \Delta^{+} \succeq \Phi_{\mathcal{P}}(\sigma) -  \mathcal{N}(\overline{F}_{\overline{\mathcal{N}}}) & \\
                &                   & & \Delta^{-} \succeq -(\Phi_{\mathcal{P}}(\sigma) - \mathcal{N}(\overline{F}_{\overline{\mathcal{N}}})) &\\
                &                   & & \sigma, \Delta^{+}, \Delta^{-} \succeq 0
        \end{aligned}
\end{equation}
The dual of this problem is:
\begin{equation}\label{eq:SDfiniteSDPDual}
        \begin{aligned}
                & {\text{maximize}} & & \vec{\gamma} \cdot \vec{y} + \overline{f} \cdot \vec{z} - a\mu \\
                & \text{subject to} & & \sum_i y_{i}\Gamma_i + \sum_j z_j \widetilde{\Gamma}_j \leq \nabla f(\rho)\\
                &                   & & -a \vec{1}\leq \vec{z} \leq a \vec{1} \\
                &                   & &  a \geq 0, \vec{y} \in \mathbb{R}^{\abs{\Lambda}} \\
        \end{aligned}
\end{equation}
where $\overline{f}$ is the vector version of $\mathcal{N}(\overline{F}_{\overline{\mathcal{N}}})$. 

 The SDP is nearly identical to the asymptotic case as the first constraint of Eqn. \ref{eq:finiteSDPDual} is in effect identical to the single constraint of Eqn. \ref{eq:asymptoticDual}. Similarly, the objective function is nearly identical, though one can see that there is some reduction to the key rate associated with the finite size effects, represented by variational bound $\mu$, as the variable $a$ is constrained to be non-negative. The constraint on $\vec{z}$ is simply the dual problem of the trace norm simplified using the specific structure of our problem (see Appendix \ref{appendix:NumMethods} for derivation).
\end{subsection}

\begin{subsection}{Coherent Attacks}\label{subsec:CoherentAttacks}
As one important aspect of finite key analysis is the ability to analyze the key rate using coherent attacks, it is important to understand how the numerics can handle the coherent attack analysis. Extending the numerical approach in this work to coherent attacks using the Finite Quantum de Finetti Theorem \cite{renner05} can be done by changing how one defines the variation bound $\mu$ and by adding some extra parameters, as we explain in Appendix \ref{appendix:CoherentAttacks}. However, the Finite Quantum de Finetti approach provides pessimistic key rates for realistic block sizes. An alternative method to the Finite Quantum de Finetti theorem which provides better, but still pessimistic, bounds on the key rate is the post-selection technique \cite{christandl2009}. This method effectively states that given $\varepsilon$-security for convex combinations of i.i.d. states, $\sigma^{\otimes N}$, which follows from the security of i.i.d. collective attacks, then the protocol is $\varepsilon' = (N+1)^{d_{AB}^2+1} \varepsilon$-secure for coherent attacks, where $d_{AB}$ is the dimension of the Hilbert space that Alice and Bob's joint state lives in. However, to rigorously use this method, this either requires the initial protocol to be permutation invariant, or finds a way to bound the portion of the protocol after parameter estimation by a permutation invariant version which introduces more $\varepsilon$ terms (Section 3.4.3 of \cite{beaudry2015}). Another technique that handles coherent attacks is the Entropy Accumulation theorem \cite{Dupuis2016}. However, this method is not immediately applicable to our numerical method since it requires a specific property for the protocol. We leave it as future work to investigate how to combine the entropy accumulation theorem with our numerical method. As such, the currently applicable coherent attack proof methods- the Finite Quantum de Finetti method and the post-selection technique- while implementable, are pessimistic and we expect them to be improved to be significantly closer to the collective attack results we present in this work. Moreover, this uplift is independent of our work here, so we concentrate on the collective attack. In Appendix \ref{appendix:CoherentAttacks} we explain how to lift our results to coherent attacks using the Finite Quantum de Finetti method. Lastly, to the best of our knowledge, the post-selection technique has not been rigorously applied to protocols using the source-replacement scheme. This is because the the source replacement scheme only proves protocol security on states with a fixed reduced density matrix, but the post-selection technique proof requires security on arbitrary i.i.d states. A simple solution is to treat the marginal constraints as uncertain and introduce extra testing in the protocol on the marginal, but this will come at some extra cost in the small block-size regime. This issue does not arise for the Finite Quantum de Finetti security proof as one can introduce an extra $\varepsilon$-term to handle the fixed marginal (See Remark 4.3.3. of \cite{renner05}).
\end{subsection}

\end{subsection}
\end{section}

\begin{section}{Examples}\label{sec:examples}
In this section we present variations of the BB84 protocol \cite{bennett84a} to investigate the properties of finite key analysis as  well as our method. In doing so, we show our method works for any protocol which can be represented in a finite-dimensional Hilbert space. This includes single-photon protocols including single-photon MDI protocols, and any optical implementation of a protocol which admits a squashing map \cite{beaudry08a,moroder10a,Zhang2020}. Furthermore we show the power and generality of our method in being able to consider multiple coarse-grainings where each frequency distribution can be of any length. This is in contrast to previous works \cite{cai2009,bratzik2011,bunandar2019} which could only do multiple two-outcome probability distributions without adding looseness to their calculation of $H_{\mu}(X|E)$ as their bound on the variation bound $\mu$ loosened beyond two-outcome POVMs. Lastly, in addition to examples of protocols with unique acceptance, we also present an example where $\mathcal{Q}$ is not a single distribution and discuss when using our method to calculate key rates of general protocols may be needed in the practical development of QKD hardware.

In all examples in this section, we let $\varepsilon_{\PE} = \bar{\varepsilon} = \varepsilon_{\EC} = \varepsilon_{\PA} = \frac{1}{4}\times 10^{-8}$ as we found no general asymmetric choice consistently improved the key rate substantially. We note that our method will work for significantly smaller $\varepsilon$ values. The only limitation is numerical precision which will not be a problem for any realistic $\varepsilon$ term given the equations always depend on the logarithm of the $\varepsilon$ term.

For completeness, we present the post-processing maps, $\mathcal{G}$, for each protocol in Appendix \ref{app:postProc} which are not difficult to derive following the discussion in Appendix A of \cite{lin2019}.

\begin{subsection}{BB84 with Phase Error Parameter Estimation}\label{subsec:PhaseBB84}
As a simple case where the analytic answer is known, we consider the BB84 protocol where signals are sent in the $Z$-basis with probability $p_{z}$ and the key map is only done on the $Z$-basis so that all other events are removed during generalized sifting \cite{lo00suba}. In other words, the states $\ket{0}, \ket{1}$ are sent with probability $\frac{p_z}{2}$ and the states $\ket{+},\ket{-}$ are sent with probability $\frac{1-p_z}{2}$. We assume that Alice and Bob perform parameter estimation in which they only check the phase error, $e_{x}$, using the POVM 
\begin{align} \label{eq:phaseErrorPOVM}
    \{\Pi_{e_{x}},\bbone - \Pi_{e_{x}}\}
\end{align} 
where $\Pi_{e_{x}} \equiv (\bbone_{A} \otimes \bbone_{B} - \sigma_{X} \otimes \sigma_{X})/2$ and $\sigma_{X}$ is the Pauli-$X$ operator. An analytic key length in finite size for a protocol with unique acceptance has been given for this scenario in \cite{scarani2008}:\footnote{In \cite{scarani2008} the authors do not have a factor of half for the variation bound $\mu$ in their corresponding formula for $H_{\mu}(X|E)$ (Eqn. 6 of \cite{scarani2008}). However, one can tighten their result as if one is to perturb a probability distribution of two outcomes by a total amount $\mu$ and maintain a probability distribution, the most one can increase one outcome by is $\mu/2$.}
\begin{align}\label{eq:phaseErrKeyRate}
    \ell_{\text{BB84},e_{x}} =& n [1-h(e_{x} + \frac{\mu}{2})-f_{\EC}h(e_{z})-\delta(\bar{\varepsilon})] \\
    & \hspace{.75cm} -\log_{2}(\frac{2}{\varepsilon_{\EC}}) - 2 \nonumber \log_{2}(\frac{2}{\varepsilon_{\PA}})
\end{align}
where $H_{\mu}(X|E) = 1-h(e_{x} + \frac{\mu}{2})$, $H(X|Y) = h(e_{z})$, $h(p) = -p \log_{2}(p) - (1-p) \log_{2}(1-p)$ , and all other terms are as defined in Section \ref{sec:background}. The specific form of $H_{\mu}(X|E)$ follows from the fact that asymptotically the conditional entropy between the key and Eve for this protocol is of the form $1-h(e_{x})$ and so the worst case-scenario in the finite regime is that half of the variation bound $\mu$ increases the phase error. The error correction term $H(X|Y)$ is the binary entropy of the quantum bit error rate $e_z$ because the number of bits that needed to be corrected when doing the key map from the Z-basis is the error rate in the Z-basis.

To show that our approach works, we consider it against the analytical curve in Fig. \ref{fig:compVTheory}. Following \cite{scarani2008}, to determine the value of $\mu$, we assume Alice and Bob sacrifice $(1-p_z)^2 N$ of the signals to parameter estimation. This is a good choice since because as $N$ goes to infinity, $p_{z}$ can approach zero, and so one will need to spend a continuously smaller fraction on parameter estimation, which this a priori decision takes into account. Furthermore, in the simulation we assume that our observations yield that the error rates satisfy $e_{z} = e_{x}$ to let $H(X|Y) = h(e_x)$. As can be seen in Fig. \ref{fig:compVTheory}, for this protocol our solver produces a lower bound that matches the analytical result perfectly. Furthermore, in this example, we let $f_{\EC} = 1.2$ as this is a realistic model of the inefficiency of error correction in current experiments \cite{scarani08b,scarani2008,Lucamarini2015}.

\begin{figure}[h]
    \centering
    \includegraphics[width=0.9\columnwidth]{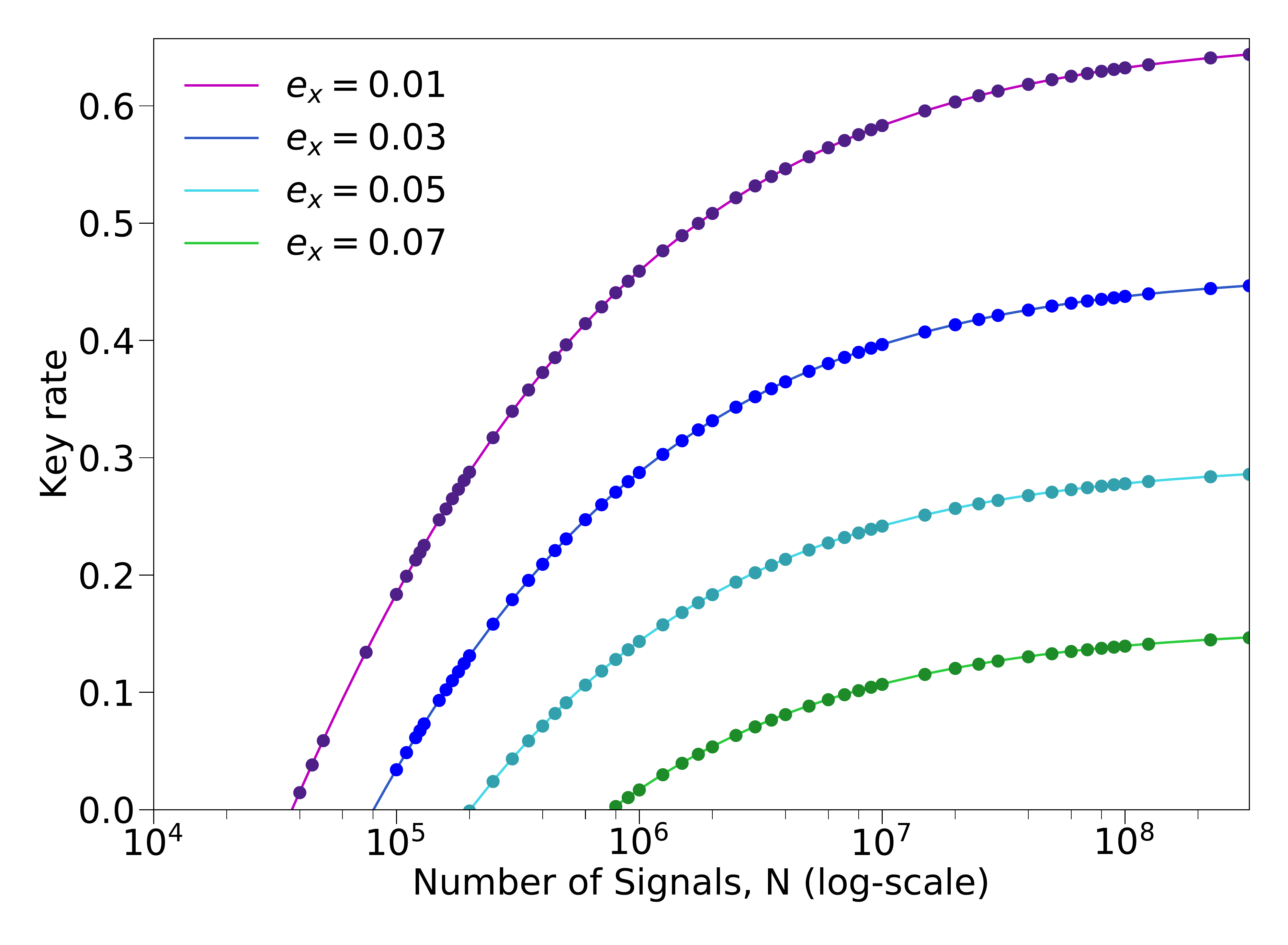}
    \caption{Numerical key rate versus analytic key rate for BB84 for four error rates with $\varepsilon_{\PE} = \bar{\varepsilon} = \varepsilon_{\EC} = \varepsilon_{\PA} = \frac{1}{4}\times 10^{-8}$ so that $\varepsilon = 10^{-8}$. The lines are the theory curves, and the dots are the corresponding solutions by our numerical method. We let $e_{z} = e_{x}$ and $p_z = 0.9$. We assume here that the sample size is still larger than the block length of error correction, which then gives $f_{\EC}$ =1.20.}
    \label{fig:compVTheory}
\end{figure}
\end{subsection}

\begin{subsection}{Rotated BB84 \& POVM Choice}\label{subsec:rotatedBB84}
In this section we explore the effect of fine-grained data versus coarse-grained data on the key rate and the increased importance of the difference in the finite regime. Furthermore we show the advantage of considering multiple coarse-grainings (Eqn. \ref{eq:paramEstSetGeneral}) rather than only one (Eqn. \ref{eq:paramEstSet1}). This in turn shows that a major advantage of our numerical method is the ability to consider multiple coarse-grainings to achieve tight key rates which analytically is not manageable.

In the case of constraining the set of density matrices using a single frequency distribution $F$, there are two competing effects--- the rate at which the variation bound $\mu$ goes to $0$ and the value of the asymptotic key rate. As one can see from Eqn. \ref{eq:mu}, the number of POVM outcomes effects the size of the variation bound $\mu$. This means that more coarse-grained data $F^{C_{k}}$ has a variation bound $\mu_{k}$ that converges to $0$ faster than that of the fine-grained data. It follows that for a case such as in the first example where an element of a coarse-grained probability distribution ($e_{x}$) determines the key rate (Eqn. \ref{eq:phaseErrKeyRate}), the coarse-grained data will lead to a better or equal key rate to the fine-grained data for any amount of signals.

However, we know that if one applies a unitary rotation about the Y-axis on the Bloch sphere to each signal sent to Bob, then the fine-grained statistics will detect the rotation, thereby leaving the key rate unchanged. In contrast, the phase error coarse-grained statistics cannot determine the rotation, thereby decreasing the coarse-grained key rate. As asymptotically the fine-grained key rate is better than the coarse-grained key rate in the event of such a rotation, even with the coarse-grained statistic variation bound converging to zero faster, the fine-grained key rate must be better than the coarse-grained key rate for some number of signals. 

Independent of finite size effects, the idea that some POVMs being robust to rotations has already been recognized in the literature by the invention of the `reference frame independent' and `6-state 4-state' protocols  \cite{laing2010,tannous2019}. The idea is that the information extracted by the POVM determines how robust the protocol is to differences in Alice and Bob's reference frames. This is because the signals sacrificed for the parameter estimation step allow them to in effect align their relevant reference frame  \cite{bartlett2007}. For example, if we had rotated the states about the $X$-axis of the Bloch sphere, not even the fine-graiend data of the BB84 protocol would help, but the six-state protocol, which is tomographically complete, would be robust to this. In this section we present an example of this misalignment in reference frames in BB84 to explore its relation to finite size effects and the advantage of doing parameter estimation with multiple-coarse grainings.

We consider BB84 where we constrain with one or more of the following three conditional probability distributions where for intelligibility we write the corresponding POVM rather than the conditional probability distribution:
\begin{enumerate}
    \item The fine-grained joint POVM constructed by both Alice and Bob having the local POVM:
    \begin{align}\label{eq:fineGrainedBB84}
    \left \{p_{z} \ket{0}\bra{0}, p_z \ket{1}\bra{1}, (1-p_{z}) \ket{+}\bra{+}, (1-p_{z}) \ket{-}\bra{-} \right \}
    \end{align}
This corresponds to applying the identity conditional probability distribution to the fine-grained statistics.
    \item The phase error POVM defined in Eqn. \ref{eq:phaseErrorPOVM}. This corresponds to mapping the frequencies corresponding to Alice and Bob both using the $X$-basis POVM and getting different results to a single outcome and all other fine-grained outcomes to a second.
    \item The \textit{agreement} POVM which simply checks how often Alice and Bob agree:
     $$\{p_z^2 \Pi_{0}, p_z^2 \Pi_{1}, \frac{(1-p_z)^2}{2}\Pi_{+}, \frac{(1-p_z)^2}{2}\Pi_{-}, \Pi_{else} \} $$
     where $\Pi_{a} = \ket{a}\bra{a} \otimes \ket{a}\bra{a}$ and $\Pi_{else}$ is the POVM element that completes the POVM. This corresponds to a conditional probability distribution that retains the statistics pertaining to Alice and Bob getting the same outcome and mapping all other fine-grained outcomes to a single outcome.
\end{enumerate}
To evaluate the resulting key rates, we need to work with simulated observations, as we do not work from actual experimental data. To simulate the observed statistics, we consider a simple noise model for a qubit channel. Alice sends half of the ideal state $\ket{\Phi^{+}} \equiv \frac{1}{\sqrt{2}}(\ket{00}+\ket{11})$ through a channel. The channel is the composition of two channels. The first channel is the depolarizing channel with noise value $q$ defined as:
\begin{align*}
    \Phi_{dp}^{q}(X) &= \sum_{k=0}^{3} p_k \sigma_k (X) \sigma_k
\end{align*}
where $p_0 = 1-\frac{3q}{4}, p_1 = p_2 = p_3 = \frac{q}{4}$ and $\sigma_0 = \bbone_{2}, \sigma_1=\sigma_X, \sigma_2=\sigma_{\text{Y}}, \sigma_3=\sigma_{Z}$ where $\sigma_{X}, \sigma_{\text{Y}}, \sigma_{Z}$ are the Pauli operators. The depolarizing channel induces a qubit error rate of $q$ in the output state. The second channel is a unitary channel that rotates the state about the Y-axis on the Bloch sphere by an angle $\theta$, $\Phi_{U}(X) = e^{i \theta \sigma_{\text{Y}}}Xe^{-i \theta \sigma_{\text{Y}}} $. Alice and Bob then perform measurements on the state $(\mathcal{I}_{A} \otimes (\Phi_{U} \circ \Phi^{q}_{dp}))(\ket{\Phi^{+}}\bra{\Phi^{+}})$ using one of the POVMs previously described to generate the probabilities.

\begin{figure}[h]
    \centering
    \includegraphics[width=\columnwidth]{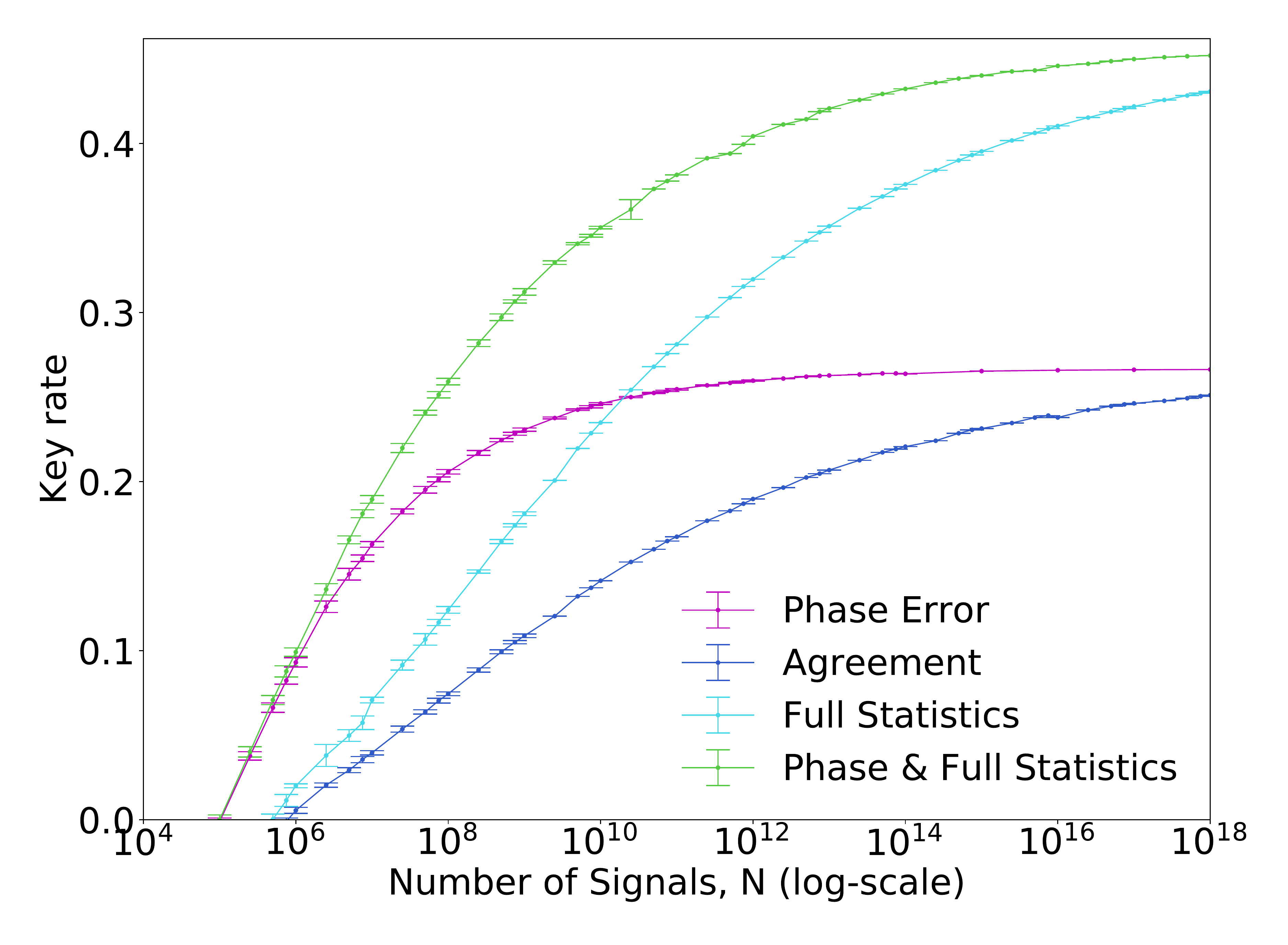}
    \caption{We consider four different parameter estimation constraints for $BB84$ transmitted through a depolarizing channel with $q = 0.02$ when the signal states have been rotated by $12^{\circ}$ about the $Y$-axis on the Bloch sphere. Each point has $p_{z}$ numerically optimized for maximum key rate. The error bars are from checking the key rate for 20 trials of sampling the distribution whenever the number of signals used for parameter estimation was less than $10^{8}$ and calculating the standard deviation. Note that the phase error curve is what is achievable using previous methods for finite key analysis which have restrictive assumptions, whereas the other curves which can improve the key rate significantly in certain regimes are achieved through our numerical method's ability to consider multiple coarse-grainings and POVMs with more than two outcomes. For all curves we let $\varepsilon_{\PE} = \bar{\varepsilon} = \varepsilon_{\EC} = \varepsilon_{\PA} = \frac{1}{4}\times 10^{-8}$.}
    \label{fig:rotation}
\end{figure}   

In Fig. \ref{fig:rotation}, we plot the key rate for all three coarse-grainings individually as well as the key rate when we consider both the phase-error statistics and the fine-grained statistics. To look at this, in Fig. \ref{fig:rotation}, whenever $m \leq 10^{8}$ we construct a frequency distribution by randomly sampling the simulated probability distribution using a pseudo-random function and then calculate the key rate for the protocol with unique acceptance which accepts on that frequency distribution. To see how much the key rate fluctuates when sampling $m$ times depending on the frequency distribution Alice and Bob accept, we chose to repeat the simulation 20 times to determine the average key rate and standard deviation of the protocol with unique acceptance with all other parameters fixed. The standard deviation is represented by the error bars in Fig. \ref{fig:rotation}. Furthermore, to make the comparison between the different POVMs fair, we optimize the choice of $p_{z}$ at each point by maximizing the average key rate over $p_{z}$ given that specific value of $N$. As in the previous example, we let $m = (1-p_{z})^2 N$ and assume they do the key map only in the $Z$-basis. Lastly, the (observed) error correction cost for all four key rates is $f_{\EC} H(\text{X}|\text{Y}) = f_{\EC} h(\bar{e}_{z})$ where $f_{\EC} = 1.2$ and $\bar{e}_{z}$ is the bit error frequency determined by the fine-grained statistics in the key-generation basis Z.

Given Fig. \ref{fig:rotation}, we now see how in some regime coarse-graining does better than fine-grained data in some regime due to the coarse-grained variational bound $\mu_{k}$ converging to zero faster, but is ultimately worse as $N$ increases because asymptotically the fine-grained data provides a better key rate. We also see that considering both frequency distributions improves the key rate for all $N$. This is because whatever density matrix satisfies both sets of constraints has the phase error lower than just the fine-grained data and the unitary is `undone' to a greater degree than just the phase error coarse-grained data. For this reason in the finite regime it will only be beneficial to always optimize over the fine-grained data as well as relevant coarse-grainings. The ability for our solver to do this regardless of the number of outcomes is one property which makes our solver truly general.

\end{subsection}

\begin{subsection}{MDI-BB84 with Qubits}\label{subsec:MDIBB84}
In this section we show that our numerical method can be extended to MDI-QKD protocols which are designed to be immune to side-channel attacks on measurement devices \cite{lo2012}. Specifically we consider MDI-BB84 with perfect single photon sources in which Alice and Bob both send BB84 states to an untrusted third party Charlie who performs Bell state measurements on the two signals. Charlie then announces on which signals his measurement was successful as well as the outcome. Alice and Bob then do sifting on this subset and finally construct the key. The primary extension for finite key is that in MDI-QKD there is a third party. This means that there is a joint probability distribution over three alphabets and a joint POVM over three parties. This however is an immediate extension as parameter estimation can be defined for tripartite states and the third party in MDI QKD is a classical announcement and so does not effect Alice and Bob's fine-grained data.

To simulate data for the protocol, we apply source-replacement to both Alice's and Bob's signal states resulting in a state $\rho_{ABA'B'}$. In our calculation, we assume the setup is using linear optics, so Charlie can only discriminate unambiguously two of the Bell state measurements, $\Psi_{+}$ and $\Psi_{-}$ where $\Psi_{\pm} \equiv \frac{1}{\sqrt{2}}(\ket{01} \pm \ket{10})$. For simulating the statistics, we consider that the signal portions of the states, $A'$ and $B'$, each go through a separate depolarizing channel $\Phi_{dp}^{q}$ as they are sent to Charlie. Lastly, we assume Alice and Bob only do the key map in the $Z$-basis for simplicity. In Fig. \ref{fig:MDIBB84} we consider MDI-BB84 with $p_z =0.5$ for two depolarizing parameter values to see the rate of converging to the asymptotic key rate as a simple example
\end{subsection}

\begin{figure}[h]
    \centering
    \includegraphics[width=0.9\columnwidth]{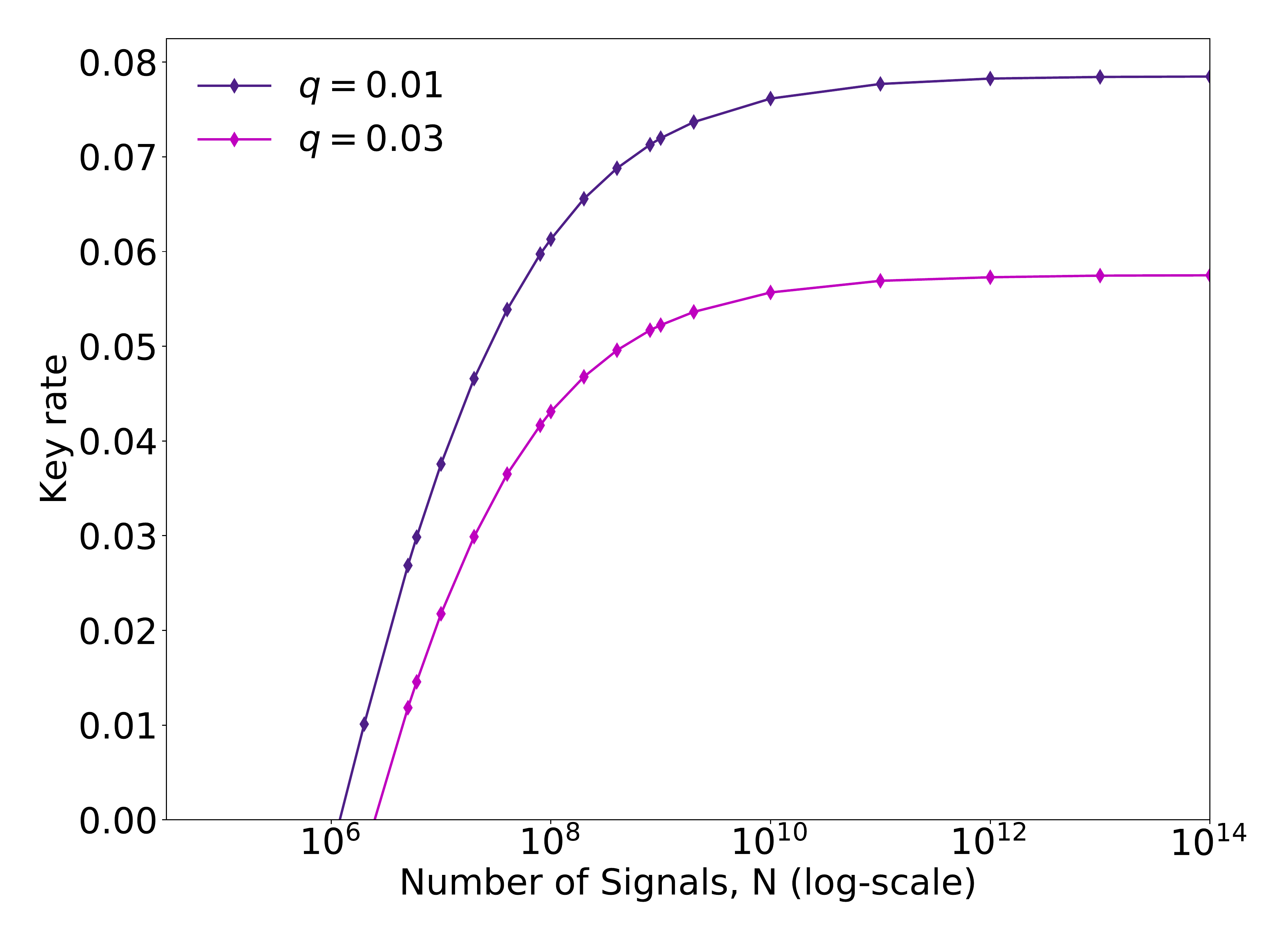}
    \caption{Here we see the MDI-BB84 protocol with unique acceptance converging to its asymptotic value as the number of signals is increased for depolarizing channels with depolarizing parameter values $q = 0.01$ and $q = 0.03$. For all curves, the security is defined by $\varepsilon_{\PE} = \bar{\varepsilon} = \varepsilon_{\EC} = \varepsilon_{\PA} = \frac{1}{4}\times 10^{-8}$.}
    \label{fig:MDIBB84}
\end{figure}

\begin{subsection}{\prot}\label{subsec:DPRBB84}
We next apply our method to optical implementation of QKD protocols with weak coherent pulses. Since each state that Bob receives is an optical mode and is in principle manipulated by Eve, a full description of the POVM usually involves an infinite-dimensional Hilbert space (e.g. Fock space). This also means that the density operator $\rho_{AB}$ in our optimization problem is infinite-dimensional such that no numerical optimization algorithm can solve the problem directly. Fortunately, for many discrete-variable QKD protocols, there exists a squashing model \cite{beaudry08a,moroder10a,Zhang2020} that reduces the apparent infinite-dimensional representation to an effective finite-dimensional subspace representation. This shows that our numerical method applies even for optical implementations so long as they can be represented in finite-dimensional Hilbert spaces. Here, we present our finite key analysis for the discrete-phase-randomized BB84 protocol \cite{cao2015}, which is based on phase-encoding and has a squashing model \cite{beaudry08a}.

We consider the following simple model for determining the statistics. 
\begin{figure}[h]
    \centering
    \includegraphics[width =0.9\columnwidth]{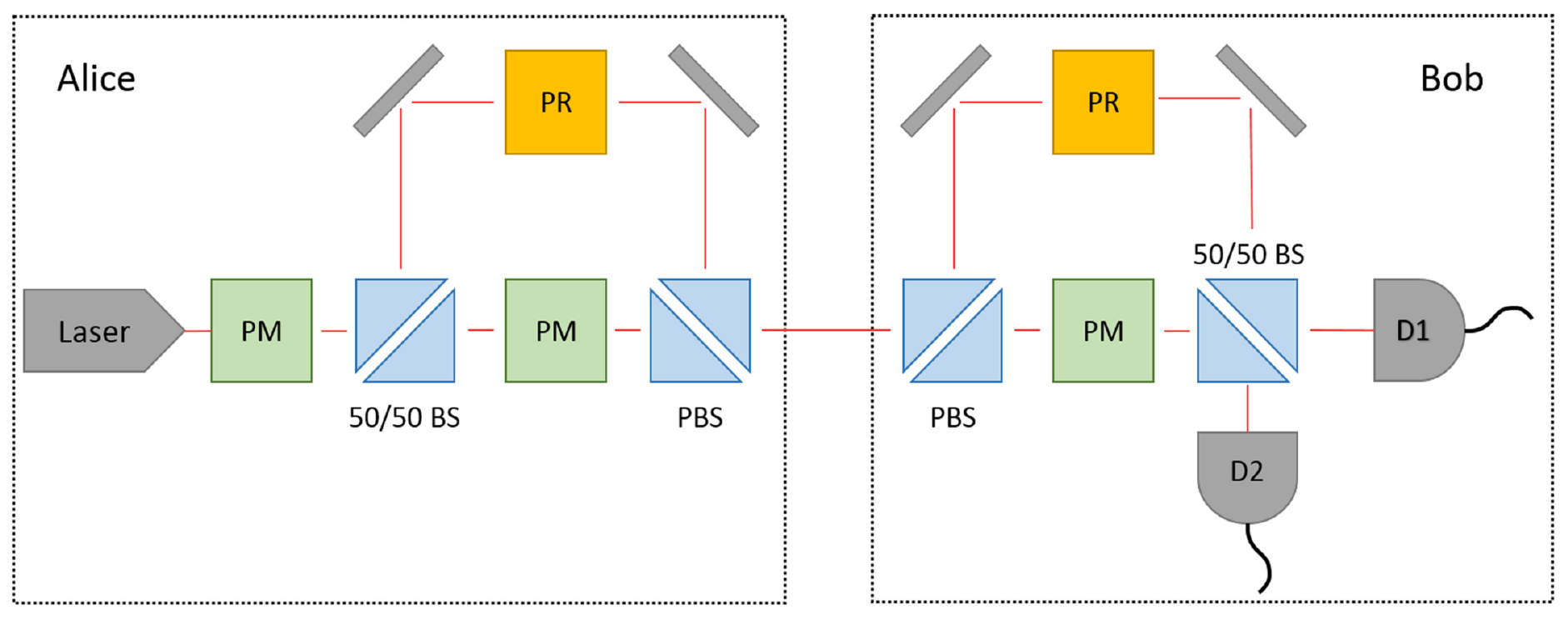}
    \caption{Schematic for discrete-phase-randomized BB84. PM stands for phase modulator, PBS stands for polarizing beam splitter, BS stands for beam splitter, PR stands for polarization rotator, and D1 and D2 are two threshold detectors.}
    \label{fig:protocol-diagram}
\end{figure}

As depicted in Fig. \ref{fig:protocol-diagram}, the quantum part of the protocol is 
\begin{enumerate}
    \item Alice sends two-mode coherent states $\ket{\sqrt{\nu} e^{i\theta}}_r \ket{\sqrt{\nu} e^{i(\theta + \phi_{A})}}_s$, to Bob where the first mode is the reference pulse and the second mode is the signal pulse. The global phase $\theta$ is chosen at random from the set $\{\frac{2\pi k}{c}: k=0, \dots, c-1\}$ where $c$ is the number of different global phases. The key information is encoded in the relative phase $\phi_A$ chosen from the Z basis $\{0, \pi\}$ or X basis $\{\frac{\pi}{2}, \frac{3\pi}{2}\}$.
    \item After receiving states from Alice, Bob may choose to measure in one of the two basis by applying a relative phase $\phi_B \in \{0, \frac{\pi}{2}\}$ to the reference pulse, where $\phi_B = 0$ corresponds to Z basis and $\phi_B = \frac{\pi}{2}$ to X basis. This results in either one, none, or both of Bob's detectors clicking. In the case where both detectors click, Bob assigns the result to either just detector 1 clicking or just detector 2 clicking.
\end{enumerate}

We remark that the protocol with c=1, in which case Alice does not randomize the global phase, is also studied in \cite{lo2006,lin17}.

For our simulation, we consider a lossy channel parameterized by the single-photon transmittance $\eta = 10^{-\alpha_{att} L /10}$ for a distance $L$ (in kilometers) between Alice and Bob. We also introduce a channel noise parameterized by $\zeta$, which describes the relative phase drift between the signal pulse and the reference pulse. In addition, imperfection of Bob's detectors is taken into account by the dark count probability $p_d$ and the dectector efficiency $\eta_d$. To obtain simulated statistics, we choose $\eta_{d} = 0.045$, $p_d = 8.5 \times 10^{-7}$, and let the attenuation coefficient be $\alpha_{att} = 0.2$ dB/km, from the experimental parameters reported in \cite{gobby04a}. We also set $\zeta = 11^{\circ}$, which produces a misalignment error of 1$\%$ at 0 km distance and let $f_{EC} = 1.16$ as was done in \cite{lin17}.

Under the squashing model and source-replacement scheme, the fine-grained statistics for this protocol are generated by a $20c$-outcome joint POVM constructed by Alice and Bob's local POVMs where Alice has $4c$ POVM elements which are projectors on to her $4c$ possible signal states and Bob has a 5-outcome POVM defined as:
\begin{align*}
    \{ & 1/2 \ket{0}\bra{0} \oplus 0, 1/2 \ket{1}\bra{1} \oplus 0, \\
     & 1/2 \ket{+}\bra{+} \oplus 0, 1/2 \ket{-}\bra{-} \oplus 0, \ket{\text{vac}}\bra{\text{vac}} \}
\end{align*}
In other words, Bob's local POVM is the standard fine-grained local BB84 POVM (Eqn. \ref{eq:fineGrainedBB84} with $p_{z} = 1/2$) embedded in a three-dimensional space plus a projector onto the third dimension where the third dimension is the vacuum state and $\ket{\text{vac}}$ denotes the basis of the third dimension.

We take $L = 100$ km and $L = 20$ km and consider both $c=1$ and $c=2$ scenarios as an example to show the method works for multiple discrete phases and loss regimes. In this model the dark counts are the primary source of error. In generating this plot, to improve the key rate when less signals are sent, we optimize the fraction of signals that would be used for parameter estimation, which we denote $g_{\PE} \equiv m/N$, heuristically. The fraction is determined as follows:
\begin{align*}
    g^{L=20km}_{\PE} =  
    \begin{cases}
        0.99 & N < 1.31 \times 10^{11} \\
        \frac{1.1 \times 10^{11}}{N} + (0.5)^{\log_{10}(N)/4} & \text{else} \\
    \end{cases}
\end{align*}
\begin{align*}
    g^{L=100km}_{\PE} =  
    \begin{cases}
        0.99 & N < 2.75 \times 10^{14} \\
        \frac{2.35 \times 10^{14}}{N} + (0.5)^{\log_{10}(N)/5} & \text{else} \\
    \end{cases}
\end{align*}
The first term of line 2 of each $g_{\PE}$ was determined by numerically determining for how many signals the key rate could be made positive for $c=1$. The extra term was decided so as to sacrifice a smaller fraction to parameter estimation as $N$ grows so that the key rate is improved.

\begin{figure}
    \centering
    \includegraphics[width = 0.9 \columnwidth]{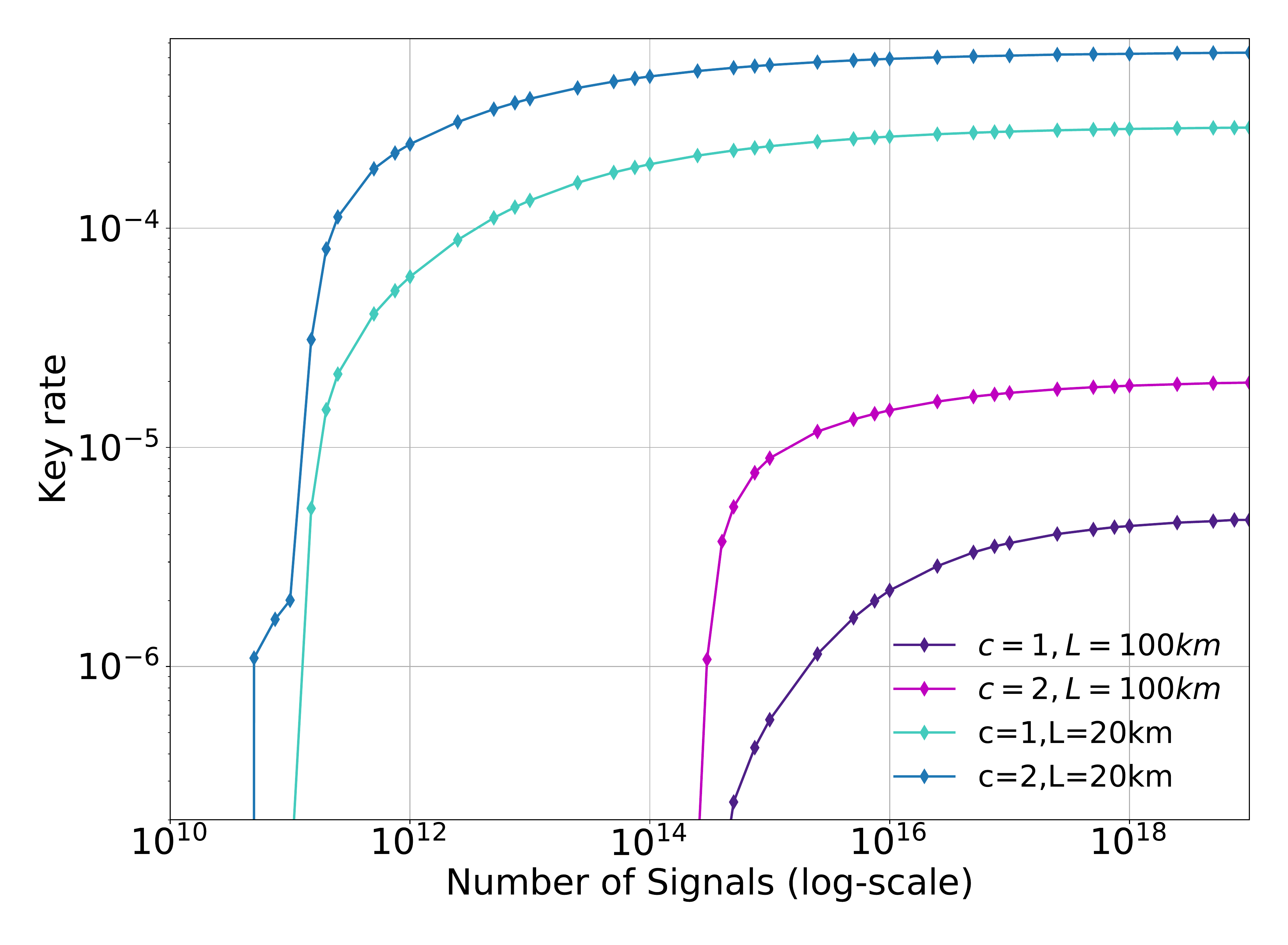}
    \caption{Key rate of discrete-phase-randomized BB84 with unique acceptance when not randomizing the global phase ($c=1$) and randomizing it over 2 choices ($c=2$). Every point is for optimized coherent state intensity $\nu$. For all curves, the security is defined by $\varepsilon_{\PE} = \bar{\varepsilon} = \varepsilon_{\EC} = \varepsilon_{\PA} = \frac{1}{4}\times 10^{-8}$. For this protocol we let $f_{\EC} = 1.16$.}
    \label{fig:protFig}
\end{figure}

We notice that with our simulation parameters, at $L=100$ km considered in Fig. \ref{fig:protFig}, a significant amount of signals needs to be sent before the key rate becomes nonzero. The reason is that at $L = 100$ km, the probability of the outcomes that will lead to key generation is quite low, at the order $10^{-6}$ in the $c=1$ case. It follows that if the variation bound $\mu$ is of an order greater than $10^{-6}$, there exists a probability distribution $P$ such that $\|P-F\|_{1} \leq \mu$ and $P$ corresponds to a density matrix that lacks sufficient correlation for any key to be distilled. Therefore one needs to sacrifice enough signals to parameter estimation such that the variation bound $\mu$ is sufficiently small with respect to the portion of the frequency distribution relevant to key distillation. 

\end{subsection}

\begin{subsection}{Security of BB84 with Practical Acceptance Probability}

\begin{figure}[h]

\subfloat[]{\label{fig:fullSecPhaseSet}\includegraphics[width=\linewidth]{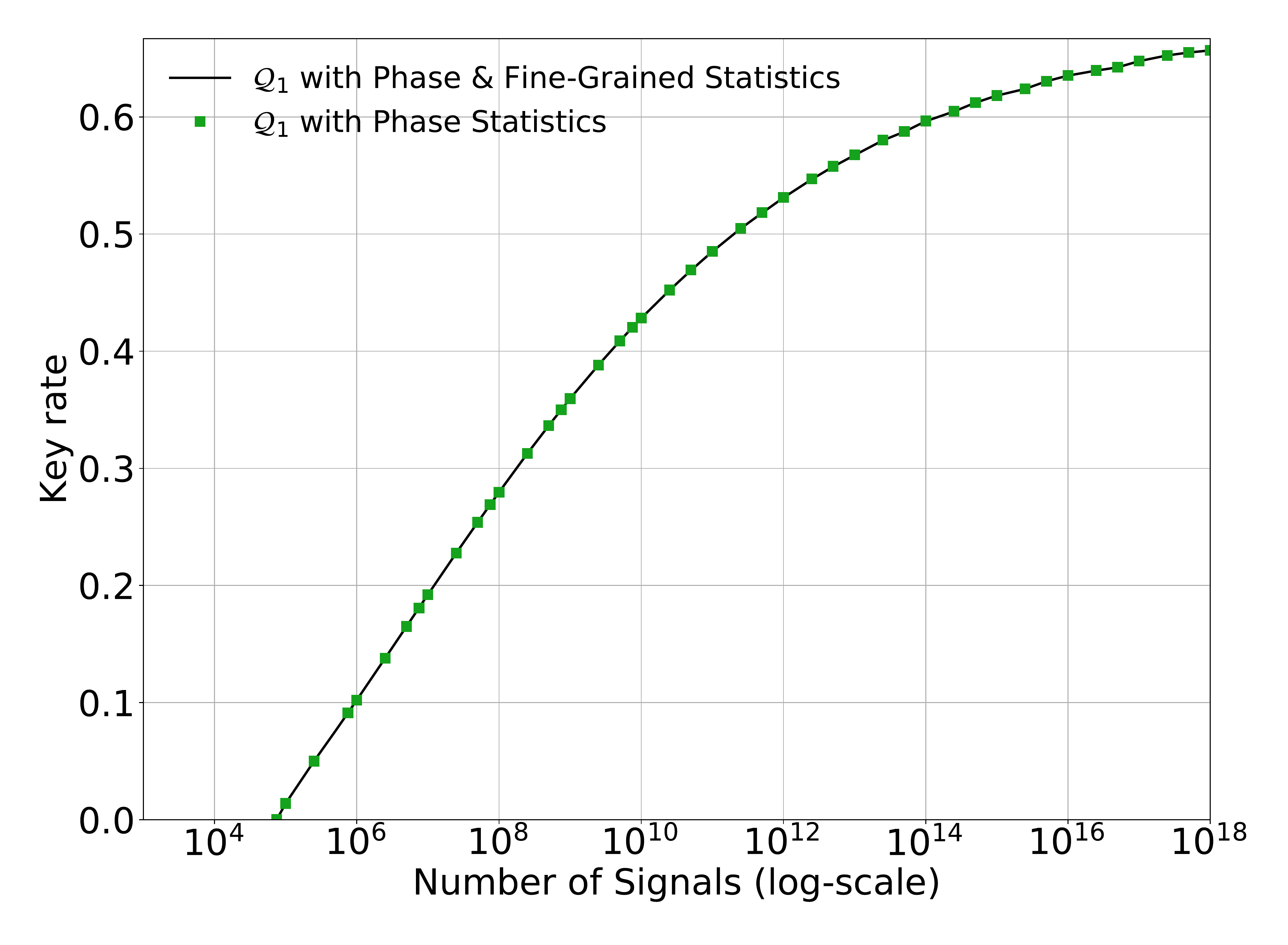}}\\

\subfloat[]{\label{fig:fullSecRotatedSet}\includegraphics[width=\linewidth]{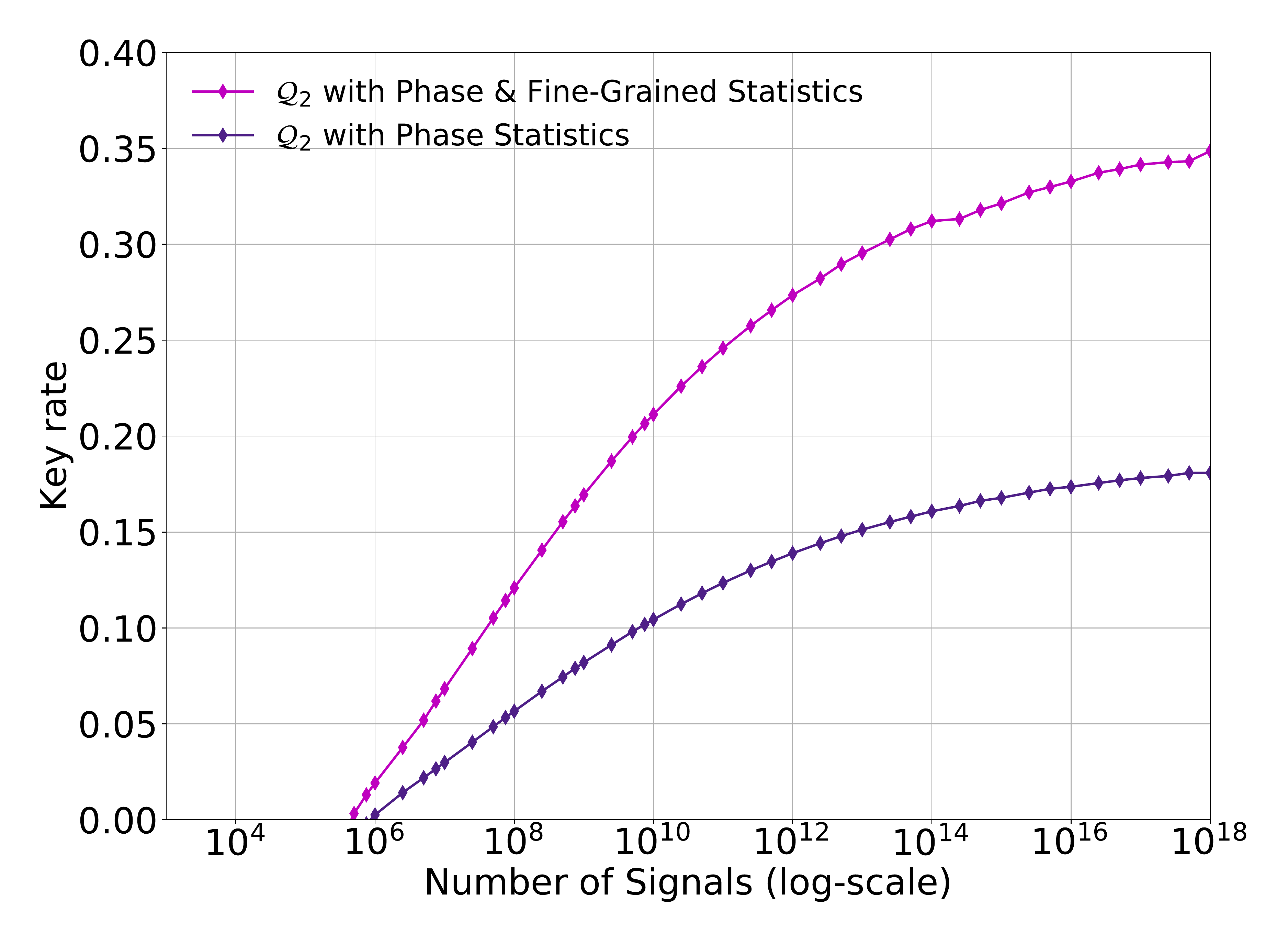}}

\caption{\label{fig:fullSecTotalFig} (a) Key rate of the BB84 protocol for accepting statistics in $\mathcal{Q}_{1}$ where either just the phase statistics or both the phase statistics and the fine-grained statistics are used to determine the key rate. We see in this case the fine-grained data does not help for this protocol. (b) Key rate of the BB84 protocol for accepting observed statistics in $\mathcal{Q}_{2}$ where either just the phase statistics or both the phase statistics and the fine-grained statistics are used to determine the key rate. We see in this case the fine-grained data help for this protocol, and so an analytical key rate calculation is difficult.
For both (a) and (b), each point $p_{z}$ is optimized and the security is defined by $\varepsilon_{\PE} = \bar{\varepsilon} = \varepsilon_{\EC} = \varepsilon_{\PA} = \frac{1}{4}\times 10^{-8}$.}

\end{figure}

So far we have only presented protocols with unique acceptance. However, protocols with unique acceptance are impractical as the probability that an experiment yields the exact frequency distribution of outcomes that match the acceptance criteria is usually very low. Thus one introduces a range of accepted statistics, where the key rate is now to be taken over the worst case scenario of the accepted statistics. Therefore, there is a trade-off between how often one aborts, and the length of the secret key generated when the protocol does not abort. In some cases, especially where the accepted statistics is only based on one observable, such as an error rate, and the key rate has some monotonic behaviour, it is easy to identify the worst-case acceptable statistics. In these cases one can relate the case of a set of accepted statistics back to the case of a single accepted statistics, namely the identified worst case statistics. However, in cases where the observed statistics needed for determining the key rate of the protocol are more complex, it is often not as simple to identify the worst case statistics. In these scenarios, our numerical method is a powerful tool for determining a tight lower bound of the secret key rate. Here we present an example of determining the secure key rate for single-photon BB84 in the practical setting where multiple frequency distributions are accepted by Alice and Bob to show how our numerical approach may help.

 We again return to the case where Alice and Bob perform BB84 where they choose some probability $p_{z}$ to send signals in the $Z$-basis. We consider two sets of frequency distributions to accept corresponding to whether their protocol has ideal behaviour or is suffering from misalignment due to the quantum channel. Following the notation in Eqn. \ref{eq:SmuMainText}, the first set, $\mathcal{Q}_{1}$, is defined by letting $\overline{\mathcal{N}}(\overline{F})$ be the two-outcome frequency distribution of `phase error' and `no phase error' with no observed phase errors ($e_{x} = 0$). We refer to $\mathcal{Q}_{1}$ as the phase set. The second set, $\mathcal{Q}_{2}$, is defined by letting $\overline{\mathcal{N}}(\overline{F}) = \overline{F}$ be the asymptotic results of the fine-grained statistics given the model from Section \ref{subsec:rotatedBB84}. We refer to $\mathcal{Q}_{2}$ as the rotated set. In both cases, the variation threshold, $t$, is $2 (1 - p_{z})^{2} \overline{e}_{x}$ where $\overline{e}_{x}$ is the maximum tolerated observed error from $\overline{F}$. For this example we let $\overline{e}_{x} = 0.02$. The factor of $(1-p_{z})^{2}$ is so that the variation threshold stays the same as $p_{z}$ is varied to optimize the key rate. 

Given the definition of the phase set, $\mathcal{Q}_{1}$, the key rate can be determined analytically as one can replace $e_{x}$ in Eqn. \ref{eq:phaseErrKeyRate} by $\overline{e}_{x}$. Furthermore, as no data more fine-grained than the phase error is needed in this case, it is clear that the multiple coarse-grainings will not further improve the key rate. These observations are verified numerically in Fig. \ref{fig:fullSecPhaseSet}. However, in the case where the observed statistics would be contained in $\mathcal{Q}_{2}$ rather than in $\mathcal{Q}_{1}$, an analytical tight lower bound of the key rate is not a reasonable task as the structure of the worst case scenario is no longer simple. This is seen in Fig. \ref{fig:fullSecRotatedSet}, where our numerical result shows that multiple coarse-grainings helps to obtain a tighter key rate when $\mathcal{Q}_{2}$ is used. It follows that obtaining a tight key rate analytically would be difficult as one needs to utilize both fine-grained statistics and multiple coarse-grainings. 

More generally, this tells us the optimal choice of $\mathcal{Q}$ in certain implementations may be difficult due to issues such as misalignment errors. In such cases, even in the honest implementation, the statistics one ought to accept are fine-grained data that, because of complications, lack certain symmetries in Alice and Bob's results. This in turn limits one's a priori knowledge of what form the worst-case scenario observed statistics will take. This is further aggravated by the trade-off between how often the protocol will be aborted and the length of the secret key when the protocol does not abort. For these reasons, constructing a good choice of $\mathcal{Q}$ is a non-trivial task due to common issues in implementing QKD protocols. As it is designed for generic protocols, our numerical method allows for further exploration of these difficulties which cannot be explored analytically.
\end{subsection}

\end{section}

\begin{section}{Conclusion}\label{sec:conclusion}
In the utilization of QKD protocols for our future quantum-safe infrastructure, it is crucial that we can analyze general QKD protocols' ability to generate composable secret keys in the finite regime. Much work has already been done on both the framework of finite key analysis \cite{renner05,scarani08b,bratzik2011} as well as its analysis for specific protocols using both theory and numerics \cite{scarani2008,scarani09a,curty2014,lim2014,bunandar2019}. However, there has not existed a tool which can determine the finite key rate for any QKD protocol which can be represented in finite-dimensional Hilbert spaces. The contribution of this work has been to construct such a tool with the further properties that it always determines a secure secret key rate (reliability) and can in principle exactly determine the secret key rate under the security proof method presented in \cite{renner05}. 

We note that the tightness property of our solver is only up to the security proof method of \cite{renner05} where the smooth min-entropy is bounded by the conditional von Neumann entropy. However, it was shown in \cite{bratzik2011} that in some regimes bounding the smooth min-entropy by the min-entropy can improve the key rate. This method has also been implemented for a subclass of protocols numerically in \cite{bunandar2019}. Therefore, our claim of tightness is up to the assumption above, although it is easy to see one can unify our general framework with the min-entropy calculation presented in \cite{bunandar2019} and recover the tightness property up to this alternative choice. 

Furthermore, we note that it is not only easy to unify, but necessary for the application of the numerical method to general QKD protocols and obtain tightness within the proof method. This is the case because our proof of being able to consider multiple coarse-grainings at no cost in security parameter $\varepsilon_{\PE}$ and our introduction of the trace norm to handle multiple outcome POVMs without looseness is in some cases necessary to guarantee tight results. Furthermore, our method derives its practicality in implementations not only from its tightness, but from the ability to consider the acceptance set, $\mathcal{Q}$. As none of these tools are presented in \cite{bunandar2019}, it would lead to loose key rates for QKD protocols  with asymmetric observations as we saw in Section \ref{subsec:rotatedBB84} as well as not being applicable for practical implementations as it is designed only to consider protocols with unique acceptance. Therefore, this unifiication is necessary for general protocols.

Beyond the construction of a generic numerical framework for finite key analysis, we note that Theorem \ref{thm:multCoarseGrain} in this paper resolves an issue about this security framework for finite key analysis. If one were to define the security using only one set of statistics, as coarse-graining can be better than fine-grained data, it would follow that there exist cases in which Alice and Bob throwing out information is an advantage against Eve. This would be counter-intuitive. However, we see that the security definition actually would allow Alice and Bob to keep both versions of the data, and thus the `true' finite key rate can be seen as constraining over all possible coarse-grainings which utilizes all possible data from the experiment. The consideration of the rotated BB84 case exemplifies this idea.

Having presented a general numeric framework for finite key analysis which improves upon the pre-existing framework \cite{renner05}, we note two paths of research going forward. The first path is the application of this method to decoy state QKD protocols in the numerical framework. As previously discussed in \cite{coles2016}, for a discrete phase randomized source, or if one approximates continuous phase randomization by discrete phase randomization, one would simply consider a signal state for each intensity. In principle this could be immediately implemented following the numerical method used for the numerical analysis in Section \ref{subsec:DPRBB84}, but this will lead to large demands on the memory of the computer. Therefore, a better alternative approach for continuous phase randomization would be to consider `tagging' in which one fixes a photon number cutoff and treats multi-photon components above this cutoff as orthogonal states given to Eve. This block-diagonal structure can improve the cost on memory, but would require calculating the statistical fluctuations on the individual blocks.

The second path for future research follows from noting that this generic method requires that one considers probability distributions, but in CV-QKD one often is interested in a form of coarse-graining which leads to expectation values of observables rather than a probability distribution. One would hope there exists a proof method within the same security definitions which bounds the expectation values of these specific observables, even though they are not constructed using a conditional probability distribution applied to the initial fine-grained statistics. \\

\textit{Note added:} During the preparation of this manuscript, we noticed a similar work \cite{bunandar2019} is posted in the preprint server. Our ideas were conceived independently and we have presented many of our main results in a conference \cite{george19}. We point out that our work is different from Ref. \cite{bunandar2019} in that it considers an entry-wise bound on the trace norm for the variational bound and ignores the acceptance set $\mathcal{Q}$ altogether. This entry-wise bound introduces looseness when one considers fine-grained statistics and the latter limits it primarily to impractical QKD implementations.
\end{section}

\begin{acknowledgements}
I.G. would like to thank Jamie Sikora for fascinating discussions on semidefinite programming. The work has been performed at the Institute for Quantum Computing, University of Waterloo, which is supported by Innovation, Science and Economic Development Canada. The research has been supported by Natural Sciences and Engineering Research Council of Canada under the Discovery Grants Program, Grant No. 341495, and under the Collaborative Research and Development Program, Grant No. CRDP J 522308-17. Financial support for this work has been partially provided by Huawei Technologies Canada Co., Ltd.
\end{acknowledgements}

\onecolumngrid
\appendix

 \begin{section}{Numerical Methods Proofs}\label{appendix:NumMethods}
 In this Appendix we present the derivations and proofs for the finite key numerical method in detail.

\subsection{Notation}
 We begin with a brief explanation of notations used in this Appendix. 
For some arbitrary finite-dimensional Hilbert spaces $\mathcal{X}$ and $\mathcal{Y}$, $\Lin(\mathcal{X})$ denotes the set of linear maps from $\mathcal{X}$ to itself, $\Herm(\mathcal{X}) \subseteq \Lin(\mathcal{X})$ denotes the set of Hermitian operators acting on $\mathcal{X}$, $\Pos(\mathcal{X}) \subseteq \Herm(\mathcal{X})$ denotes the set of positive semidefinite operators, and $\mathrm{T}(\mathcal{X},\mathcal{Y})$ denotes the set of linear maps that map $\Lin(\mathcal{X})$ to $\Lin(\mathcal{Y})$. We use uppercase letters like $A$ and $B$ to denote matrices and lowercase letters like $z$ to denote complex (or real) numbers. For a vector $\vec{v}$, its $j$-th entry is denoted by $v(j)$. As already used in the main text, the inner product on $\Lin(\mathcal{X})$ is the Hilbert-Schmidt inner product, that is, $\langle A, B \rangle = \Tr(A^{\dagger}B)$ for $A, B \in \Lin(\mathcal{X})$. The norm $\norm{\cdot}_{\text{HS}}$ is the norm induced by the Hilbert-Schmidt inner product. For a Hermitian matrix $H$, let $\lambda_{\text{min}}(H)$ denote the minimum eigenvalue of $H$.

To ease the writing of matrices in block form, we introduce the following two shorthand notations: We write $\text{diag}(A_1, A_2)$ for the block-diagonal matrix $
\begin{pmatrix}
A_1 & 0 \\
0 & A_2
\end{pmatrix}$ where $A_1$ and $A_2$ are two square matrices (with possibly different sizes);  we write $\widetilde{\text{diag}}(A_1, A_2)$ for the matrix $\begin{pmatrix}
A_1 & B_1 \\
B_2 & A_2
\end{pmatrix}$ whose off-diagonal blocks are irrelevant for our discussion, where $B_1$ and $B_2$ are some arbitrary matrices of appropriate sizes. These two notations are generalized to a finite number of (at least two) square matrices.

For an arbitrary square matrix $A \in \Lin(\mathcal{X})$, $\text{diag}(A)$ denotes the vector whose entries are given by diagonal entries of $A$. For a vector $\vec{z}$, $\text{diag}(\vec{z})$ denotes the diagonal matrix whose diagonal entries are given by $\vec{z}$.

For any conditional probability distribution, $p_{\Lambda|\Sigma}$, which would be applied to a probability distribution $p_{\Sigma}$, there exists a completely-positive trace-preserving (CPTP) map representation $\mathcal{N}$ such that $\text{diag}(p_{\Lambda|\Sigma}p_{\Sigma}) = \mathcal{N}(\text{diag}(p_{\Sigma}))$ \cite{Wilde2011}. Explicitly, $\mathcal{N}(X) = \sum_{x \in \Sigma, y \in \Lambda} p(y|x)\ket{y}\bra{x}X\ket{x}\bra{y}$ and a straightforward calculation determines that the adjoint map is $\mathcal{N}^{\dagger}(Y) = \sum_{x \in \Sigma, y \in \Lambda} p(y|x) \bra{y}Y\ket{y}\dyad{x}{x}$. This will be useful in defining the SDP which involves processing probability distributions. For this reason in what follows we never define conditional probability distributions explicitly, but simply the corresponding CPTP map.

\subsection{Semidefinite Program Background}

 We give a short review the standard form of a semidefinite program and related concepts that will be useful in our proofs. \\

\textit{Definition:} \cite{watrous2018} Let $\Psi \in \mathrm{T}(\mathcal{X},\mathcal{Y})$ be a Hermitian-preserving map, $A \in \Herm(\mathcal{X})$, and $B \in \Herm(\mathcal{Y})$. A semidefinite program is a triple $(\Psi,A,B)$, with the following associated optimization problems:
\begin{equation}\label{eq:primaldef}
        \begin{aligned}
                & {\text{minimize}} & & \langle A, X \rangle \\
                & \text{subject to} & & \Psi(X) = B \\
                &                   & &  X \in \Pos(\mathcal{X})
        \end{aligned}
\end{equation}
\begin{equation}\label{eq:dualdef}
        \begin{aligned}
                & {\text{maximize}} & & \langle B, Y \rangle \\
                & \text{subject to} & & \Psi^{\dagger}(Y) \preceq A  \\
                &                   & &  Y \in \Herm(\mathcal{Y})
        \end{aligned}
\end{equation}
where $\Psi^{\dagger}$ is the adjoint map of $\Psi$; that is, $\Psi^{\dagger}$ is the unique linear map that satisfies the adjoint equation $\langle Y, \Psi(X) \rangle = \langle \Psi^{\dagger}(Y), X \rangle$ for every $X \in \Lin(\mathcal{X})$ and $Y \in \Lin(\mathcal{Y})$. Eqn. (\ref{eq:primaldef}) is referred to as the \textit{primal problem} and Eqn. (\ref{eq:dualdef}) is referred to as the \textit{dual problem}. \\

We define $\mathcal{A} = \{X \in \Pos(\mathcal{X}) | \Psi(X) = B\}$ and $\mathcal{B} = \{Y \in \Herm(\mathcal{Y}) | \Psi^{\dagger}(Y) \preceq A\}$. These sets are referred to as the \textit{feasible set} of the primal problem and dual problem, respectively. By \textit{weak duality}, for all semidefinite programs, the optimal value of the primal problem, denoted by $\alpha$, is always greater than or equal to the optimal value to the dual problem, denoted by $\beta$. If a semidefinite program has that $\alpha = \beta$, it is said to have \textit{strong duality}. A sufficient condition to show strong duality for SDP is Slater's condition for the standard form presented here.
\begin{theorem}{\textit{(Slater's Condition)}}
For a semidefinite program $(\Psi,A,B)$, if $\mathcal{A} \neq \emptyset$ and there exists a Hermitian operator $Y$ which \textit{strictly} satisfies the dual problem, that is, $\Psi^{\dagger}(Y) \prec A$, then $\alpha = \beta$ and the optimal value is obtained in the primal problem.
\end{theorem}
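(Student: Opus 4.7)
The plan is to establish both conclusions of Slater's condition—strong duality $\alpha = \beta$ and attainment of the primal optimum—through a single use of strict dual feasibility. Weak duality, which the excerpt has already noted, supplies $\alpha \geq \beta$, so the real content is the reverse inequality together with the existence of a primal minimizer.

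First I would fix a strictly dual feasible $Y_0$ with $A - \Psi^{\dagger}(Y_0) \succ 0$ and set $c_0 := \lambda_{\text{min}}(A - \Psi^{\dagger}(Y_0)) > 0$. For every $X \in \Pos(\mathcal{X})$,
\begin{equation*}
\langle A, X \rangle = \langle A - \Psi^{\dagger}(Y_0), X\rangle + \langle Y_0, \Psi(X)\rangle \geq c_0 \Tr(X) + \langle Y_0, \Psi(X)\rangle.
\end{equation*}
On the feasible set $\mathcal{A}$ this yields $\Tr(X) \leq (\langle A, X\rangle - \langle Y_0, B\rangle)/c_0$, so sublevel sets of the objective restricted to $\mathcal{A}$ are bounded, hence compact in the finite-dimensional space $\Herm(\mathcal{X})$; continuity of $\langle A, \cdot\rangle$ and non-emptiness of $\mathcal{A}$ then produce a minimizer $X^\star$ with $\langle A, X^\star\rangle = \alpha$. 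The same inequality also implies that the convex set
\begin{equation*}
C := \{(\Psi(X), \langle A, X\rangle + s) : X \in \Pos(\mathcal{X}),\, s \geq 0\}
\end{equation*}
is closed in $\Herm(\mathcal{Y}) \times \mathbb{R}$: any convergent sequence in $C$ has $\Psi(X_n)$ bounded, hence $\Tr(X_n)$ bounded, so a subsequence of $X_n$ converges to some $X_\infty \succeq 0$ and $s_n$ to some $s_\infty \geq 0$, giving a limit in $C$.

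With $C$ closed, strong duality drops out of a standard separating hyperplane argument. For any $t < \alpha$ the point $(B, t)$ is not in $C$, so there exist a nonzero $(Z, \lambda) \in \Herm(\mathcal{Y}) \times \mathbb{R}$ and a scalar $c$ with $\langle Z, B\rangle + \lambda t < c \leq \langle Z, \Psi(X)\rangle + \lambda(\langle A, X\rangle + s)$ for all $X \succeq 0$ and $s \geq 0$. Letting $s \to \infty$ forces $\lambda \geq 0$ and scaling $X$ by arbitrary positive reals forces $\Psi^{\dagger}(Z) + \lambda A \succeq 0$ together with $c \leq 0$. The main obstacle is the degenerate case $\lambda = 0$: here $\Psi^{\dagger}(Z) \succeq 0$, so for any feasible $\widetilde X \in \mathcal{A}$ one obtains $\langle Z, B\rangle = \langle \Psi^{\dagger}(Z), \widetilde X\rangle \geq 0$, contradicting $\langle Z, B\rangle < c \leq 0$. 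Hence $\lambda > 0$, and $Y := -Z/\lambda$ satisfies $\Psi^{\dagger}(Y) \preceq A$ with $\langle Y, B\rangle > t$, giving $\beta \geq t$; since $t < \alpha$ was arbitrary, $\beta \geq \alpha$ and the proof is complete. Conceptually the genuinely hard step is the closure of $C$, which is exactly what strict dual feasibility was introduced to guarantee; ruling out $\lambda = 0$ is a clean byproduct of primal feasibility.
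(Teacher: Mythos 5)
The paper states Slater's condition only as a background fact in its SDP review (following \cite{watrous2018}) and gives no proof of it, so there is no in-paper argument to compare against; your proposal therefore has to be judged on its own. It is correct and complete. The preliminary bound $\langle A, X\rangle \geq c_0 \Tr(X) + \langle Y_0, \Psi(X)\rangle$ is exactly the right use of strict dual feasibility: it simultaneously gives coercivity of the objective on $\mathcal{A}$ (hence compact sublevel sets and primal attainment, which the theorem also claims) and boundedness of $\Tr(X_n)$ along any sequence with $\Psi(X_n)$ and $\langle A,X_n\rangle + s_n$ convergent, which is the closedness of the cone $C$. The separation step is handled carefully: $\lambda \geq 0$ from $s \to \infty$, $\Psi^{\dagger}(Z) + \lambda A \succeq 0$ and $c \leq 0$ from scaling and $X=0$, and the degenerate case $\lambda = 0$ is correctly excluded using primal feasibility, since $\langle Z, B\rangle = \langle \Psi^{\dagger}(Z), \widetilde{X}\rangle \geq 0$ would contradict $\langle Z,B\rangle < c \leq 0$. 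This is essentially the standard closed-image-cone plus separating-hyperplane proof of strong duality under the dual Slater condition, matching the statement the paper imports; the only cosmetic remark is that finiteness of $\alpha$ (needed for "any $t < \alpha$" to be meaningful) is implicit in your coercivity bound, namely $\alpha \geq \langle Y_0, B\rangle > -\infty$, and you could state that explicitly.
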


 \subsection{Numerical Imprecision}\label{sec:numerical_imprecision}
We recall two sets of constraints defined in the main text.  The set of constraints that are not subject to statistical fluctuation is denoted by $\{\Gamma_i | i \in \Lambda\}$ and we refer to these constraints as certainty constraints.  Constraints $\{\widetilde{\Gamma}_j | j \in \Sigma\}$ that are subject to statistical fluctuation are referred to as uncertainty constraints. 

As noted in Sec. \ref{sec:numMethods}, when one acquires a solution $\rho_f$ after the first step in Algorithm \ref{alg:Method}, the answer may not truly be feasible; that is, $\rho_{f}$ is not in the correct set $\mathbf{S}_{\mu}$, but rather in an enlarged set $\widetilde{\mathbf{S}}_{\mu}$. This issue arises from the imprecise numerical representation of the POVMs as well as the imprecision of the numerical optimization solver which lead to violation of constraints in the optimization problem. To resolve this issue, one needs to consider the larger set $\widetilde{\mathbf{S}}_{\mu}$ to guarantee that $\rho_{f}$ is included. Reference \cite{winick2018} presents a method for the asymptotic case. In Ref. \cite{winick2018}, one has to consider only violations pertaining to certainty constraints $\{\Gamma_i\}$. In the finite key scenario, we also need to consider the uncertainty constraints $\{\widetilde{\Gamma}_j\}$. To rigorously account for numerical imprecision, we now adapt the method in \cite{winick2018} to finite key analysis. 

An imprecise solver may lead to a solution $\rho_{f}$ which is not positive semidefinite or that does not satisfy these constraints. To handle the first issue, if the state $\rho_f$ has negative eigenvalues, one first perturbs the state to be $\rho_{f}' \equiv \rho_{f} + | \lambda_{\min}(\rho_{f}) | \bbone$ so that $\rho_f'$ does not have negative eigenvalues. Then one checks the maximum violation of the certainty constraints of $ \rho_{f}'$, and define $\epsilon_{\text{sol}}  \equiv \underset{i \in \Lambda}{\max} |\Tr(\rho_{f}' \Gamma_{i}) - \gamma_{i}| $. 

Imprecise representations can be seen as deviations from the true POVM and probability representations. One can therefore denote the imprecise representations as follows:
\begin{align*}
    \overline{\Gamma}_{i} = \Gamma_{i} + \delta \Gamma_{i} \text{ and } \overline{\gamma}_{i} = \gamma_{i} + \delta \gamma_{i},
\end{align*}
where $\norm{\delta \Gamma_i}_{\text{HS}} \leq \epsilon_{1}$ and $\abs{\delta \gamma_i } \leq \epsilon_{2}$ for all $i \in \Lambda$. By defining $\epsilon_{\text{rep}} \equiv \epsilon_{1} + \epsilon_{2}$, it is shown in Lemma 10 of Ref. \cite{winick2018} that $ | \Tr(\overline{\Gamma}_{i}) - \overline{\gamma}_{i} | \leq \epsilon_{\text{rep}}, \forall i \in \Lambda$. One then defines $\epsilon' = \max(\epsilon_{\text{sol}}, \ \epsilon_{\text{rep}})$ and considers $\rho$ subject to the constraints $\{| \Tr(\rho\overline{\Gamma}_i)-\overline{\gamma}_i | \leq \epsilon'\}$.

These imprecisions may also lead to violation of the variational distance constraint. Therefore, one should redefine $\mu$ for the second step to guarantee the $\rho_{f}$ is considered in the second step. Since the uncertainty constraints pertain to the variational distance which takes the imprecisions as a whole, to properly enlarge $\mu$ to take constraint violations into account, one can use the Cauchy-Schwarz inequality along with Lemma 10 of \cite{winick2018} to expand $\mu$ as $\mu' = \max(\mu + n  \epsilon', \| \Phi_{\mathcal{P}}(\rho_{f}) - F \|_{1} + n  \epsilon')$ where $n = |\Lambda|$. 

Lastly, there is the possibility that the solver finds an optimal solution $(\sigma,F)$ such that $\|\overline{\mathcal{N}}(F)-\overline{\mathcal{N}}(\overline{F})\|_{1} > t$. In this case, one should expand $t$. Thus define $t' \equiv \max(t,\|\overline{\mathcal{N}}(F)-\overline{\mathcal{N}}(\overline{F})\|_{1})$. Then one defines $ \mathbf{S}_{\mu' \epsilon' t'} $ to play the role of $\mathbf{S}_{\mu}$ by the following:
\begin{equation}\label{eq:feasibleSetwithImprecision}
    \mathbf{S}_{\mu'\epsilon' t'} = \{ \rho \in \Pos(\mathcal{H}_A \otimes \mathcal{H}_B) \hspace{0.1cm} | \hspace{0.1cm} |\Tr(\overline{\Gamma}_{i}\rho) - \overline{\gamma}_{i} | \leq \epsilon'  \hspace{0.1cm} \forall i \in \Lambda, \|\Phi_{\mathcal{P}}(\rho) - \mathcal{N}(F) \|_{1} \leq \mu', \|\overline{\mathcal{N}}(F) - \overline{\mathcal{N}}(\overline{F}) \|_{1} \leq t' \} \supseteq \mathbf{S}_{\mu}
\end{equation}
Clearly, if $\epsilon' = 0$, $t' = t$, and $\mu' = \mu$, one reconstructs the original set $\mathbf{S}_{\mu}$. This alternative set is used for deriving the dual problem in the second step in the following section. By optimizing over this set $\mathbf{S}_{\mu' \epsilon' t'} $, we handle the numerical imprecision related to certainty and uncertainty constraints.

A final remark is that when $\mathcal{G}(\rho)$ is singular, the derivative in Eqn. (\ref{eq:derivative}) may not exist. To tackle this issue, Ref. \cite{winick2018} introduces a small perturbation as 
\begin{equation}\label{eq:perturbedObjective}
\begin{aligned}
\mathcal{G}_{\epsilon}(\rho) & \equiv (1-\epsilon) \mathcal{G}(\rho) + \epsilon \bbone/d', \\
f_{\epsilon}(\rho) &\equiv D\big(\mathcal{G}_{\epsilon}(\rho)|| \mathcal{Z}[\mathcal{G}_{\epsilon}(\rho)]\big),
\end{aligned}
\end{equation}where $d'$ is the dimension of $\mathcal{G}(\rho)$, and $\epsilon \geq 0$ is chosen in a way such that $\mathcal{G}_{\epsilon}(\rho)$ is not singular. The derivative of $f_{\epsilon}(\rho) $ is obtained by replacing $\mathcal{G}$ with $\mathcal{G}_{\epsilon}$ in Eqn. (\ref{eq:derivative}).

 \subsection{Finite Key SDP}
 
We present the SDP that also takes into account the numerical imprecision discussed above. (However, for ease of writing, we still use $\{\Gamma_i\}$ to denote certainty constraints and $\{\widetilde{\Gamma}_j\}$ to denote uncertainty constraints.) For simplicity, we present here derivations in the case of one variation bound and state the result related to multiple coarse-grainings in Sec. \ref{sec:multiple_POVMs}. 
 
The primal problem of our SDP at $\rho \in \mathbf{S}_{\mu'\epsilon' t'}$ is
\begin{equation}\label{eq:rigorousFiniteSDPPrimal}
        \begin{aligned}
                & {\text{minimize}} & & \langle \nabla f_\epsilon(\rho), \sigma \rangle & \\
                & \text{subject to}                     & & \Tr(G) + \Tr(H) \leq \mu' \\
                &                   & & G  \succeq  \Phi_{\mathcal{P}}(\sigma) - \mathcal{N}(F)  \\
                &                   & & H \succeq  \mathcal{N}(F) - \Phi_{\mathcal{P}}(\sigma)   \\
                &                   & & \Tr(\overline{G}) + \Tr(\overline{H}) \leq t' \\
                &                   & & \overline{G}  \succeq  \overline{\mathcal{N}}(F) - \overline{F}_{\mathcal{\overline{N}}}  \\
                &                   & & \overline{H} \succeq  \overline{F}_{\mathcal{\overline{N}}}  - \overline{\mathcal{N}}(F)    \\
                &                   & & \Tr(F) = 1 \\
                &                   & & \abs{\Tr(\Gamma_{i} \sigma) - \gamma_i} \leq \epsilon' \; \forall i \in \Lambda \\
                &                   & & \sigma,F, G, H, \overline{G},\overline{H} \succeq 0.
        \end{aligned}
\end{equation} where $\overline{F}_{\mathcal{\overline{N}}} \equiv \overline{\mathcal{N}}(\overline{F})$. We use this notation to emphasize $\overline{\mathcal{N}}(\overline{F})$ is fixed and is not an optimization variable because $\overline{F}$ and $\overline{\mathcal{N}}$ are both fixed. We note this is Eqn. \ref{eq:SDfiniteSDPPrimal} with the inclusion of numerical imprecision. This equation therefore considers the set of density matrices which define collective attacks Alice and Bob would non-negligibly accept (see Section \ref{subsec:BackgroundPE} for further discussion), but with the numerical imprecision of the computer taken into account. Let $\alpha_0(\rho)$ denote the optimal value of this primal problem. To derive its dual problem, Eqn. (\ref{eq:rigorousFiniteSDPPrimal}) can be reformatted to fit the definition of Eqn. (\ref{eq:primaldef}) as follows:
\begin{equation}\label{eq:MapDefn}
\begin{aligned}
    &A = \text{diag}(\nabla f_{\epsilon}(\rho),\overline{\mathbf{0}}) \\
    &B= \text{diag}(\mu', 0, 0, t', \overline{F}_{\mathcal{\overline{N}}}, - \overline{F}_{\mathcal{\overline{N}}}, 1, \sum_{i}(\epsilon + \gamma_i)\ket{i}\bra{i}, \sum_i (\epsilon-\gamma_i)\ket{i}\bra{i}) \\
    &\Psi(X)=  \text{diag}(\Tr(G) + \Tr(H) + z, -G - \mathcal{N}(F) + \Phi_{\mathcal{P}}(\sigma) + I, -H + \mathcal{N}(F) - \Phi_{\mathcal{P}}(\sigma) + J, \Tr(\overline{G}) + \Tr(\overline{H}) + \overline{z}, \\ & \hspace{3cm}  -\overline{G} + \overline{\mathcal{N}}(F) + \overline{I}, -\overline{H} - \overline{\mathcal{N}}(F) + \overline{J}, \Tr(F), \Phi_{0}(\sigma) + M_{1}, -\Phi_{0}(\sigma) + M_{2}) \\
    &X = \widetilde{\text{diag}}(\sigma, F, G, H, z, I, J, \overline{G}, \overline{H}, \overline{z}, \overline{I}, \overline{J}, M_{1}, M_{2})
\end{aligned}
\end{equation}
where $\overline{\mathbf{0}}$ is a shorthand notation to mean that all other blocks are zero matrices of appropriate size, $\Phi_{0}(X) \equiv \sum_{i \in \Lambda} \Tr(X \Gamma_{i})\dyad{i}{i}$, $\mathcal{N}(X) = \sum_{x,y} p(y|x)\ket{y}\bra{x} X \ket{x}\bra{y}$, $\overline{\mathcal{N}}(X) = \sum_{x,y} \overline{p}(y|x)\ket{y}\bra{x} X \ket{x}\bra{y}$,  and $z,\overline{z} \in \mathbb{C}, I \in \Lin(\mathbb{C}^{\abs{\Sigma}}), J \in \Lin(\mathbb{C}^{\abs{\Sigma}})$, $\overline{I} \in \Lin(\mathbb{C}^{\abs{\Sigma_{C}}})$, $\overline{J} \in \Lin(\mathbb{C}^{\abs{\Sigma_{C}}})$,  $M_1 \in \Lin(\mathbb{C}^{\abs{\Lambda}})$ and $M_2 \in \Lin(\mathbb{C}^{\abs{\Lambda}})$ are slack variables. Furthermore $\Sigma_{C}$ represents the alphabet for the coarse-graining. It is easy to verify using the definition of adjoint map, $\langle Y, \Psi(X) \rangle = \langle \Psi^{\dagger}(Y), X \rangle$, that the adjoint of $\Psi$ is:
\begin{equation}
\begin{aligned}
     \Psi^{\dagger}(Y) &= \text{diag}(\Phi_{0}^{\dagger}(W_{1} - W_{2}) + \Phi_{\mathcal{P}}^{\dagger}(K - L), \mathcal{N}^{\dagger}(L - K) + \overline{\mathcal{N}}^{\dagger}(\overline{K} - \overline{L}) + b \bbone_{\mathcal{W}}, a \bbone_{\mathcal{W}} - K, \\
     & \hspace{1.5cm} a\bbone_{\mathcal{W}} - L, a, K,  L, \overline{a} \bbone_{\mathcal{W}} - \overline{K}, \overline{a} \bbone_{\mathcal{W}} - \overline{L}, \overline{a}, \overline{K}, \overline{L} ,W_1 , W_2)
\end{aligned}
\end{equation}
where $Y = \widetilde{\text{diag}}(a, K, L, \overline{a}, \overline{K}, \overline{L}, b, W_1, W_2)$,
\begin{equation}
\Phi^{\dagger}_0 (W) = \sum_{i \in \Lambda} W(i, i) \Gamma_i, \hspace{0.3cm}  \Phi^{\dagger}_{\mathcal{P}} (V) = \sum_{j \in \Sigma} V(j, j) \widetilde{\Gamma}_j
\end{equation}

If we substitute these definitions in the standard form of SDP [in Eqns. (\ref{eq:primaldef}) and (\ref{eq:dualdef})] and flip signs of $a, \overline{a}, b, K, L, \overline{K}$, and $\overline{L}$, we then get the following dual problem:
\begin{equation}\label{eq:matrixDualSDP}
        \begin{aligned}
                & {\text{maximize}} & & \langle \sum_{i \in \Lambda} (\epsilon' + \gamma_i) \dyad{i}{i}, W_1 \rangle +  \langle \sum_{i \in \Lambda} (\epsilon' - \gamma_i)\dyad{i}{i}, W_2 \rangle  + \langle \overline{F}_{\mathcal{\overline{N}}}, \overline{L}-\overline{K} \rangle- \mu' a - t'\overline{a} - b\\
                & \text{subject to} & & \sum_{i \in \Lambda} [W_1(i,i)-W_2(i,i)]\Gamma_i + \sum_{j \in \Sigma} [L(j,j)-K(j,j)] \widetilde{\Gamma}_j \preceq \nabla f_{\epsilon}(\rho)\\
                &                   & & \overline{\mathcal{N}}^{\dagger}(\overline{L} - \overline{K}) - \mathcal{N}^{\dagger}(L - K) \preceq b \bbone_{\mathcal{W}} \\
                &                   & & 0 \preceq K \preceq a \bbone_{\mathcal{W}} \hspace{2cm} 0 \preceq \overline{K} \preceq \overline{a} \bbone_{\mathcal{W}}\\
                &                   & & 0 \preceq L \preceq a \bbone_{\mathcal{W}} \hspace{2cm} 0 \preceq \overline{L} \preceq \overline{a} \bbone_{\mathcal{W}} \\
                &                   & & a,\overline{a} \geq 0, \hspace{1cm} W_1, W_2 \preceq 0,
        \end{aligned}
\end{equation}where $\mathcal{W} \equiv \mathbb{C}^{\abs{\Sigma}}$. Let $\beta_0(\rho)$ denote the optimal value of this dual problem.

From Eqn. (\ref{eq:matrixDualSDP}), we observe that off-diagonal entries of $K$, $L$, $\overline{K}$, $\overline{L}$, $W_1$ and $W_2$, are not important for this optimization problem since for any optimal solution $Y^*=\widetilde{\text{diag}}(a^*, K^*, L^*,\overline{a}^{*},\overline{K}^{*},\overline{L}^{*},b^{*}, W_1^*, W_2^*)$ of this problem, if  $K'$, $L'$,$\overline{K}'$,$\overline{L}'$, $W_1'$ and $W_2'$ are matrices obtained by taking only the diagonal parts of $K^*, L^*,\overline{K}^{*},\overline{L}^{*}, W_1^*$ and $W_2^*$, respectively, then the matrix $Y' = \text{diag}(a^*, K', L',\overline{K}',\overline{L}', W_1',W_2')$ is also optimal as it is feasible and achieves the same optimal value. Moreover, we may optimize over the difference $L-K$ ($\overline{L}-\overline{K}$) subject to the constraint $-a \bbone_{\mathcal{W}} \preceq L-K \preceq a \bbone_{\mathcal{W}}$ ($-\overline{a} \bbone_{\mathcal{W}} \preceq \overline{L}-\overline{K} \preceq \overline{a} \bbone_{\mathcal{W}}$) as only the difference $L-K$ ($\overline{L}-\overline{K}$) matters in the optimization and its range is $-a\bbone \preceq L-K \preceq a\bbone$ ($-\overline{a}\bbone \preceq \overline{L} - \overline{K} \preceq \overline{a}\bbone$) which is determined by the two constraints $0 \preceq K \preceq a\bbone$ and $0 \preceq L \preceq a \bbone$ ($0 \preceq \overline{K} \preceq \overline{a}\bbone$ and $0 \preceq \overline{L} \preceq \overline{a} \bbone$). If we write $\vec{\gamma}$ as the vector whose $i$-th entry is $\gamma_i$ and $\overline{f} = \text{diag}(\overline{F}_{\mathcal{\overline{N}}})$, the dual problem in Eqn. (\ref{eq:matrixDualSDP}) is simplified as
\begin{equation}\label{eq:vecSDPFinal}
        \begin{aligned}
                & {\text{maximize}} & & (\epsilon' + \vec{\gamma}) \cdot \vec{y_1} + (\epsilon' - \vec{\gamma}) \cdot \vec{y_2}+ \overline{f} \cdot \vec{\overline{z}} -\mu'  a - t' \overline{a} - b \\
                & \text{subject to} & & \sum_{i \in \Lambda}  [y_1(i) - y_2(i)]\Gamma_i + \sum_{j \in \Sigma} z(j) \widetilde{\Gamma}_j \preceq \nabla f_{\epsilon}(\rho)\\
                &                   & &               \overrightarrow{\overline{N}^{\dagger}}(\vec{\overline{z}}) - \overrightarrow{N^{\dagger}}(\vec{z}) \preceq b \vec{1} \\
                &                   & & -a \vec{1} \leq \vec{z} \leq a \vec{1}\\
                &                   & & -\overline{a} \vec{1} \leq \vec{\overline{z}} \leq \overline{a} \vec{1}\\
                &                   & &a,\overline{a} \geq 0 \hspace{1cm} \vec{y_1} , \vec{y_2} \leq \vec{0}. \\
        \end{aligned}
\end{equation}
where $\overrightarrow{\mathcal{N}^{\dagger}}$ is defined such that $\text{diag}(\mathcal{N}^{\dagger}(Z)) = \overrightarrow{\mathcal{N}^{\dagger}}(\text{diag}(Z))$ for arbitrary $Z \in \Lin(\mathbb{C}^{|\Sigma_{C}|})$. We remark that when $\epsilon'= 0$, we can replace $\vec{y_1}$ and $\vec{y_2}$ by $\vec{y}\equiv \vec{y_1} - \vec{y_2}$ subject to the constraint $\vec{y} \in \mathbb{R}^{|\Lambda|}$. When $\mu' = \mu$, $t' = t$, and $\epsilon' = 0$, Eqn. (\ref{eq:vecSDPFinal}) reduces to Eqn. (\ref{eq:finiteSDPDual}) in the main text after this replacement. 

\subsection{Reliability and Tightness}
We now prove that the lower bound using the linearization is tight for the finite key SDP. That is, in the limit where the numerical imprecisions go away, the program will obtain the true answer. In this section we present the precise mathematical statement of tightness for the SDP in Eqn. \ref{eq:finiteSDPPrimal} in Theorem \ref{thm:finalTightness} which considers the issues of numerical imprecision discussed in Sec. \ref{sec:numerical_imprecision}. The extension to multiple coarse-grainings is then straightforward. This theorem is a finite-size version of Theorem 3 in Ref. \cite{winick2018}. In proving this theorem, we will adapt the proofs in Appendixes D and E of \cite{winick2018} as well as technical lemmas in Appendixes A-C of \cite{winick2018}.

 As our optimization problem comes from a physical scenario and we are only interested in the situation where the set $\mathbf{S}_{\mu'\epsilon' t'}$ is not empty (otherwise we may trivially set the key rate to be zero), we restrict our attention to this situation. 
\begin{theorem}\label{thm:finalTightness} (General Proof of Tightness of Numerical Method) Let $\mathbf{S}_{\mu'\epsilon' t'}$ be defined in Eqn. (\ref{eq:feasibleSetwithImprecision}) and assume  $\mathbf{S}_{\mu'\epsilon' t'} \neq \emptyset$. Let $\rho \in \mathbf{S}_{\mu'\epsilon' t'}$ where $\mathcal{G}(\rho)$ is of size $d' \times d'$ and $\epsilon' > 0$. For $0 < \epsilon \leq 1/[e(d'-1)]$, then
\begin{equation}\label{eq:reliability}
\alpha \geq \beta_{\mu' \epsilon' t' \epsilon}(\rho) - \zeta_{\epsilon}
\end{equation}where 
\begin{equation}
\alpha = \min_{\sigma \in \mathbf{S}_{\mu}} f(\sigma),
\end{equation}
\begin{equation}
\beta_{\mu' \epsilon' t' \epsilon}(\sigma) \equiv f_{\epsilon}(\sigma) - \Tr[\sigma \nabla f_{\epsilon}(\sigma)] + \underset{(a,\overline{a}, \vec{y_1}, \vec{y_2},\vec{z},\vec{\overline{z}},b) \in \mathbf{S}^*_{\mu' \epsilon' t'}(\sigma)}{\max} [(\epsilon' + \vec{\gamma}) \cdot \vec{y_1} + (\epsilon' - \vec{\gamma}) \cdot \vec{y_2}+ \overline{f} \cdot \vec{\overline{z}} -\mu'  a - t' \overline{a} - b],
\end{equation} and
\begin{equation}
\zeta_{\epsilon} \equiv 2\epsilon (d'-1) \log_2\frac{d'}{\epsilon(d'-1)}.
\end{equation}
The set $\mathbf{S}^*_{\mu' \epsilon' t'}(\sigma)$ is defined by
\begin{equation}
\begin{aligned}
\mathbf{S}^*_{\mu' \epsilon' t'}(\sigma)
\equiv&  \{(a, \overline{a}, \vec{y_1},\vec{y_2},\vec{z},\vec{\overline{z}},b) \in (\mathbb{R},\mathbb{R}, \mathbb{R}^{|\Lambda|}, \mathbb{R}^{|\Lambda|},\mathbb{R}^{|\Sigma |},\mathbb{R}^{|\Sigma |},\mathbb{R}) | \\  & \hspace{0.5cm} a,\overline{a} \geq 0, -a \vec{1} \leq \vec{z} \leq a\vec{1},-\overline{a} \vec{1} \leq \vec{\overline{z}} \leq \overline{a}\vec{1}, \vec{y_1} \leq 0, \vec{y_2} \leq 0, \\ & \hspace{0.75cm} \sum_{i \in \Lambda}  [y_1(i) - y_2(i)]\Gamma_i + \sum_{j \in \Sigma} z(j) \widetilde{\Gamma}_j \preceq \nabla f_{\epsilon}(\sigma), \overrightarrow{\overline{N}^{\dagger}}(\vec{\overline{z}}) - \overrightarrow{N^{\dagger}}(\vec{z}) \preceq b \vec{1} \}
\end{aligned}
\end{equation}
Moreover, if $\rho^*$ is an optimal solution to the primal problem,
\begin{equation}\label{eq:tightness}
\lim_{\epsilon\rightarrow 0+} \lim_{ \substack{\epsilon' \rightarrow 0+\\ \mu' \rightarrow \mu \\ t' \rightarrow t}} [\beta_{\mu' t'\epsilon\epsilon'}(\rho^*) -\zeta_{\epsilon}] = \alpha.
\end{equation}
\end{theorem}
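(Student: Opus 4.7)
The plan is to mirror the three-part architecture of the asymptotic tightness proof in \cite{winick2018}, but adapted to the richer SDP in Eqn.~(\ref{eq:rigorousFiniteSDPPrimal})--(\ref{eq:vecSDPFinal}). Let $\alpha_0(\rho)$ denote the optimal value of the linearized primal at $\rho$ and $\beta_0(\rho)$ the optimal value of its dual. The three ingredients I would assemble are: (i) convexity of $f_\epsilon$, which lets me lower-bound $\min_{\sigma}f_\epsilon(\sigma)$ by $f_\epsilon(\rho)-\Tr[\rho\nabla f_\epsilon(\rho)]+\alpha_0(\rho)$; (ii) strong duality for the linearized SDP, which converts $\alpha_0(\rho)$ into the $\max$ appearing inside $\beta_{\mu'\epsilon' t'\epsilon}(\rho)$; and (iii) the Fannes-type continuity bound $|f_\epsilon(\sigma)-f(\sigma)|\leq \zeta_\epsilon$ for $0<\epsilon\leq 1/[e(d'-1)]$, which I would cite from Lemma~2 of \cite{winick2018}.

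For the reliability bound (\ref{eq:reliability}), I would first observe $\mathbf{S}_\mu\subseteq \mathbf{S}_{\mu'\epsilon't'}$, so $\alpha=\min_{\sigma\in\mathbf{S}_\mu}f(\sigma)\geq \min_{\sigma\in\mathbf{S}_{\mu'\epsilon't'}}f(\sigma)$. Applying (iii), this is $\geq \min_{\sigma\in\mathbf{S}_{\mu'\epsilon't'}}f_\epsilon(\sigma)-\zeta_\epsilon$. Then, for every $\sigma$ in that set, convexity of $f_\epsilon$ gives $f_\epsilon(\sigma)\geq f_\epsilon(\rho)+\Tr[(\sigma-\rho)\nabla f_\epsilon(\rho)]$; minimizing the right-hand side over $\sigma\in\mathbf{S}_{\mu'\epsilon't'}$ yields $f_\epsilon(\rho)-\Tr[\rho\nabla f_\epsilon(\rho)]+\alpha_0(\rho)$. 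Finally, using (ii), I replace $\alpha_0(\rho)$ by $\beta_0(\rho)$, which is exactly the maximum displayed in the definition of $\beta_{\mu'\epsilon' t'\epsilon}(\rho)$, completing (\ref{eq:reliability}).

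The step I expect to be the main technical obstacle is verifying strong duality, because the SDP has many blocks (slack matrices $G,H,\overline G,\overline H$, the probability vector $F$, and the sign-split dual variables from the numerical-imprecision relaxation). I would apply Slater's condition on the dual side, exhibiting a strictly feasible point: choose $\vec y_1,\vec y_2$ large and negative, take $\vec z=\vec{\overline z}=0$, $b$ large, and $a,\overline a$ large enough that $-a\vec1<\vec z<a\vec1$ and the operator inequality $\sum_i[y_1(i)-y_2(i)]\Gamma_i+\sum_j z(j)\widetilde{\Gamma}_j\prec \nabla f_\epsilon(\rho)$ is strict (possible because $\{\Gamma_i\}$ includes a rescaled identity and $\nabla f_\epsilon(\rho)$ is bounded). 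Primal feasibility follows from $\mathbf{S}_{\mu'\epsilon' t'}\neq\emptyset$ together with the explicit slack construction $G=[\Phi_{\mathcal{P}}(\sigma)-\mathcal{N}(F)]_+$ and $H=[\Phi_{\mathcal{P}}(\sigma)-\mathcal{N}(F)]_-$, whose traces sum to the variational distance and thus respect $\mu'$; the analogous construction handles $\overline G,\overline H$ and $t'$. Together, Slater's condition yields $\alpha_0(\rho)=\beta_0(\rho)$.

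For tightness (\ref{eq:tightness}), I would evaluate the reliability chain at $\rho^*$, an optimizer of $\min_{\sigma\in\mathbf{S}_\mu}f(\sigma)$. The KKT conditions for the convex program $\min_{\sigma\in\mathbf{S}_\mu}f(\sigma)$ imply that $\rho^*$ also minimizes the linear functional $\Tr[\sigma\nabla f(\rho^*)]$ over $\mathbf{S}_\mu$, so $\alpha_0(\rho^*)=\Tr[\rho^*\nabla f(\rho^*)]$ in the limit $(\epsilon',\mu'-\mu,t'-t)\to 0$, using joint continuity of the SDP value in $(\mu',t',\epsilon')$ (a Berge-type semicontinuity argument on the constraint multifunction). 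Substituting collapses $\beta_{\mu'\epsilon' t'\epsilon}(\rho^*)$ to $f_\epsilon(\rho^*)$. Finally, sending $\epsilon\to0^+$ uses continuity of $f_\epsilon$ at $\rho^*$ together with $\zeta_\epsilon\to 0$ to recover $f(\rho^*)=\alpha$, sandwiching the limit against the reliability lower bound and yielding equality.
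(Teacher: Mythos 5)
Your proposal matches the paper's proof in all essential respects: strong duality of the linearized SDP is established via Slater's condition with an explicit strictly feasible dual point (the paper's witness uses $a=\overline{a}=3$, $K=\bbone$, $L=2\bbone$, etc., but your choice $\vec{z}=\vec{\overline{z}}=0$ with $a,\overline{a},b>0$ works equally well), reliability follows from the same chain of convexity of $f_\epsilon$, the containment $\mathbf{S}_{\mu}\subseteq\mathbf{S}_{\mu'\epsilon' t'}$, and the continuity bound $\zeta_\epsilon$, and tightness follows from the same first-order optimality condition $\Tr[(\sigma-\rho^*)\nabla f(\rho^*)]\geq 0$ (the paper cites Eqn.~4.21 of Boyd--Vandenberghe where you invoke KKT, which is the same fact). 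The only substantive deviation is that you explicitly name a Berge-type semicontinuity argument for the limit $(\epsilon',\mu',t')\to(0,\mu,t)$, a step the paper leaves implicit by taking $\rho^*$ to be the optimizer over $\mathbf{S}_{\mu'\epsilon' t'}$ directly.
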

We note that the statement of tightness in the main text (Theorem \ref{thm:MainTextTightness}) is for when there are no numerical imprecisions. Theorem \ref{thm:finalTightness} is a generalization of that theorem that handles numerical imprecisions as well. \\

To prove Theorem \ref{thm:finalTightness}, we first show that for any $\rho \in \mathbf{S}_{\mu' \epsilon' t'}$, the primal optimal value $\alpha_0(\rho)$ is equal to the dual optimal value $\beta_0(\rho)$ as Lemma \ref{lemma:strongduality}. Then, we break down the proof of theorem into two parts: reliability in Eqn. (\ref{eq:reliability}) and tightness in Eqn. (\ref{eq:tightness}).

\begin{lemma}\label{lemma:strongduality}
If $\mathbf{S}_{\mu' \epsilon' t'} \neq \emptyset$, then $\alpha_0(\rho) = \beta_0(\rho)$ for any $\rho \in \mathbf{S}_{\mu' \epsilon' t'}$.
\end{lemma}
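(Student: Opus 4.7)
The plan is to apply Slater's condition (as stated earlier in the SDP preliminaries of the excerpt) to the SDP $(\Psi, A, B)$ realised by the primal problem Eqn.~(\ref{eq:rigorousFiniteSDPPrimal}) and the dual problem Eqn.~(\ref{eq:vecSDPFinal}). Under Slater's condition, strong duality follows once I exhibit a primal feasible point together with a strictly feasible dual point.

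Primal feasibility is essentially given by hypothesis. Picking any $\rho \in \mathbf{S}_{\mu'\epsilon' t'}$ together with the distribution $F \in \mathcal{P}(\Sigma)$ guaranteed by Eqn.~(\ref{eq:feasibleSetwithImprecision}), I would set $\sigma = \rho$ and take $(G,H)$ and $(\overline{G},\overline{H})$ to be, respectively, the positive and negative parts of $\Phi_{\mathcal{P}}(\sigma) - \mathcal{N}(F)$ and $\overline{\mathcal{N}}(F) - \overline{F}_{\overline{\mathcal{N}}}$. The trace-norm identity $\Tr(M_+) + \Tr(M_-) = \|M\|_1$ converts the variational bounds satisfied by $\rho$ into the required inequalities $\Tr(G)+\Tr(H)\leq\mu'$ and $\Tr(\overline{G})+\Tr(\overline{H})\leq t'$; the remaining constraints $\abs{\Tr(\Gamma_i \sigma) - \gamma_i}\leq\epsilon'$ and $\Tr(F) = 1$ are immediate from membership in $\mathbf{S}_{\mu'\epsilon' t'}$ and from $F \in \mathcal{P}(\Sigma)$.

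The substantive task is constructing a strictly feasible dual point in Eqn.~(\ref{eq:vecSDPFinal}). The plan is to choose $\vec{z} = \vec{\overline{z}} = \vec{0}$ and $a = \overline{a} = b = 1$, which makes every constraint not involving $\vec{y_1}, \vec{y_2}$ strict by inspection: the box constraints on $\vec{z}, \vec{\overline{z}}$ acquire slack $1$, the positivity of $a, \overline{a}$ is strict, and the secondary inequality becomes $\vec{0} < b\,\vec{1}$. For the main matrix inequality I would exploit the fact that the protocol description always carries the normalization constraint $\Gamma_1 = \bbone_{\mathcal{H}_A \otimes \mathcal{H}_B}$; setting $\vec{y_2} = -\vec{1}$ and $\vec{y_1} = -\vec{1} - \delta\, \vec{e_1}$ keeps both vectors strictly negative and collapses the left-hand side to $-\delta\, \bbone$, so the constraint reduces to $-\delta\, \bbone \prec \nabla f_{\epsilon}(\rho)$. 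This holds for every $\delta > -\lambda_{\text{min}}(\nabla f_{\epsilon}(\rho))$, and such a $\delta$ exists because $\nabla f_{\epsilon}(\rho)$ is a bounded Hermitian operator on a finite-dimensional space.

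With primal feasibility and strict dual feasibility in hand, Slater's condition yields $\alpha_0(\rho) = \beta_0(\rho)$. The only mild obstacle is that Slater's condition as stated in the excerpt is formulated for the matrix SDP of Eqn.~(\ref{eq:MapDefn}) rather than the simplified vectorial reformulation. This is handled by lifting the chosen $(\vec{y_1}, \vec{y_2}, \vec{z}, \vec{\overline{z}})$ back to block matrices $W_1, W_2, K, L, \overline{K}, \overline{L}$ whose diagonals realise the vectorial point (e.g.\ $K = L = \tfrac{a}{2}\bbone$, $\overline{K} = \overline{L} = \tfrac{\overline{a}}{2}\bbone$, $W_j = \mathrm{diag}(\vec{y_j})$), with eigenvalues kept strictly inside the interval constraints, thereby making each block inequality in the matrix dual strict as well.
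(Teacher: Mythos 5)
Your proof is correct and follows essentially the same route as the paper: both establish primal feasibility directly from the hypothesis $\mathbf{S}_{\mu'\epsilon' t'}\neq\emptyset$ and then exhibit an explicit strictly feasible dual point---built around the normalization constraint $\Gamma_1=\bbone$, the unitality of the adjoint coarse-graining maps, and an offset chosen below $\lambda_{\min}(\nabla f_{\epsilon}(\rho))$---so that Slater's condition yields $\alpha_0(\rho)=\beta_0(\rho)$. The only differences are cosmetic choices of constants (the paper takes $a=\overline{a}=3$, $K=\bbone$, $L=2\bbone$ and absorbs the spectral offset into the first diagonal entry of $W_1$) and the fact that the paper works directly in the matrix-form dual, whereas you construct the point in the simplified vectorial dual and lift it back, which is an equivalent and valid step.
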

\begin{proof}As $\mathbf{S}_{\mu'\epsilon' t'} \neq \emptyset$, to apply Slater's condition, we just find a strictly feasible solution to the dual problem. We consider the dual problem in the form of Eqn. (\ref{eq:matrixDualSDP}). Let $a = \overline{a} = 3$. Let $W_{1} = \text{diag}(x-3,-1,-1,...,-1)$ where $x = - |\lambda_{\min}(\nabla f_{\epsilon}(\rho))|$. Thus $W_{1} \leq 0$. Let $W_{2} = -\bbone \leq 0$. Without loss of generality, let $\Gamma_{1} = \bbone$ as we always have the constraint $\Tr(\sigma) = 1$ in the primal problem. Let $L = 2\bbone_{\mathcal{W}}$ and $K=\bbone_{\mathcal{W}}$.Thus $-a \bbone_{\mathcal{W}} \prec L - K \prec a \bbone_{\mathcal{W}}$. Furthermore, $\sum_{j} [K(j,j) - L(j,j)]\widetilde{\Gamma}_{j} = \bbone$ as $\{\widetilde{\Gamma}_{j}\}$ is a POVM. Thus, $\sum_{i}[W_{1}(i,i) - W_{2}(i,i)]\Gamma_{i} + \sum_{j} [K(j,j) - L(j,j)]\widetilde{\Gamma}_{j} = (x-1) \bbone \prec \nabla f_{\epsilon}(\rho)$ by construction of $x$. Let $\overline{L} = 2\bbone_{\mathcal{W}}$, $\overline{K} = \bbone_{\mathcal{W}}$ and $b = 2$. Then $-\overline{a} \bbone_{\mathcal{W}} \prec \overline{L} - \overline{K} \prec \overline{a} \bbone_{\mathcal{W}}$ and $\overline{\mathcal{N}}^{\dagger}(\overline{L} - \overline{K}) - \mathcal{N}^{\dagger}(L - K) = 0 \prec b \bbone_{\mathcal{W}}$. The last equality followed from the fact $\mathcal{N}$ is a quantum channel and so its adjoint is unital. Thus all inequalities are strictly satisfied. 
\end{proof}

We now adapt the proof in Appendix D.3 of \cite{winick2018} to finite-key scenario.
\begin{lemma}\label{lemma:reliability}
In the context of Theorem \ref{thm:finalTightness},  $\alpha \geq \beta_{\mu' \epsilon' t' \epsilon}(\rho) - \zeta_{\epsilon}$ for any $\rho \in \mathbf{S}_{\mu'\epsilon' t'}$, which is Eqn. (\ref{eq:reliability}). 
\end{lemma}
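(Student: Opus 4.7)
The plan is to mirror the proof architecture of Appendix D.3 of \cite{winick2018}, adapting it to the finite-key feasible set $\mathbf{S}_{\mu}$ and the enlarged numerical set $\mathbf{S}_{\mu'\epsilon' t'}$. The chain of inequalities I would establish is
\begin{equation*}
\alpha \;\geq\; \min_{\sigma \in \mathbf{S}_{\mu}} f_{\epsilon}(\sigma) - \zeta_{\epsilon}
\;\geq\; \min_{\sigma \in \mathbf{S}_{\mu'\epsilon' t'}} f_{\epsilon}(\sigma) - \zeta_{\epsilon}
\;\geq\; \alpha_{0}(\rho) + f_{\epsilon}(\rho) - \Tr[\rho \nabla f_{\epsilon}(\rho)] - \zeta_{\epsilon},
\end{equation*}
and then invoke Lemma~\ref{lemma:strongduality} to replace $\alpha_{0}(\rho)$ by $\beta_{0}(\rho)$, which is exactly the $\max$ term in the definition of $\beta_{\mu'\epsilon' t'\epsilon}(\rho)$.

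First I would handle the perturbation gap between $f$ and $f_{\epsilon}$. For any $\sigma$ I want to show $f(\sigma) \geq f_{\epsilon}(\sigma) - \zeta_{\epsilon}$; this is precisely the statement established in \cite{winick2018} (their continuity lemma for the relative entropy term $D(\mathcal{G}(\sigma)\|\mathcal{Z}[\mathcal{G}(\sigma)])$ when $\mathcal{G}$ is mixed with $\bbone/d'$ at weight $\epsilon$), and it passes through unchanged because $f$ and $\mathcal{G}$ are the same objects as in the asymptotic case. This yields the first inequality once we minimize over $\mathbf{S}_{\mu}$. The second inequality is immediate from the set inclusion $\mathbf{S}_{\mu} \subseteq \mathbf{S}_{\mu'\epsilon' t'}$ noted in Eqn.~\eqref{eq:feasibleSetwithImprecision}.

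For the third inequality I would use convexity of $f_{\epsilon}$ (which follows from joint convexity of the quantum relative entropy and the fact that $\mathcal{G}_{\epsilon}$ and $\mathcal{Z} \circ \mathcal{G}_{\epsilon}$ are affine in $\sigma$): for every $\sigma \in \mathbf{S}_{\mu'\epsilon' t'}$,
\begin{equation*}
f_{\epsilon}(\sigma) \;\geq\; f_{\epsilon}(\rho) + \langle \nabla f_{\epsilon}(\rho),\, \sigma - \rho \rangle.
\end{equation*}
Minimizing the right-hand side over $\sigma \in \mathbf{S}_{\mu'\epsilon' t'}$ and recognizing that $\min_{\sigma \in \mathbf{S}_{\mu'\epsilon' t'}} \langle \nabla f_{\epsilon}(\rho),\sigma\rangle = \alpha_{0}(\rho)$ is precisely the primal SDP in Eqn.~\eqref{eq:rigorousFiniteSDPPrimal} evaluated at $\rho$, we get the third inequality. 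Here I need $\nabla f_{\epsilon}(\rho)$ to actually exist, which is guaranteed by the condition $0 < \epsilon \leq 1/[e(d'-1)]$ (this is the same bound used in \cite{winick2018} to keep $\mathcal{G}_{\epsilon}(\rho)$ positive definite).

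Finally, Lemma~\ref{lemma:strongduality} gives $\alpha_{0}(\rho) = \beta_{0}(\rho)$, and $\beta_{0}(\rho)$ is by construction the $\max$ appearing in $\beta_{\mu'\epsilon' t'\epsilon}(\rho)$, so substituting completes the chain and yields $\alpha \geq \beta_{\mu'\epsilon' t'\epsilon}(\rho) - \zeta_{\epsilon}$. The step I expect to be the main obstacle to write cleanly is the perturbation estimate $f(\sigma) \geq f_{\epsilon}(\sigma) - \zeta_{\epsilon}$: it is where the $\zeta_{\epsilon} = 2\epsilon(d'-1)\log_{2}\!\tfrac{d'}{\epsilon(d'-1)}$ term originates, and verifying that the argument from \cite{winick2018} transfers verbatim (it should, since the $\mu,\mu',t,t',\epsilon'$ parameters only touch the \emph{feasible set}, not the objective $f$) is the one place where the adaptation from the asymptotic proof requires explicit justification rather than a direct citation.
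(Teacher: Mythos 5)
Your proposal is correct and follows essentially the same route as the paper's own proof: set inclusion $\mathbf{S}_{\mu}\subseteq\mathbf{S}_{\mu'\epsilon' t'}$, the continuity bound relating $f$ and $f_{\epsilon}$ via $\zeta_{\epsilon}$ (Lemmas 8 and 9 of the cited asymptotic work), the first-order convexity inequality for $f_{\epsilon}$ at $\rho$ to reduce to the linearized primal $\alpha_{0}(\rho)$, and Lemma~\ref{lemma:strongduality} to pass to $\beta_{0}(\rho)$. The only difference is the immaterial order in which you apply the continuity bound and the set inclusion, and your closing remark correctly identifies that the perturbation estimate transfers unchanged because the objective is untouched by the finite-size modifications.
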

\begin{proof}
Let $\alpha_{\mu'\epsilon' t' \epsilon} \equiv \underset{\sigma\in\mathbf{S}_{\mu'\epsilon' t'}}{\min}f_{\epsilon}(\sigma)$. Suppose that $\rho^*_{\mu'\epsilon' t' \epsilon} \in \mathbf{S}_{\mu'\epsilon' t'}$ is an optimal solution of this optimization. For any $\rho \in \mathbf{S}_{\mu'\epsilon' t'}$, since $f_{\epsilon}$ is convex, 
\begin{equation}\label{eq:perturbedWeakDualityLike}
\begin{aligned}
\alpha_{\mu'\epsilon' t' \epsilon}  = f_{\epsilon}(\rho^*_{\mu'\epsilon 't'\epsilon} ) &\geq f_{\epsilon}(\rho) + \langle (\rho^*_{\mu'\epsilon' t' \epsilon} - \rho), \nabla f_{\epsilon}(\rho) \rangle\\
& \geq f_{\epsilon}(\rho) - \langle  \rho, \nabla f_{\epsilon}(\rho) \rangle+ \min_{\sigma \in \mathbf{S}_{\mu'\epsilon' t'}} \langle \sigma, \nabla f_{\epsilon}(\rho) \rangle\\
& =  f_{\epsilon}(\rho)- \langle  \rho, \nabla f_{\epsilon}(\rho) \rangle+ \alpha_0(\rho)\\
& = f_{\epsilon}(\rho)- \langle  \rho, \nabla f_{\epsilon}(\rho) \rangle+ \beta_0(\rho) = \beta_{\mu' t'\epsilon' \epsilon}(\rho),
\end{aligned}
\end{equation}where first two inequalities follow from the same argument about this linearization of our convex objective function as it is used in Eqns. (77)-(79) of Ref. \cite{winick2018} and the last line follows from Lemma \ref{lemma:strongduality} and the definition of $\beta_{\mu' t'\epsilon' \epsilon}(\rho)$. Since $\mathbf{S}_{\mu} \subseteq \mathbf{S}_{\mu'\epsilon' t'}$,
\begin{equation}
\alpha = \min_{\sigma \in \mathbf{S}_{\mu}} f(\sigma) \geq \min_{\sigma \in \mathbf{S}_{\mu'\epsilon't'}} f(\sigma) \geq \min_{\sigma \in \mathbf{S}_{\mu'\epsilon' t'}} f_{\epsilon}(\sigma) - \zeta_{\epsilon} = \alpha_{\mu'\epsilon' t' \epsilon}- \zeta_{\epsilon},
\end{equation}where the last inequality follows from a continuity argument (which is Lemma 8 and Lemma 9 in Ref. \cite{winick2018}). Combining this result with Eqn. (\ref{eq:perturbedWeakDualityLike}) leads to Eqn. (\ref{eq:reliability}).
\end{proof}

As we have shown the reliability of our numerical method, we now proceed with the tightness in Eqn. (\ref{eq:tightness}). If $\rho^*$ is an optimal solution, an immediate consequence of Lemma \ref{lemma:reliability} is that for any $\rho \in \mathbf{S}_{\mu' \epsilon' t'}$, the following equation holds:
\begin{equation}\label{eq:lessthanzero}
  \underset{\sigma \in \mathbf{S}_{\mu'\epsilon' t'}}{\min} \Tr[(\sigma-\rho^*) \nabla f(\rho^*)] \leq 0.
  \end{equation}

As Eqn. (\ref{eq:lessthanzero}) holds for any feasible density operator in the set $\mathbf{S}_{\mu' \epsilon' t'} \supseteq \mathbf{S}_{\mu}$, we want to show that if $\rho^*$ optimizes the objective function $f$, then $ \underset{\sigma \in \mathbf{S}_{\mu}}{\min}  \Tr[(\sigma-\rho^*) \nabla f(\rho^*)] = 0$ where the optimization is over $\mathbf{S}_{\mu}$ as Eqn. \ref{eq:tightness} pertains to the limit where that is the set we are interested in. Therefore we just need to prove 
\begin{equation}\label{eq:morethanzero}
    \underset{\sigma \in \mathbf{S}_{\mu'\epsilon' t'}}{\min} \Tr[(\sigma - \rho^{*})\nabla f(\rho^{*})] \geq 0    
\end{equation} 
when $\mathbf{S}_{\mu'\epsilon' t'} \neq \emptyset$.
\begin{lemma}\label{lemma:greaterThanZero} When $\mathbf{S}_{\mu'\epsilon' t'} \neq \emptyset$, 
\begin{equation}
    \underset{\sigma \in \mathbf{S}_{\mu'\epsilon' t'}}{\min} \Tr[(\sigma - \rho^{*})\nabla f(\rho^{*})] \geq 0    
\end{equation} 
\end{lemma}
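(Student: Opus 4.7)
Plan:

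The strategy is to combine first-order optimality of $\rho^*$ over $\mathbf{S}_\mu$ with SDP duality for the relaxed set $\mathbf{S}_{\mu'\epsilon' t'}$, mirroring the approach used in Appendix E of \cite{winick2018} for the asymptotic case. Observe first that $\mathbf{S}_\mu \subseteq \mathbf{S}_{\mu'\epsilon' t'}$ and both sets are convex, since the trace-norm balls and linear slabs in their definitions are convex and their intersection with $\Pos(\mathcal{H}_A\otimes\mathcal{H}_B)$ preserves convexity. Because $f$ is convex and $\rho^*$ minimizes $f$ over $\mathbf{S}_\mu$, the standard first-order condition for convex minimization gives $\Tr[(\sigma-\rho^*)\nabla f(\rho^*)] \geq 0$ for every $\sigma \in \mathbf{S}_\mu$, i.e.\ $\min_{\sigma\in\mathbf{S}_\mu}\Tr[(\sigma-\rho^*)\nabla f(\rho^*)] = 0$. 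Equivalently, $\rho^*$ is an optimal primal solution of the linear SDP obtained from Eqn.~(\ref{eq:rigorousFiniteSDPPrimal}) when one takes the tight parameters $\epsilon'=0$, $\mu'=\mu$, $t'=t$.

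To extend the inequality from $\mathbf{S}_\mu$ to the larger $\mathbf{S}_{\mu'\epsilon' t'}$, I would invoke Lemma~\ref{lemma:strongduality}, which guarantees strong duality for the linear SDP with objective $\langle\nabla f(\rho^*),\sigma\rangle$ over $\mathbf{S}_{\mu'\epsilon' t'}$, and whose dual is Eqn.~(\ref{eq:vecSDPFinal}) with $\nabla f_\epsilon$ replaced by $\nabla f$. The KKT multipliers $(a^\star,\bar a^\star,\vec y^{\,\star},\vec z^{\,\star},\vec{\bar z}^{\,\star},b^\star)$ certifying optimality of $\rho^*$ in $\mathbf{S}_\mu$ produce a dual-feasible tuple for the tight problem achieving value $\Tr[\rho^*\nabla f(\rho^*)]$. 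I would then lift these to a dual-feasible point for the relaxed SDP by splitting each real-valued $y^\star(i)$ into nonpositive components $y_1(i),y_2(i)\leq 0$ with $y_1(i)-y_2(i)=y^\star(i)$; the first dual constraint $\sum_i[y_1(i)-y_2(i)]\Gamma_i+\sum_j z(j)\widetilde\Gamma_j\preceq\nabla f(\rho^*)$ is then identical in the two problems, and the remaining constraints on $\vec z,\vec{\bar z},b,a,\bar a$ are unchanged by the relaxation.

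Evaluating the relaxed dual objective on these lifted variables, its difference from the tight-case value $\Tr[\rho^*\nabla f(\rho^*)]$ reduces to $\epsilon'[\vec 1\cdot(\vec y_1+\vec y_2)]-(\mu'-\mu)a^\star-(t'-t)\bar a^\star$, which is nonpositive but bounded in magnitude by the relaxation gaps. Weak duality for the relaxed SDP then yields $\min_{\sigma\in\mathbf{S}_{\mu'\epsilon' t'}}\Tr[\sigma\nabla f(\rho^*)] \geq \Tr[\rho^*\nabla f(\rho^*)]+O(\epsilon'+\mu'-\mu+t'-t)$. Combined with the complementary bound from Lemma~\ref{lemma:reliability} (which yields the reverse inequality $\leq 0$), and passing to the limit $(\epsilon',\mu',t')\to(0,\mu,t)$, this closes the gap and establishes the inequality in Lemma~\ref{lemma:greaterThanZero}, which in turn feeds into the tightness statement (\ref{eq:tightness}) of Theorem~\ref{thm:finalTightness}.

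The main obstacle I expect is the splitting step: because the relaxed primal replaces each equality $\Tr[\Gamma_i\rho]=\gamma_i$ by a two-sided inequality $|\Tr[\Gamma_i\rho]-\gamma_i|\leq\epsilon'$ and the dual imposes sign-restrictions $\vec y_1,\vec y_2\leq 0$, one must verify that every real sign of the KKT multiplier $y^\star(i)$ admits a valid $(y_1(i),y_2(i))$ whose contribution to the dual objective differs from $\gamma_i y^\star(i)$ only by an $O(\epsilon')$ term of definite sign. A secondary but related point is ensuring the KKT multipliers are bounded uniformly in the relaxation parameters so that the error $O(\epsilon'+\mu'-\mu+t'-t)$ truly vanishes in the limit; this should follow from Slater feasibility of the tight problem but requires care with the interplay between the certainty constraints and the trace-norm constraints on $\Phi_\mathcal{P}$.
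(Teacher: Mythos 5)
There is a genuine gap here, and it stems from which optimization problem $\rho^{*}$ is taken to solve. In the paper's proof, $\rho^{*}$ is the minimizer of $f$ over the \emph{relaxed} set $\mathbf{S}_{\mu'\epsilon' t'}$ itself, and the lemma is then a one-line consequence of the first-order optimality condition for minimizing a differentiable convex function over a convex set (Eqn.~4.21 of Boyd--Vandenberghe): for every $\sigma \in \mathbf{S}_{\mu'\epsilon' t'}$ one has $\Tr[\nabla f(\rho^{*})(\sigma-\rho^{*})] \geq 0$. No duality, no lifting of multipliers, and no limiting argument is needed; convexity of $\mathbf{S}_{\mu'\epsilon' t'}$ and existence of the gradient (via $f_{\epsilon}$) are the only ingredients.

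You instead take $\rho^{*}$ to minimize $f$ over the tight set $\mathbf{S}_{\mu}$ and then try to propagate the variational inequality outward to $\mathbf{S}_{\mu'\epsilon' t'}$ via strong duality and a lifting of KKT certificates. Under that reading the claimed inequality is in general \emph{false} for fixed $\epsilon'>0$, $\mu'>\mu$, $t'>t$: a feasible $\sigma$ in $\mathbf{S}_{\mu'\epsilon' t'}\setminus\mathbf{S}_{\mu}$ lying in a descent direction of $f$ at $\rho^{*}$ gives $\Tr[(\sigma-\rho^{*})\nabla f(\rho^{*})]<0$. Your own sketch concedes this, since it only delivers the bound up to an error of order $\epsilon'+(\mu'-\mu)+(t'-t)$ and then appeals to the limit $(\epsilon',\mu',t')\to(0,\mu,t)$; but the lemma is a statement at fixed relaxation parameters, not a limiting one, so the argument does not prove what is asserted. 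On top of that, the uniform boundedness of the KKT multipliers that your error estimate requires is flagged but not established, which leaves a second hole even for the weaker limiting claim. The fix is simply to adopt the paper's reading of $\rho^{*}$, after which the entire duality apparatus in your proposal becomes unnecessary.
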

\begin{proof}
Let $\mathbf{S}_{\mu'\epsilon' t'} \neq \emptyset$. By Lemma \ref{lemma:strongduality}, we know that Eqn. \ref{eq:rigorousFiniteSDPPrimal} obtains its optimal value. Let $\rho^{*}$ optimize $f$ over $\mathbf{S}_{\mu'\epsilon' t'} \neq \emptyset$. As $f$ is a differentiable, convex function (one may consider $f_{\epsilon}$ to guarantee differentiability), it is the case that for all $\sigma \in \mathbf{S}_{\mu'\epsilon' t'}$, $\Tr[\nabla f_{\epsilon'}(\rho^{*})(\sigma - \rho^{*})] \geq 0$ (Eqn. 4.21 of \cite{boyd2004}). It follows $\underset{\sigma \in \mathbf{S}_{\mu'\epsilon' t'}}{\min} \Tr[(\sigma - \rho^{*})\nabla f(\rho^{*})] \geq 0$ 
\end{proof}

Eqn. (\ref{eq:lessthanzero}) and Lemma \ref{lemma:greaterThanZero} imply that, given $\rho^{*}$ that optimizes $f$ over $\mathbf{S}_{\mu' \epsilon' t'}$, 
$$ \underset{\sigma \in \mathbf{S}_{\mu' \epsilon' t'}}{\min} \Tr((\sigma-\rho^{*}) \nabla f(\rho^{*})) = 0$$ We can therefore conclude the following:
\begin{align*}
    f(\rho^{*}) &= f(\rho^{*}) + \underset{\sigma \in \mathbf{S}_{\mu' \epsilon' t'}}{\min} \Tr[(\sigma-\rho^{*}) \nabla f(\rho^{*})] \\
    &= f(\rho^{*}) - \Tr(\rho^{*}\nabla f(\rho^{*})) + \underset{(a, \overline{a}, \vec{y_1},\vec{y_2},\vec{z},\vec{\overline{z}},b) \in \mathbf{S}_{\mu' \epsilon' t'}^{*}(\rho^{*})}{\max} [(\epsilon' + \vec{\gamma}) \cdot \vec{y_1} + (\epsilon' - \vec{\gamma}) \cdot \vec{y_2}+ \overline{f} \cdot \vec{\overline{z}} -\mu'  a - t' \overline{a} - b] \\
    &= \beta(\rho^{*})
\end{align*}
this completes the proof of Eqn. \ref{eq:tightness} and Theorem \ref{thm:finalTightness}.

\subsection{Multiple Coarse-Grainings}\label{sec:multiple_POVMs}
We now can show that it is easy to extend to the case where one considers multiple coarse-grainings. First, we define $\Sigma_{f}$ as the alphabet indexing the fine-grained statistics of the experiment. Let $k$ index the set of conditional probability distributions pertaining to coarse-grained data, $\{p_{\Sigma_{k}|\Sigma_{f}}\}_{k}$. Each conditional probability distribution induces a channel $\mathcal{N}_{k}$ which applies the coarse-graining to the statistics. Define the POVM which pertains to the $k^{th}$ conditional probability distribution as $\{\widetilde{\Gamma}^{k}_{j}\}_{j \in \Sigma_{k}}$ which induces a measurement channel $\Phi_{\mathcal{P}_{k}}$. In this case $j$ is implicitly dependent on $k$ as different coarse-grainings will construct probability distributions of different sizes. Then, the primal problem may be written as:
\begin{equation}\label{eq:finiteSDPPrimalMulti}
        \begin{aligned}
                & {\text{minimize}} & & \langle \nabla f_{\epsilon}(\rho), \sigma \rangle & \\
                & \text{subject to} & & \Tr(\Gamma_{i} \sigma) = \gamma_{i} & \forall i \in \Lambda \\
                &                   & & \|  \Phi_{\mathcal{P}_k}(\sigma) - \mathcal{N}_{k}(F_{k}) \|_{1} \leq \mu_{k}  & \forall k  \\
                &                   & & \| \overline{\mathcal{N}}(F_{k}) - \overline{\mathcal{N}}(\overline{F})\|_{1} \leq t & \forall k\\
                &                   & & F_{k} \succeq 0 & \forall k \\
                &                   & & \sigma \succeq 0
        \end{aligned}
\end{equation}
where $\Phi_{\mathcal{P}_k}(X) \equiv \sum_{j \in \Sigma_{k}} \Tr(X \widetilde{\Gamma}_{j}^{k})\ket{j}\bra{j}$, $\mathcal{N}_{k}(X) = \sum_{x \in \Sigma_{f}, y \in \Sigma_{k}} p_{\Sigma_{k}|\Sigma_{f}}(y|x)\ket{y}\bra{x}X\ket{x}\bra{y}$. We stress that $F_{k}$ is indexed by $k$ given the set considered in Theorem \ref{thm:multCoarseGrain}. \\

To convert this linearized primal problem into a semidefinite program, we effectively are just optimizing $k$ copies of Eqn. \ref{eq:rigorousFiniteSDPPrimal} at the same time. This means we can write the equivalent form of Eqn. (\ref{eq:rigorousFiniteSDPPrimal}):
\begin{equation}\label{eq:rigorousFiniteSDPPrimalMulti}
        \begin{aligned}
                & {\text{minimize}} & & \langle \nabla f_\epsilon(\rho), \sigma \rangle & \\
                & \text{subject to}         & & \Tr(G_{k}) + \Tr(H_{k}) \leq \mu'_{k} \hspace{0.25cm} \forall k \\
                &                   & & G_{k}  \succeq \Phi_{\mathcal{P}_k}(\sigma)- F_k \hspace{0.25cm} \forall k \\
                &                   & & H_{k} \succeq  F_k - \Phi_{\mathcal{P}_k}(\sigma) \hspace{0.25cm} \forall k \\
                &     & & \Tr(\overline{G}_{k}) + \Tr(\overline{H}_{k}) \leq t_{k}' \hspace{0.25cm} \forall k \\
                &                   & & \overline{G}_{k}  \succeq \overline{\mathcal{N}}(F_{k})- \overline{F}_{\mathcal{\overline{N}}} \hspace{0.25cm} \forall k \\
                &                   & & \overline{H}_{k} \succeq  \overline{F}_{\mathcal{\overline{N}}} - \overline{\mathcal{N}}(F_{k}) \hspace{0.25cm} \forall k \\
                &                   & & |\Tr(\Gamma_{i} \sigma) - \gamma_i | \leq \epsilon' \\
                &                   & & F_{k}, G_{k}, H_{k}, \overline{G}_{k}, \overline{H}_{k} \succeq 0 \hspace{0.25cm} \forall k \\
                &                   & & \sigma \succeq 0
        \end{aligned}
\end{equation}
where we have let $t'_{k}$ be indexed by $k$ in case different coarse-grainings violate the $\mathcal{Q}$ set by different amounts.

To reformat Eqn. (\ref{eq:rigorousFiniteSDPPrimalMulti}) into the definition in Eqn. (\ref{eq:primaldef}) we can extend the definitions in Eqn. (\ref{eq:MapDefn}) in a block diagonal fashion using the matrix direct sum, $\oplus$, over $k$. 
\begin{align*}
    A &= \text{diag}(\nabla f_{\epsilon}(\rho), \overline{0}) \\
    B &= \text{diag}(\oplus_{k} \mu_{k}, \oplus_{k} 0,\oplus_{k} 0, \oplus_{k} t, \oplus_{k} \overline{F}_{\mathcal{\overline{N}}}, \oplus_{k} - \overline{F}_{\mathcal{\overline{N}}},\oplus_{k} 1, \sum_{i}(\epsilon + \gamma_i)\ket{i}\bra{i}, \sum_i (\epsilon-\gamma_i)\ket{i}\bra{i}) \\
    \Psi(X) &=  \text{diag}(\oplus_k[\Tr(G_{k}) + \Tr(H_k) + z_k], \oplus_k[-G_{k} - \mathcal{N}_{k}(F_{k}) + \Phi_{\mathcal{P}_{k}}(\sigma) + I_k], \oplus_k [-H_{k} + \mathcal{N}_{k}(F_{k}) - \Phi_{\mathcal{P}_{k}}(\sigma) + J_k], \\ & \hspace{2cm} \oplus_k[\Tr(\overline{G}_{k}) + \Tr(\overline{H}_k) + \overline{z}_k], \oplus_k[-\overline{G}_{k} + \overline{\mathcal{N}}(F_{k}) + \overline{I}_k], \oplus_k [-\overline{H}_{k} - \overline{\mathcal{N}}(F_{k}) + \overline{J}_k], \oplus_{k} \Tr(F_{k}), \\
    & \hspace{3cm} \Phi_{0}(\sigma) + M_{1}, -\Phi_{0}(\sigma) + M_{2}) \\
    X &= \widetilde{\text{diag}}(\sigma, \oplus_k F_{k}, \oplus_k G_{k}, \oplus_k H_k, \oplus_k z_k, \oplus_k I_k, \oplus_k J_k, \oplus_k \overline{G}_k, \oplus_k \overline{H}_k, \oplus_k \overline{z}_{k}, \oplus_k \overline{I}_k, \oplus_k \overline{J}_k, M_{1}, M_{2})
\end{align*}
It is straightforward to see the adjoint map of $\Psi$ in this case is
\begin{align*}
    \Psi^{\dagger}(Y) &= \text{diag}(\Phi_{0}^{\dagger}(W_{1} - W_{2}) + \sum_{k} \Phi_{\mathcal{P}_{k}}^{\dagger}(K_k - L_k), \oplus_k [\mathcal{N}_{k}^{\dagger}(L_k - K_k) + \overline{\mathcal{N}}^{\dagger}(\overline{K}_k - \overline{L}_k) + b_k \bbone_{\mathcal{W}}], \oplus_k [a_k \bbone_{\mathcal{W}} - K_k], \\
    & \hspace{0.2cm} \oplus_k [a_k \bbone_{\mathcal{W}} - L_k], \oplus_k a_k, \oplus_k K_k, \oplus_k L_k, \oplus_k [\overline{a}_{k} \bbone_{\mathcal{W}} - \overline{K}_k], \oplus_k [\overline{a}_k \bbone_{\mathcal{W}} - \overline{L}_k], \oplus_k \overline{a}_k, \oplus_k \overline{K}_k, \oplus_k \overline{L}_k, W_1 , W_2)
\end{align*}
where
\begin{align*}
    Y = \widetilde{\text{diag}}(\oplus_k a_k, \oplus_k K_k, \oplus_k L_k, \oplus_k \overline{a}_k, \oplus_k \overline{K}_k, \oplus_k \overline{L}_k, \oplus_k b_k, W_1, W_2)
\end{align*}

Finally, again because all of the $k$s are independent, this dual problem is ultimately simplified to:
\begin{equation}
        \begin{aligned}
                & {\text{maximize}} & & \sum_k \overline{f} \cdot \vec{\overline{z}}_k + (\epsilon' + \vec{\gamma}) \cdot \vec{y}_{1} + (\epsilon' - \vec{\gamma}) \cdot \vec{y}_{2} - \vec{\mu} \cdot \vec{a} - \vec{t} \cdot \vec{\overline{a}} - \|\vec{b}\|_{1} & \\
                & \text{subject to} & & \sum_i [y_{1}(i) - y_{2}(i)]\Gamma_{i} + \sum_k ( \sum_j z_{k}(j) \widetilde{\Gamma}^{k}_j) \preceq \nabla f_{\epsilon}(\rho) & \\
                &                   & & \overrightarrow{\overline{\mathcal{N}}^{\dagger}}(\vec{\overline{z}}_k) - \overrightarrow{\mathcal{N}^{\dagger}_{k}}(\vec{z}) \leq b_{k} \vec{1}_{\mathcal{W}} & \forall k \\
                &                   & & -a_k \vec{1}_{\mathcal{W}} \leq \vec{z}_{k} \leq  a_k \vec{1}_{\mathcal{W}} & \forall k\\
                &                   & & -\overline{a}_k \vec{1}_{\mathcal{W}} \leq \vec{\overline{z}}_{k} \leq  \overline{a}_k \vec{1}_{\mathcal{W}} & \forall k\\
                &                   & &  \vec{a},\vec{\overline{a}} \geq 0 \hspace{1cm} \vec{y}_{1},\vec{y}_{2} \leq 0
        \end{aligned}
\end{equation}
where $\vec{\mu'}$ is just the vector whose $k$-th entry is given by $\mu'_k$ and $z_{k},\bar{z}_{k}$ are not the variable in the definition of $X$ but is a simplification of the dual variable as defined in the same fashion as in Eqn. (\ref{eq:vecSDPFinal}). From these forms, it is clear that strong duality and tightness proofs follow from the single POVM case by indexing over the variable $k$ and scaling things properly.

\end{section}
 
\section{Derivations of Terms}\label{appendix:Terms}
In this section we derive the terms in the keyrate which differ from previous works. \\

Recall that an input $\xi$ is $\varepsilon_{\PE}$-securely filtered if the probability that Alice and Bob do not abort the parameter estimation subprotocol on input $\xi$ is less than $\varepsilon_{\PE}$. Given a bipartite measurement $\{\widetilde{\Gamma}_{j}\}$, by Born's rule, the measurement and a bipartite state $\sigma$ induce a probability distribution over measurement outcomes, $p$. Therefore, if one measures $\sigma$ $n$ times using $\{\widetilde{\Gamma}_{j}\}$ each time, it is sufficient to determine a distance between $p$ and the observed frequency distribution over measurement outcomes, $f$, such that the probability of obtaining a frequency distribution $||f-p||_{1} > \mu$ is less than $\varepsilon_{\PE}$. The following theorem captures this notion.
 
\begin{theorem}\label{thm:muProof} To construct a set of states, $\mathbf{S}_{\mu}$ (Eqn. \ref{eq:paramEstSet1}), such that the complement of the set, $\overline{\mathbf{S}_{\mu}}$, satisfies the property that $\forall \sigma \in \overline{\mathbf{S}_{\mu}}$, $\sigma^{\otimes m}$ is $\varepsilon_{\PE}$-securely filtered, it is sufficient that $\mu = \sqrt{2}\sqrt{\frac{\ln(1/\varepsilon_\PE) + |\Sigma|\ln(m + 1)}{m}}$. \end{theorem}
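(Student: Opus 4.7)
The plan is to reduce the quantum question of $\varepsilon_\PE$-secure filtering to a classical large-deviations question about empirical distributions of i.i.d.\ samples. The crucial observation is that measuring $\sigma^{\otimes m}$ with the product POVM $\{\widetilde{\Gamma}_j\}^{\otimes m}$ produces $m$ i.i.d.\ draws from the induced distribution $p := \Phi_{\mathcal{P}}(\sigma) \in \mathcal{P}(\Sigma)$. Let $\tilde{F}$ denote the resulting empirical frequency distribution; this is precisely the object on which Alice and Bob base their acceptance decision.

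First I would establish a deterministic implication: if $\sigma \notin \mathbf{S}_\mu$, then the acceptance event is contained in $\{\|\Phi_{\mathcal{P}}(\sigma) - \mathcal{N}(\tilde{F})\|_1 > \mu\}$. Indeed, acceptance means the observed $\tilde{F}$ itself satisfies the threshold constraint $\|\overline{\mathcal{N}}(\tilde{F}) - \overline{\mathcal{N}}(\overline{F})\|_1 \leq t$, so $\tilde{F}$ is an admissible choice of the existentially quantified frequency distribution in the definition of $\mathbf{S}_\mu$. The defining property of $\sigma \notin \mathbf{S}_\mu$ then forces $\|\Phi_{\mathcal{P}}(\sigma) - \mathcal{N}(\tilde{F})\|_1 > \mu$. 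Data processing for the trace norm under $\mathcal{N}$ further implies $\|p' - \tilde{F}'\|_1 > \mu$, where $p'$ and $\tilde{F}'$ are the coarse-grained versions; since the POVM being sampled is exactly the effective coarse-grained POVM, Alice and Bob's empirical data are i.i.d.\ samples from $p'$ over an alphabet of effective size at most $|\Sigma|$. It therefore suffices to upper-bound the classical tail probability $\Pr[\|\tilde{F}' - p'\|_1 > \mu]$.

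Second, I would apply Pinsker's inequality (in nats), $\|p' - \tilde{F}'\|_1 \leq \sqrt{2\,D(\tilde{F}'\,\|\,p')}$, to convert a large variational-distance deviation into a large Kullback--Leibler deviation: $\|\tilde{F}'-p'\|_1 > \mu$ implies $D(\tilde{F}'\|p') > \mu^2/2$. Third, I would invoke the method of types: the number of distinct empirical distributions (types) of $m$ samples over an alphabet of size $|\Sigma|$ is at most $(m+1)^{|\Sigma|}$, and under the i.i.d.\ law each individual type $Q$ has probability at most $e^{-m D(Q\|p')}$. A union bound over all types with $D(Q\|p') > \mu^2/2$ yields
\begin{equation*}
\Pr[\|\tilde{F}'-p'\|_1 > \mu] \;\leq\; (m+1)^{|\Sigma|}\, e^{-m\mu^2/2}.
\end{equation*}
Setting the right-hand side equal to $\varepsilon_\PE$ and solving for $\mu$ gives exactly $\mu = \sqrt{2}\sqrt{[\ln(1/\varepsilon_\PE) + |\Sigma|\ln(m+1)]/m}$.

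The main obstacle I expect is the first step, namely justifying cleanly that acceptance under $\sigma \notin \mathbf{S}_\mu$ implies a large empirical deviation from $p$, once the coarse-graining channel $\mathcal{N}$ and the auxiliary threshold constraint involving $\overline{\mathcal{N}}$ and $\overline{F}$ are both present. The key move is to take the witnessing $F$ in the existential quantifier of $\mathbf{S}_\mu$ to be the empirical distribution itself, so that whenever $\tilde{F}$ is accepted it is an admissible witness and the violation condition on $\mathbf{S}_\mu$ translates directly into a variational-distance deviation bound. Once this reduction is in place, Pinsker and the method-of-types bound are standard and yield the stated $\mu$.
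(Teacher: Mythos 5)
Your proposal is correct and follows essentially the same route as the paper: Pinsker's inequality to convert the variational-distance deviation into a KL-divergence deviation, followed by the method-of-types tail bound $(m+1)^{|\Sigma|}e^{-mD}$ (which is exactly the Cover--Thomas result the paper cites rather than re-derives), then solving for $\mu$. Your first step, reducing the acceptance event for $\sigma\notin\mathbf{S}_{\mu}$ to a classical empirical-deviation event by taking the observed frequency as the existential witness, is only sketched in the paper's preamble to the theorem, and your appeal to ``data processing'' there is unnecessary (the constraint in Eqn.~\ref{eq:paramEstSet1} already compares the coarse-grained quantities directly), but neither point affects correctness.
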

\begin{proof} 
By Theorem 11.2.1 of \cite{cover2006}, given an empirical probability distribution $f$ constructed from sampling i.i.d. random variables from a probability distribution $p$ which has $|\Sigma|$ outcomes,  
$$ \text{Pr}[D(f||p) > \epsilon] \leq 2^{-m(\epsilon - |\Sigma|\frac{\log_{2}(m+1)}{m})} $$
Furthermore, Lemma 11.6.1 of \cite{cover2006} states:
$$ \sqrt{2\ln 2 D(f||p)} \geq \|f - p \|_{1}$$
Therefore,
\begin{align*}
   & \text{Pr}\left [ \|f-p\|_{1} > \sqrt{2 \ln 2 \epsilon} \right] \\
   \leq & \text{Pr} \left[ \sqrt{2 \ln 2 D(f||p)} > \sqrt{2 \ln 2 \epsilon} \right] \\
   \leq & 2^{-m[\epsilon - |\Sigma|\frac{\log_{2}(m+1)}{m}]} \\
   \equiv & \varepsilon_{\PE}
\end{align*}
Then, except with probablity $\varepsilon_{\PE}$, $\|f-p\|_{1} \leq \sqrt{2 \ln 2 \epsilon} \equiv \mu$. We now just solve for $\mu$ using arithmetic:
\begin{align*}
     \varepsilon_{\PE} &= 2^{-m[\frac{\mu^{2}}{2\ln 2} - |\Sigma| \log_{2}(m+1)/m]}\\
    \Rightarrow \mu &= \sqrt{2}\sqrt{\frac{\ln(1/\varepsilon_{\PE}) + |\Sigma|\ln(m+1)}{m}}
\end{align*}
\end{proof}

\noindent \textbf{Derivation of $\delta(\bar{\varepsilon})$} \\

Our version of $\delta(\bar{\varepsilon})$ arises from the correction of a typo in Theorem 3.3.6 of \cite{renner05} and then stopping midway through the derivation of Corollary 3.3.7 of \cite{renner05} so as to have the prefactor $2\log_{2}(d+3)$ instead of $2d + 3$. As the typo in Theorem 3.3.6 was already noted in \cite{scarani2008}, we simply state the corrected theorem: \\

\noindent \textbf{Theorem 3.3.6 of \cite{renner05}:} Let $\rho \in D(\mathcal{H}_A \otimes \mathcal{H}_B)$, $\sigma_{B} \in D(\mathcal{H}_B)$, and $n \in \mathbb{N}$. Then for any $\varepsilon \geq 0$,
$$ \frac{1}{n} H^{\varepsilon}_{\min}(\rho^{\otimes n}|\sigma^{\otimes n}) \geq H(\rho) - H(\rho_{B}) - D(\rho_{B} || \sigma_{B}) - \delta$$
where $\delta = 2 \log_{2}(\text{rank}(\rho_{A}) + \Tr(\rho^{2}(\bbone_{A} \otimes \sigma_{B}^{-1}))+2)\sqrt{\frac{\log_{2}(2/\varepsilon)}{n}}$. \\

We now give our version of Corollary 3.3.7, which is the original proof but without adding looseness so as to write it in terms of max-entropy: \\

\begin{theorem} (Variation of Corollary 3.3.7 of \cite{renner05}) Let $\rho_{XB} \in D(\mathcal{H}_X \otimes \mathcal{H}_B)$ be a classical-quantum state. Then for any $\varepsilon \geq 0$, $$ \frac{1}{n} H^{\varepsilon}_{\min}(\rho_{XB}^{\otimes n} | \rho_{B}^{\otimes n}) \geq H(XB) - H(B) - \delta$$
where $\delta = 2\log_{2}(\text{rank}(\rho_{X}) + 3)\sqrt{\frac{\log_{2}(2/\varepsilon)}{n}}$. \end{theorem}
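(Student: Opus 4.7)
The plan is to invoke the corrected Theorem 3.3.6 directly with the substitution $\rho \to \rho_{XB}$ and $\sigma_B \to \rho_B$, and then exploit the classical-quantum structure of $\rho_{XB}$ to replace the unwieldy prefactor inside the logarithm by $\mathrm{rank}(\rho_X) + 3$. Upon this substitution, two simplifications are immediate: the relative entropy term collapses as $D(\rho_B \| \rho_B) = 0$, and the entropy difference $H(\rho_{XB}) - H(\rho_B)$ is exactly $H(XB) - H(B)$. All the work is therefore concentrated in bounding
\begin{equation*}
\Tr\!\bigl(\rho_{XB}^{2}(\bbone_{X} \otimes \rho_{B}^{-1})\bigr) \leq 1,
\end{equation*}
so that the argument of the logarithm is at most $\mathrm{rank}(\rho_X) + 1 + 2 = \mathrm{rank}(\rho_X) + 3$.

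To establish this trace bound, I would write the classical-quantum state in its canonical form $\rho_{XB} = \sum_x p_x \dyad{x}{x} \otimes \rho_B^x$, so that $\rho_{XB}^{2} = \sum_x p_x^2 \dyad{x}{x} \otimes (\rho_B^x)^2$ and therefore
\begin{equation*}
\Tr\!\bigl(\rho_{XB}^{2}(\bbone_{X} \otimes \rho_{B}^{-1})\bigr) = \sum_{x} \Tr\!\bigl(A_x^2 \rho_B^{-1}\bigr), \qquad A_x \equiv p_x \rho_B^x.
\end{equation*}
Since $\rho_B = \sum_{x'} p_{x'}\rho_B^{x'} \geq p_x \rho_B^x = A_x$, the operator $M_x \equiv \rho_B^{-1/2} A_x \rho_B^{-1/2}$ is positive and satisfies $M_x \leq \bbone$. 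By cyclicity of the trace one can rewrite $\Tr(A_x^2 \rho_B^{-1}) = \Tr(M_x^2 \rho_B)$, and since the eigenvalues of $M_x$ lie in $[0,1]$ we have $M_x^2 \leq M_x$, giving $\Tr(M_x^2 \rho_B) \leq \Tr(M_x \rho_B) = \Tr(A_x) = p_x$. Summing over $x$ yields $\sum_x p_x = 1$, as required.

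Combining these steps, Theorem 3.3.6 of \cite{renner05} produces
\begin{equation*}
\tfrac{1}{n} H^{\varepsilon}_{\min}\!\bigl(\rho_{XB}^{\otimes n}\bigm| \rho_B^{\otimes n}\bigr) \geq H(XB) - H(B) - 2\log_2\!\bigl(\mathrm{rank}(\rho_X) + 3\bigr)\sqrt{\tfrac{\log_2(2/\varepsilon)}{n}},
\end{equation*}
which is exactly the claim, with $\delta$ as stated. The only real subtlety is the manipulation of the quadratic trace term without assuming that $\rho_B^x$ and $\rho_B$ commute; I expect this to be the main technical obstacle, though the $M_x \leq \bbone \Rightarrow M_x^2 \leq M_x$ argument combined with cyclicity handles it cleanly. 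The rest is bookkeeping, and in particular we avoid the looser step in Renner's original Corollary~3.3.7 that converts the logarithm into a linear $2d + 3$ prefactor, thereby preserving the tighter $2\log_2(\mathrm{rank}(\rho_X) + 3)$ dependence used throughout the finite-key bound of this paper.
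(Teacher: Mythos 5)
Your proposal is correct and follows essentially the same route as the paper: apply the corrected Theorem 3.3.6 with $\sigma_B = \rho_B$, note that the relative-entropy term vanishes, and reduce everything to showing $\Tr\bigl(\rho_{XB}^{2}(\bbone_{X} \otimes \rho_{B}^{-1})\bigr) \leq 1$. The paper establishes that trace bound globally from the operator inequality $\rho_{XB} \leq \bbone_{X} \otimes \rho_{B}$ via Renner's Lemma B.5.4 (giving $\lambda_{\max}\bigl(\sqrt{\rho_{XB}}(\bbone_{X} \otimes \rho_{B}^{-1})\sqrt{\rho_{XB}}\bigr) \leq 1$), whereas you prove it blockwise with the $M_x^2 \leq M_x$ argument --- both are valid, and the only point you leave implicit (which the paper states explicitly) is that $\rho_B$ may be assumed invertible, the general case following by continuity.
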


\begin{proof}
Without loss of generality, assume $\rho_{B}$ is invertible as the general statement follows by continuity. $$\bbone_{X} \otimes \rho_{B} - \rho_{XB} = \sum_{x \in X} (\bbone_{X} - \ket{x}\bra{x}) \otimes \rho_{B}^{x} \geq 0$$
Thus, by an operator inequality (Lemma B.5.4 of \cite{renner05}), we know
$$ \lambda_{\max}(\sqrt{\rho_{XB}}(\bbone_{X} \otimes \rho_{B}^{-1}) \sqrt{\rho_{XB}}) \leq 1 $$
As $\Tr(\rho_{XB}) = 1$,
$$ \Tr(\rho_{XB}^{2}(\bbone_{X} \otimes \rho_{B}^{-1})) = \Tr(\rho_{XB} (\sqrt{\rho_{XB}}(\bbone_{X} \otimes \rho_{B}^{-1}) \sqrt{\rho_{XB}})) \leq 1 $$
It therefore follows:
\begin{align*}
    \log_{2}(\text{rank}(\rho_{X}) + \Tr(\rho_{XB}^{2}(\bbone_{X} \otimes \rho_{B}^{-1})) + 2) \leq \log_{2}(\text{rank}(\rho_{X}) + 3)
\end{align*}
Plugging this value into Theorem 3.3.6 completes the proof.
\end{proof}
\begin{section}{Coherent Attack Analysis}\label{appendix:CoherentAttacks}
    
As noted in the main text, if one were to use the Finite Quantum de Finetti theorem to bound the coherent attack, there are a few minor changes from the presentation in the main text which is for collective attack. Namely, there is the introduction of another security term $\varepsilon_{\QdF}$, a different way to calculate the correction term $\delta(\bar{\varepsilon})$ as well as $\mu$, and two new parameters $r$ and $k$ which need to be chosen appropriately. To show that it can be handled, we briefly discuss where each change arises. \\

The first aspect is that the Quantum de Finetti theorem itself is a probabilistic statement about the distance between a subsystem of a large state and a state which is a convex combination of i.i.d. states. This probability, which we refer to as $\varepsilon_{\QdF}$, must then be included. In this case, we can therefore rewrite Theorem 6.5.1 of \cite{renner05} so $\varepsilon$ terms are explicit:\\

\noindent \textbf{Theorem 6.5.1 \cite{renner05}} Given a general QKD protocol as defined in the main text where a total of $N$ signals are transmitted, $m$ of the signals are used for parameter estimation, and $n$ of the signals are used for key generation, let $k \in \mathbb{N}$ and $bn + m + k = N$ where $b$ accounts for block-wise processing. Let $ \bar{\varepsilon}, \varepsilon_{\EC}, \varepsilon_{\PA}, \varepsilon_{\PE}, \varepsilon_{\QdF} > 0$. Then the QKD protocol is $(\varepsilon_{\QdF} + \varepsilon_{\PE} + \bar{\varepsilon} + \varepsilon_{\EC} + \varepsilon_{\PA})$-secure if the error correction is $\varepsilon_{\EC}$-secure and if 
\begin{align*}
    \ell \leq n[H_{\mu}(X|E) - \delta(\bar{\varepsilon})] - 2(m+k)\log_{2}(\text{dim}(\mathcal{H}_{A} \otimes \mathcal{H}_{B})) - \leak - 2 \log_{2}(\frac{2}{\varepsilon_{\PA}})
\end{align*}
where 
\begin{align} \mu &\equiv 2 \sqrt{h\left(\frac{r}{m}\right) + \frac{\log_{2}(1/\varepsilon_{\PE}) + |\Sigma| \log_{2}(\frac{m}{2}+1)}{m}} \label{eq:muNew} \\
\delta(\bar{\varepsilon}) &\equiv (\frac{5}{2}\log_{2}(d) + 4) \sqrt{h(r/n) + \frac{2}{n} \log_{2}(4/\bar{\varepsilon})} \label{eq:deltaNew} \\
r &\equiv \left(\frac{bn+m}{k}+1 \right) \left(2 \ln (\frac{2}{\varepsilon_{\QdF}}) + (\text{dim}(\mathcal{H}_{A} \otimes \mathcal{H}_{B})^{2} \ln(k) \right) - 1 \leq N \label{eq:r}
\end{align}
where $d$ is the size of the alphabet for Alice and Bob's output key.
\begin{proof}
See \cite{renner05}.
\end{proof}
As can be seen from the statement of the theorem, the key rate will be lower from that of the collective attack as the correction term $\delta(\bar{\varepsilon})$ and variation bound $\mu$ will be larger for any fixed $m$ for the finite key analysis. This is largely due to the binary entropy terms which depend on $r$. To make $r$ small, one must either let $\varepsilon_{\QdF}$ be large, or sacrifice many of the $N$ signals to make $k$ large. Physically, this `sacrifice' is to throw out a large portion of the signal states to make the rest of the system close enough to a mixture of i.i.d. signals.

Given this theorem, all one needs to do to use our numerical solver with the Finite Quantum de Finetti theorem is replace the variation bound in Eqn. \ref{eq:mu} with Eqn. \ref{eq:muNew}, the correction term from Eqn. \ref{eq:delta} with Eqn. \ref{eq:deltaNew}, add the $-2(m+k)\log_{2}(\text{dim}(\mathcal{H}_{A} \otimes \mathcal{H}_{B}))$ term to calculating the key length, and optimize over $k$ such that $r \leq N$. 

Finally, we note in the case one is interested in proving security for a prepare-and-measure QKD protocol and therefore needs to apply source-replacement scheme, one must introduce an extra $\varepsilon$-term as explained in Remark 4.3.3. of \cite{renner05}.
\end{section} 

\begin{section}{Post-processing Maps for Examples}\label{app:postProc}
In this section we provide the post-processing maps, $\mathcal{G}$ for each example for completeness. As explained in \cite{winick2018}, the post-processing map $\mathcal{G}$ can be decomposed into three operations on a state $\rho$ (see Appendix A of \cite{lin2019} for an in-depth derivation):
\begin{itemize}
    \item The isometric quantum channel $\mathcal{A}$ which represents the measurements of Alice and Bob as well as their partitioning of resulting data into public and private information.
    \item A projection $\Pi$ on their public data which represents the general sifting step.
    \item A partial isometry $V$ which acts on the subspace spanned by $\Pi$ which models the key map applied by Alice or Bob.
\end{itemize}
Then from these $\mathcal{G}$ is defined as $\mathcal{G}(\cdot) = V \Pi \mathcal{A}(\cdot) \Pi V^{\dagger}$. Lastly we note that the channel $\mathcal{A}(\cdot) = \sum_{a,b} (K^{A}_{a} \otimes K^{B}_{b})(\cdot)(K^{A}_{a} \otimes K^{B}_{b})^{\dagger}$ where $K^{A}_{a} = \sum_{{\alpha}_{a}} \sqrt{P^{A}_{a,\alpha_{a}}} \otimes \ket{a}_{\widetilde{A}} \otimes \ket{\alpha_{a}}_{\overline{A}}$, $K^{B}_{b} = \sum_{{\beta}_{b}} \sqrt{P^{B}_{b,\beta_{b}}} \otimes \ket{b}_{\widetilde{B}} \otimes \ket{\beta_{b}}_{\overline{B}}$, where the spaces with a tilde denote public information and spaces with a bar denote private information as in Fig. \ref{fig1:GeneralProtocol}. $a_{\alpha}$ denotes the outcome $\alpha$ given public announcement $a$, and $P^{A}_{a, \alpha_{a}}$ denotes the (fine-grained) measurement Alice would have done to have public information $a$ and private information $\alpha$. The notation is the same for Bob. We refer to Appendix A of \cite{lin2019} for a further discussion of the post-processing framework. 

\subsection{Single-Photon BB84}
The examples in Sections \ref{subsec:PhaseBB84} and \ref{subsec:rotatedBB84} use the same map $\mathcal{G}$. In singe-photon BB84, Alice and Bob perform von Neumann measurements in the $Z$ and $X$ bases with probabilities $p_{z}$ and $1-p_z$ respectively, the public information Alice and Bob announce are what bases they measure in,  the private information is what outcome they got (represented by a $0$ or $1$) in both bases, and the sifting throws out any measurement where Alice and Bob did not use the same basis. Lastly we note that in Sections \ref{subsec:PhaseBB84} and \ref{subsec:rotatedBB84}, we assumed Alice only performs the key map in the $Z$-basis. Therefore we have the following definitions for constructing the $\mathcal{G}$ map:
\begin{align*}
    K^A_{Z} &= \sqrt{p_z} \ket{0}\bra{0}_{A} \otimes \ket{0}_{\widetilde{A}} \otimes \ket{0}_{\overline{A}} + \sqrt{p_z} \ket{1}\bra{1}_{A} \otimes \ket{0}_{\widetilde{A}} \otimes \ket{1}_{\overline{A}} \\
    K^A_{X} &= \sqrt{1-p_z} \ket{+}\bra{+}_{A} \otimes \ket{1}_{\widetilde{A}} \otimes \ket{0}_{\overline{A}} + \sqrt{1-p_z} \ket{-}\bra{-}_{A} \otimes \ket{1}_{\widetilde{A}} \otimes \ket{1}_{\overline{A}} \\
    K^B_{Z} &= \sqrt{p_z}\ket{0}\bra{0}_{B} \otimes \ket{0}_{\widetilde{B}} \otimes \ket{0}_{\overline{B}} + \sqrt{p_z}\ket{1}\bra{1}_{B} \otimes \ket{0}_{\widetilde{B}} \otimes \ket{1}_{\overline{B}} \\
    K^B_{X} &= \sqrt{1-p_z}\ket{+}\bra{+}_{B} \otimes \ket{1}_{\widetilde{B}} \otimes \ket{0}_{\overline{B}} + \sqrt{1-p_z}\ket{-}\bra{-}_{B} \otimes \ket{0}_{\widetilde{B}} \otimes \ket{1}_{\overline{B}} \\
    \Pi &= \ket{0}\bra{0}_{\widetilde{A}} \otimes \ket{0}\bra{0}_{\widetilde{B}} + \ket{1}\bra{1}_{\widetilde{A}} \otimes \ket{1}\bra{1}_{\widetilde{B}} \\
    V &= \ket{0}_{R} \otimes \ket{0}\bra{0}_{\widetilde{A}} \otimes \ket{0}\bra{0}_{\overline{A}} \otimes \ket{0}\bra{0}_{\widetilde{B}}  + \ket{1}_{R} \otimes \ket{0}\bra{0}_{\widetilde{A}} \otimes \ket{1}\bra{1}_{\overline{A}} \otimes \ket{0}\bra{0}_{\widetilde{B}} 
\end{align*}
We note that while we used the source-replacement scheme, we used the Gram-Schmidt process to return Alice's space to the original size as explained in \cite{ferenczi12a}, which in this case reconstructs Alice's original POVM.

\subsection{MDI BB84}
For MDI BB84, as we consider the case where Alice and Bob only distill key from the Z basis, using the source-replacement scheme on both Alice's and Bob's sources and the simplification rules explained in Appendix A of \cite{lin2019}, there is only one Kraus operator for the entire map $\mathcal{G}$:
$$ K_{Z} = (\ket{0}_{R} \otimes \ket{0}\bra{0}_{A} + \ket{1}_{R} \otimes \ket{1}\bra{1}_{A}) \otimes (\ket{0}\bra{0}_{B} + \ket{1}\bra{1}_{B}) \otimes (\ket{0}\bra{0}_{C} + \ket{1}\bra{1}_{C}) $$

\subsection{Discrete-phase-randomized BB84}
In the discrete-phase-randomized BB84, we begin from the use of the squashing model which results in Alice preparing 4 states for each global phase, and Bob having the 5-outcome POVM described in Section \ref{subsec:DPRBB84}. Then by the source-replacement scheme on Alice, Alice's portion of the signal is a $4c$-dimensional Hilbert space $\mathcal{H}_{A}$ where $c$ is the number of global phases Alice uses. In other words, $\mathcal{H}_{A} \cong \oplus_{c} \mathcal{H}_{4}$ where $\mathcal{H}_{4}$ is a 4-dimensional Hilbert space and $\oplus$ is the direct sum. To make the expression of the Kraus operators concise, define the projector $\Pi_{n} = \dyad{n}{n}$ where $n \in \{0,1,2,3\}$. Then, using that Alice performs the key map along with the simplifications from Appendix A of \cite{lin2019}, we have two Kraus operators which describe the action of $\mathcal{G}$:
\begin{align*}
    K_{Z} &= \ket{0}_{R} \otimes \left (\bigoplus_{c}(\Pi_{0}) \right) \otimes \sqrt{p_{z}}(\ket{0}\bra{0}_{B} + \ket{1}\bra{1}_{B}) \otimes \ket{0}_{\widetilde{A}} + \ket{1}_{R} \otimes \left (\bigoplus_{c}(\Pi_{1}) \right) \otimes \sqrt{p_{z}}(\ket{0}\bra{0}_{B} + \ket{1}\bra{1}_{B}) \otimes \ket{0}_{\widetilde{A}} \\
    K_{X} &= \ket{0}_{R} \otimes \left (\bigoplus_{c}(\Pi_{2}) \right) \otimes \sqrt{1-p_{z}}(\ket{+}\bra{+}_{B} + \ket{-}\bra{-}_{B}) \otimes \ket{1}_{\widetilde{A}} \\
    & \hspace{2cm} + \ket{1}_{R} \otimes \left (\bigoplus_{c}(\Pi_{3}) \right) \otimes \sqrt{1-p_{z}}(\ket{+}\bra{+}_{B} + \ket{-}\bra{-}_{B}) \otimes \ket{1}_{\widetilde{A}} \\
\end{align*}
where $\oplus_{c} \Pi_{n}$ is well defined for all $n$ as $\Pi_{n} \in \mathcal{H}_{4}$ and $\mathcal{H}_{A} \cong \oplus_{c} \mathcal{H}_{4}$. 
\end{section}

\bibliographystyle{apsrev4-2}
\bibliography{main}
    
\end{document}